\newtheorem{observation}{Observation}
\begin{document}
\title{Deciding whether an Attributed Translation can be realized by a Top-Down Transducer} 
\titlerunning{Deciding whether an Attributed Translation can be realized by Transducers}
%
\author{Sebastian Maneth \and
	Martin Vu}
\authorrunning{S. Maneth and
	M. Vu}
%
\institute{Universit\"at Bremen, Germany\\
	\email{\{maneth,martin.vu\}@uni-bremen.de} }

\maketitle

	\begin{abstract}
	We prove that for a given partial deterministic attributed tree transducer
	with monadic output, it is decidable
	whether or not an equivalent top-down tree transducer (with or without look-ahead) exists.
	We present a procedure that constructs such an equivalent top-down tree transducer (with or without look-ahead)
	if it exists.
	We then show that our results can be extended to arbitrary nondeterministic attributed tree transducer
	with look-around that have monadic output.
\end{abstract}

\keywords{	attributed tree transducer;
	top-down tree transducer;
	monadic output}

\section{Introduction}
First invented in the 1970's in
the context of compilers and mathematical linguistics,
tree transducers are fundamental devices with far ranging applications
including 
picture generation~\cite{DBLP:series/txtcs/Drewes06},
 network intrusion detection~\cite{Sugiyama2014}, 
security~\cite{DBLP:journals/iandc/KustersW07},
and XML databases~\cite{DBLP:conf/icde/HakutaMNI14}.

Two prominent types of tree transducer are the
\emph{top-down tree 
transducer}~\cite{DBLP:journals/mst/Rounds70,DBLP:journals/jcss/Thatcher70}
and the \emph{attributed tree transducer} \cite{DBLP:journals/actaC/Fulop81}.
As its name implies, a top-down tree transducer processes its input tree strictly in a top-down fashion
meaning that its states are only moving `downwards' in the input tree.
In contrast, the attributes of an attributed tree transducer can move `downwards' as well as  `upwards'
when processing an input tree. 
In the context of strings, one possible  pair of respective counterparts of
top-down tree transducers and attributed tree transducer are  one-way  transducers and 
two-way  transducers. Mirroring the behavior of their tree transducer counterparts,
when processing an input string,  
states of a one-way transducer are limited to moving strictly from left to right,
while states of a two-way transducer can move from left to right and from right to left.
Unsurprisingly, 
attributed tree transducers and two-way  transducer are therefore strictly more expressive
than top-down tree transducer and one-way string transducer, respectively.
However this expressiveness comes at the cost of complexity; generally speaking
they are much more complex devices than
a top-down tree transducer and one-way  transducers, respectively.
Hence, for either an attributed tree transducer or a two-way  transducer, it is 
a natural question to ask:
Can its translation also be realized by the respective simpler device?
And if so, can we construct the simpler device?

For two-way transducers, this question has been answered in~\cite{DBLP:conf/lics/FiliotGRS13}.
Specifically, in~\cite{DBLP:conf/lics/FiliotGRS13}, a procedure is introduced that given a 
two-way transducer, decides whether or not its translation can also be realized
by a one-way transducer and in the affirmative case constructs such a one-way transducer.
However, it is also shown the given procedure has non-elementary complexity.
In a subsequent paper~\cite{DBLP:journals/lmcs/BaschenisGMP18}, the decision procedure
has been improved upon and simplified leading to a triply exponential time complexity.
Given the result in~\cite{DBLP:conf/lics/FiliotGRS13}, one wonders whether the same result can be 
obtained for  attributed tree transducers and
top-down tree transducers, i.e., given an   attributed tree transducer is it decidable
whether or not an equivalent top-down tree transducer exists? And if so, can we construct it?

In general, decision procedures of this kind offer several advantages; as in our case the smaller class
may be less complex and thus
may be more efficient to evaluate (i.e., it may use less resources). Other possible benefits are that 
the smaller class may enjoy better closure properties than the larger class.

In this paper, we address this problem for a subclass of attributed tree transducers.
In particular, we consider attributed tree transducer with \emph{monadic output}
meaning that all nodes of output trees produced by the transducer have at most one child node
making the output trees essentially ``strings''.
Initially, we show that it is decidable whether or not
for a given deterministic attributed tree transducer $A$ with monadic output an equivalent deterministic top-down transducer $T$ with
look-ahead 
exists by reducing the problem to the question of whether or not a given two-way transducer can be
defined by a one-way transducer,
before extending our results to more complex types of attributed tree transducers.
To show that the decision problem  has a solution for such attributed tree transducer,
 we  first test whether $A$ has the \emph{single-path property}.
 The latter essentially 
 means that $A$
can be equipped with `look-ahead' so that $A$ only processes a single input path of an input tree. 
A look-ahead is a deterministic bottom-up relabeling which preprocesses input trees for $A$.
Intuitively, 
$A$ only processes a single input path of an input tree $t$ if all nodes of $t$ that
attributes of $A$ visit 
occur in a node sequence 
$v_1,\dots,v_n$ where $v_i$ is the parent of $v_{i+1}$.
This property is derived from the fact that any
top-down tree transducer $T$ with look-ahead  that is equivalent to $A$ processes its input tree in exactly the same fashion. 
In particular, being equivalent to $A$ means that $T$ also generates monadic output trees and for any
top-down tree transducer with look-ahead that only generates monadic output trees,
it holds that its states only process nodes occurring on a single input path.
The idea is that if a single input path is sufficient for a top-down tree transducer $T$ (with look-ahead)
to generate its output tree then it should be sufficient for $A$ (equipped with look-ahead) as~well.
Assume that $A$ has the single-path property.
We then show that $A$ can be converted into a two-way transducer $T_W$.
Given $T_W$
we  apply the procedure of~\cite{DBLP:journals/lmcs/BaschenisGMP18} checking whether or not a one-way-transducer equivalent to $T_W$
exists.
It can be shown that the procedure of~\cite{DBLP:journals/lmcs/BaschenisGMP18} yields a one-way transducer equivalent to $T_W$
if and only if a top-down tree transducer $T$ with look-ahead equivalent to $A$ exists.
We show that after computing a one-way transducer $T_O$  equivalent to $T_W$ using the procedure~of~\cite{DBLP:journals/lmcs/BaschenisGMP18},
we can construct a top-down tree transducer  with look-ahead equivalent to $A$ from $T_O$.

Extending the result above, we show that even for
nondeterministic  attributed tree transducers $\breve{A}$ with `look-around' and monadic output,
it is decidable whether or not an equivalent top-down transducer $\breve{T}$ with look-ahead exists.
Look-around is a relabeling device similar to but more expressive than look-ahead
which was introduced by Bloem and Engelfriet~\cite{DBLP:journals/jcss/BloemE00}
due to its better closure properties.
To extend our result to such  transducers, we  show that (a)
for an attributed tree transducers
with look-around and monadic output, it is decidable whether or not it is \emph{functional}, i.e,
whether or not its translation is a function, (b)
functional and  deterministic attributed tree transducer with look-around and monadic output
describe the same class of translations and
(c)
for deterministic attributed tree transducers with look-around and monadic output,
it is decidable whether or not an equivalent top-down transducer with look-ahead exists.
Finally, we show that due to the result of~\cite{DBLP:conf/icalp/ManethS20}, it is decidable in which cases the 
look-ahead can be removed from $\breve{T}$ as well. 

We remark that due to Proposition~9.3 in~\cite{DBLP:journals/lmcs/BaschenisGMP18}, 
deciding for a non-functional attributed tree transducer with monadic output whether or not an equivalent non-functional
top-down tree transducer exists is undecidable.
Furthermore note that
nondeterministic functional top-down tree transducer with look-ahead and deterministic top-down transducer 
with look-ahead define the same class of translations~\cite{DBLP:journals/ipl/Engelfriet78}.
Therefore, confining ourselves to deterministic top-down transducers 
instead of functional ones in this paper is not a restriction.

Note that in the presence of origin, it is well known that even for (nondeterministic)
macro tree transducers (which are strictly more expressive than attributed tree transducers)
it is decidable whether or not an origin-equivalent deterministic top-down tree transducer with
look-ahead exists~\cite{DBLP:journals/iandc/FiliotMRT18}. 
Informally, the presence of origin means that 
the semantic of a transducer allow us to trace
for each node of an output tree the unique node of the input tree that created it.
In the absence of origin, the only  definability results for attributed transducers that
we are aware of, is that it is decidable for such transducers (and even for macro tree transducers)
whether or not they are 
(1)~of linear size increase~\cite{DBLP:journals/siamcomp/EngelfrietM03}
(and if so an equivalent single-use restricted attributed tree transducer can be constructed;
see~\cite{DBLP:journals/iandc/EngelfrietM99})
or
(2)~of either  linear height-increase
or linear size-to-height increase~\cite{gallot2024deciding}.

A preliminary version of 
this paper has been presented at the
27th International Conference on
Implementation and Application of Automata 
(CIAA) 2023 \cite{DBLP:conf/wia/ManethV23}.
\section{Preliminaries}\label{preliminaries}
Denote by $\mathbb{N}$ the set of natural numbers.
For $k\in\mathbb{N}$, we denote by $[k]$ the set $\{1,\dots,k\}$.
A set $\Sigma$ is \emph{ranked} if  
each symbol of the set is associated with a \emph{rank}, that is, a non-negative integer.
We write $\sigma^{k}$ to denote that the symbol $\sigma$ has
rank~$k$.
By $\Sigma_k$ we denote the set of  all symbols of $\Sigma$ which have rank $k$.
We require that for $k'\neq k$, $\Sigma_{k'}$ and $\Sigma_k$ are disjoint.
If $\Sigma$ is finite then we also call $\Sigma$ a \emph{ranked alphabet}. 

The set $T_\Sigma$ of 
\textit{trees over $\Sigma$} is defined as the smallest set of strings such that
if $\sigma\in \Sigma_k$, $k\geq 0$, and $t_1,\dots,t_k \in T_\Sigma$ then
$\sigma(t_1,\dots, t_k)$ is in $T_\Sigma$. 
For $k=0$, we simply write $\sigma$
instead of $\sigma()$.
The nodes of a tree $t\in T_\Sigma$ are referred to by strings over~$\mathbb{N}$.
In particular, for $t=\sigma (t_1,\dots,t_k)$, we define $V(t)$, the set of nodes of $t$, 
as $V(t)=\{\epsilon\}\cup \{iu\mid i\in [k] \text{ and } u\in V(t_i)\}$,
where $\epsilon$ is the \emph{empty string}.
For better readability, we add dots between numbers,
 e.g. for the tree $t=f(a,f(a,b))$ we have $V(t)=\{\epsilon,1,2,2.1,2.2\}$.
For a node  $v\in V(t)$, $t[v]$ denotes the label of $v$, 
$t/v$ is the subtree of $t$ rooted at $v$, and
$t[v \leftarrow t']$ is  obtained from $t$ by replacing $t/v$ by $t'$.
For instance, we have $t[1]=a$, $t/2= f(a,b)$ and $t[1\leftarrow b]=f(b,f(a,b))$ for $t=f(a,f(a,b))$.
The node $v$ is called a (proper) ancestor of a node $v'\in V(t)$ if $v$ is a (proper) prefix of $v'$.
For a tree $s$ denote by $|s|:= |V(s)|$ the \emph{size} of $s$.

For a set $\Lambda$ disjoint with $\Sigma$, we define $T_\Sigma [\Lambda]$ as $T_{\Sigma'}$ where $\Sigma'_0 =\Sigma_0\cup \Lambda$ and $\Sigma_k'=\Sigma_k$ for $k>0$.
We call  a tree $t'\in T_{\Sigma} [\Lambda]$  a \emph{prefix} of a tree  $t\in T_{\Sigma}$
if $t$ can be obtained from $t'$  by replacing nodes labeled by symbols in $\Lambda$ 
by  trees over~$\Sigma$, i.e., if 
for $V=\{v\in V(t') \mid t'[v]\in \Lambda\}$
a set $\{t_v \in T_\Sigma \mid v\in V\}$ exists such that
$t=t'[v\leftarrow t_v \mid v\in V]$.

\subsection{Attributed Tree Transducers}
	A \emph{(partial nondeterministic) attributed tree transducer} (or  $att$ for short) is a tuple $A=(S,I,\Sigma,\Delta,a_0,R)$ where
	\begin{itemize}
		\item $S$ and $I$ are disjoint finite sets of
		\emph{synthesized attributes} and \emph{inherited attributes}, respectively
		\item $\Sigma$ and $\Delta$ are ranked alphabets of \emph{input} and \emph{output symbols}, respectively
		\item  $a_0\in S$ is  the \emph{initial attribute} and
		\item $R=(R_\sigma\mid \sigma \in \Sigma\cup \{\#\})$ is a collection of finite sets of rules.
	\end{itemize}
	We implicitly assume $atts$ to include a unique symbol $\#\notin \Sigma$ of rank $1$, the so-called
	\emph{root marker}, that may only occur at the root of input trees.

	In the following, we define the rules of an $att$.
	Let $\sigma\in \Sigma$ be of rank $k\geq 0$. 
	Furthermore, let $\pi$ be a variable for nodes. Then the set $R_\sigma$ contains
	\begin{itemize}
		\item 	arbitrarily many rules of the form $a(\pi)\rightarrow \xi$ for every $a\in S$ and
		\item  arbitrarily many rules of the form $b(\pi i)\rightarrow \xi'$ for every $b\in I$ and $i\in [k]$,
	\end{itemize}
 where
	$\xi,\xi'\in T_\Delta [\{a'(\pi i) \mid a'\in S, i\in [k]\} \cup \{b'(\pi) \mid b'\in I\} ]$.
	We define the set $R_\#$ analogously with the restriction that $R_\#$ contains \emph{no}
	rules with synthesized attributes  on the left-hand side.
	Replacing  `arbitrarily many rules' by `at most one rule' in the definition of the
	rule sets of $R$, we obtain the notion of \emph{(partial) deterministic $att$ (or $datt$)}.
	For the $att$ $A$ and the attribute $a\in S$, we denote by $\text{RHS}_A (\sigma, a(\pi))$
	the set of all right-hand sides of rules in $R_\sigma$ that are of the form
	$a(\pi)\rightarrow \xi$.
	For $b\in I$, the sets $\text{RHS}_A (\sigma, b(\pi i))$ with $i\in [k]$ and $\text{RHS}_A (\#, b(\pi 1))$ are defined analogously.	
	
If $I=\emptyset$ then we call $A$ a \emph{top-down tree transducer} and $S$ a set of \emph{states} instead of attributes.
Additionally, if $A$ is also deterministic then we call $A$ a \emph{deterministic top-down tree transducer} (or simply a $dt$).
For a top-down tree transducer and a symbol $\sigma$ of rank $k\geq 0$, we commonly write
$q(\sigma (x_1,\dots,x_k)) \rightarrow t'$
instead of
$q(\pi) \rightarrow t \in R_\sigma$,
where $t'$ is obtained from $t$
by replacing occurrences of $\pi i$, $i\in [k]$, by $x_i$, e.g.,
for $t= f(q_1 (\pi 1), q_2 (\pi 2) )$ we have $t'=f(q_1 (x_1), q_2 (x_2))$.

We say that $A$ is an $att$ with \emph{monadic output}, if
all output symbols of $A$ are at most of rank $1$.\\

\textbf{Attributed Tree Translation.}\quad
We now define the semantics of $A$. Denote by $T_{\Sigma^\#}$ the set $\{\# (s) \mid s\in T_\Sigma\}$.
For a tree $s\in T_{\Sigma}\cup T_{\Sigma^\#}$, we define
$
\text{SI}(s)= \{\alpha (v) \mid \alpha\in S\cup I, v\in V(s)\}.
$
Furthermore, we define that for the node variable $\pi$,
$\pi 0 = \pi$  and that for a node $v$,
$v.0=v$.
Let $t,t' \in T_\Delta [\text{SI}(s)]$. 
We write $t\Rightarrow_{A,s} t'$ if 
$t'$ is obtained from $t$
by substituting  a leaf of $t$ labeled by $\gamma (v.i)$, with $i=0$ if $\gamma\in S$ and $i>0$ if $\gamma\in I$,
by 
$\xi [\pi\leftarrow v]$,
where $\xi\in \text{RHS}_A (s[v], \gamma (\pi i))$ and 
$[\pi\leftarrow v]$ denotes the substitution that replaces all occurrences
of $\pi$ by the node $v$. For instance, for $\xi_1=f(b(\pi))$ and $\xi_2=f(a(\pi 2))$ where $f$ is a symbol of
rank~$1$, $a\in S$
and $b\in I$, we have $\xi_1[\pi\leftarrow v]=f(b(v))$ and $\xi_2[\pi\leftarrow v]=f(a(v.2))$.
As usual, denote by 
$\Rightarrow_{A,s}^+$ and
$\Rightarrow_{A,s}^*$ the transitive closure and the reflexive-transitive closure of $\Rightarrow_{A,s}$, respectively.

The 
\emph{translation realized by $A$}, denoted by $\tau_A$, is the set
\[
\{(s,t) \in T_\Sigma \times T_\Delta\mid  a_0(1)\Rightarrow_{A,s^\#}^* t\},
\]
where subsequently $s^\#$ denotes the tree $\# (s)$.
If $\tau_A$ is a  partial function then we say that $A$ is a \emph{functional} $att$.
Furthermore, if $\tau_A$ is a partial function 
then we also write $\tau_A (s)=t$ if $(s,t)\in \tau_A$ and say that on input $s$, $A$ produces the 
tree~$t$. 
Denote by $\text{dom} (A)$
the \emph{domain of $A$}, i.e., the set of all $s\in T_\Sigma$ such that
$(s,t)\in \tau_A$ for some $t\in T_\Delta$.
Similarly, $\text{range} (A)$ denotes the \emph{range of $A$}, i.e.,
the set of all $t\in T_\Delta$ such that
for some $s\in T_\Sigma$, $(s,t)\in \tau_A$. 

\begin{example}\label{att example}
	Consider the $att$ $A_1=(S,I,\Sigma,\Delta,a,R)$ where 
	$\Sigma= \{f^2, e^0\}$ and $\Delta=\{ g^1, e^0\}$.
	Let the set of attributes of $A_1$ be given by $S=\{ a\}$ and $I=\{b\}$.
	We define 
	\[
	R_f=\{a(\pi)  \rightarrow   a (\pi 1),\ b(\pi 1) \rightarrow  a (\pi 2),\
	 b(\pi 2) \rightarrow  b(\pi)\}.
	 \]
	Furthermore, we define 
	\[
	R_{\#}=\{b(\pi 1) \rightarrow e\}
	\text{ and }
	R_e=\{a(\pi) \rightarrow g(b(\pi))\}.
	\]
	The tree transformation realized by $A_1$ 
	contains all pairs $(s,t)$ such that if $s$ has $n$ leaves, then $t$ is the tree over~$\Delta$ that
	contains $n$ occurrences of the symbol $g$. 
	For instance on input $s=f(f(e,e), f(e,e))$, $A_1$ outputs a tree with
	four occurrences of $g$. The corresponding translation is shown in Figure~\ref{fig0}.
	\begin{figure}[h!]
		\vspace{-0.3cm}
		\centering
		\begin{tikzpicture}
	\draw (2,0) node {$a(1)$};
	\draw (3,0) node {$\Rightarrow$};
	\draw (4,0) node {$a(1.1)$};
	\draw (5,0) node {$\Rightarrow$};
	\draw (6,0) node {$a(1.1.1)$};

	\draw (7,0) node {$\Rightarrow$};
	\draw (8,0.25) node {$g$};
	\draw (8,-0.1) -- (8,0.1);
	\draw (8,-0.3) node {$b(1.1.1)$};

	\draw (9,0) node {$\Rightarrow$};
	\draw (10,0.25) node {$g$};
	\draw (10,-0.1) -- (10,0.1);
	\draw (10,-0.3) node {$a(1.1.2)$};
	
	\draw (1,-1.2) node {$\Rightarrow$};
	\draw (2,-0.9) node {$g$};
	\draw (2,-1.05) -- (2,-1.25);
	\draw (2,-1.4) node {$g$};
	\draw (2,-1.55) -- (2,-1.75);
	\draw (2,-1.9) node {$b(1.1.2)$};

	\draw (3,-1.2) node {$\Rightarrow$};
	\draw (4,-0.9) node {$g$};
	\draw (4,-1.05) -- (4,-1.25);
	\draw (4,-1.4) node {$g$};
	\draw (4,-1.55) -- (4,-1.75);
	\draw (4,-1.9) node {$b(1.1)$};
	
	\draw (5,-1.2) node {$\Rightarrow$};
	\draw (6,-0.9) node {$g$};
	\draw (6,-1.05) -- (6,-1.25);
	\draw (6,-1.4) node {$g$};
	\draw (6,-1.55) -- (6,-1.75);
	\draw (6,-1.9) node {$a(1.2)$};

	\draw (7,-1.2) node {$\Rightarrow$};
	\draw (8,-0.9) node {$g$};
	\draw (8,-1.05) -- (8,-1.25);
	\draw (8,-1.4) node {$g$};
	\draw (8,-1.55) -- (8,-1.75);
	\draw (8,-1.9) node {$a(1.2.1)$};

	\draw (9,-1.2) node {$\Rightarrow$};
	\draw (10,-0.9) node {$g$};
	\draw (10,-1.05) -- (10,-1.25);
	\draw (10,-1.4) node {$g$};
    \draw (10,-1.55) -- (10,-1.75);
    \draw (10,-1.9) node {$g$};
    \draw (10,-2.05) -- (10,-2.25);
    \draw (10,-2.4) node {$b(1.2.1)$};
    
    \draw (1,-3.8) node {$\Rightarrow$};
    \draw (2,-3.3) node {$g$};
    \draw (2,-3.45) -- (2,-3.65);
    \draw (2,-3.8) node {$g$};
    \draw (2,-3.95) -- (2,-4.15);
    \draw (2,-4.3) node {$g$};
    \draw (2,-4.45) -- (2,-4.65);
    \draw (2,-4.8) node {$a(1.2.2)$};
    
    \draw (3,-3.8) node {$\Rightarrow$};
    \draw (4,-3) node {$g$};
    \draw (4,-3.15) -- (4,-3.35);
    \draw (4,-3.5) node {$g$};
    \draw (4,-3.65) -- (4,-3.85);
    \draw (4,-4) node {$g$};
    \draw (4,-4.15) -- (4,-4.35);
    \draw (4,-4.5) node {$g$};
    \draw (4,-4.65) -- (4,-4.85);
    \draw (4,-5) node {$b(1.2.2)$};
    
    \draw (5,-3.8) node {$\Rightarrow$};
\draw (6,-3) node {$g$};
\draw (6,-3.15) -- (6,-3.35);
\draw (6,-3.5) node {$g$};
\draw (6,-3.65) -- (6,-3.85);
\draw (6,-4) node {$g$};
\draw (6,-4.15) -- (6,-4.35);
\draw (6,-4.5) node {$g$};
\draw (6,-4.65) -- (6,-4.85);
\draw (6,-5) node {$b(1.2)$};

\draw (7,-3.8) node {$\Rightarrow$};
\draw (8,-3) node {$g$};
\draw (8,-3.15) -- (8,-3.35);
\draw (8,-3.5) node {$g$};
\draw (8,-3.65) -- (8,-3.85);
\draw (8,-4) node {$g$};
\draw (8,-4.15) -- (8,-4.35);
\draw (8,-4.5) node {$g$};
\draw (8,-4.65) -- (8,-4.85);
\draw (8,-5) node {$b(1)$};
    
\draw (9,-3.8) node {$\Rightarrow$};
\draw (10,-3) node {$g$};
\draw (10,-3.15) -- (10,-3.35);
\draw (10,-3.5) node {$g$};
\draw (10,-3.65) -- (10,-3.85);
\draw (10,-4) node {$g$};
\draw (10,-4.15) -- (10,-4.35);
\draw (10,-4.5) node {$g$};
\draw (10,-4.65) -- (10,-4.85);
\draw (10,-5) node {$e$};
		\end{tikzpicture}
			\caption{Translation of the $att$ $A_1$ in Example~\ref{att example} on input $s=f(f(e,e), f(e,e))$.}
	\label{fig0}
	\end{figure}
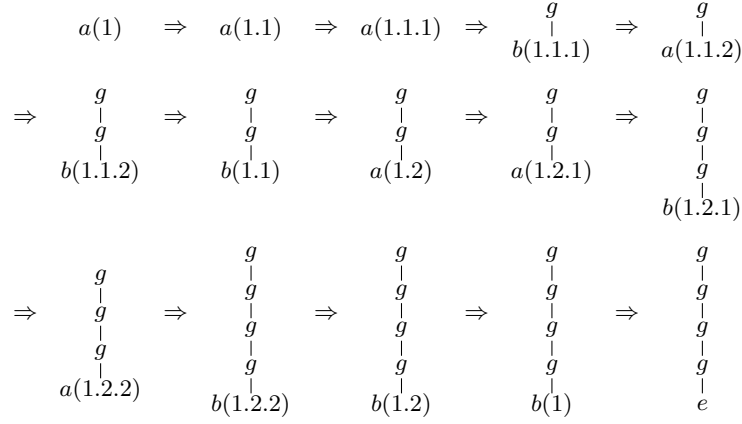
	Note that for better readability we have simply written $\Rightarrow$ instead of $\Rightarrow_{A_1,s^\#}$.
	We remark that the domain of $A_1$ is $T_\Sigma$ and its range is $T_\Delta\setminus\{e\}$. 
\end{example}

We emphasize that we always consider input trees to be trees over $\Sigma$. 
The root marker is only a technical requirement. For instance, without the root marker, the translation of the $att$ $A_1$ in
Example~\ref{att example} cannot be realized by an $att$.\\

\textbf{Circularity and Is-Dependency.}\quad
Note that by definition $atts$ are allowed to be \emph{circular}. 
We say that an $att$ $A$ is \emph{circular} if  $s\in T_{\Sigma}$, $\alpha (v) \in \text{SI} (s^\#)$ and $t\in T_\Delta [\text{SI} (s^\#)]$  exists such that
$\alpha (v) \Rightarrow_{A, s^\#}^+ t$ and $\alpha (v)$ occurs in~$t$. 
It is well known that circularity is a decidable property~\cite{DBLP:journals/mst/Knuth68}.
To test whether or not $A$ is circular, we compute the set of all \emph{is-dependencies} of~$A$,
i.e., the set $\text{ISD}_A=\{\text{ISD}_A (s) \mid s\in T_\Sigma\}$, where for a tree $s$,
we define the is-dependency of $s$ as
\[
\text{ISD}_A (s)= \{ (b,a)\in I\times S\mid  \exists\ t'\in T_\Delta [\text{SI} (s)]: a(\epsilon) \Rightarrow_{A, s}^* t'  \text{ and } b(\epsilon)
\text{ occurs in } t'\}.
\]
Note that $\text{ISD}_A (s)$ can be computed inductively in a bottom-up fashion, i.e.,
if for $s=\sigma (s_1,\dots,s_k)$, the is-dependencies of $s_1,\dots,s_k$ are known, then the is-dependency of
$s$ can be easily computed using the rules in~$R_\sigma$.

By definition $\text{ISD}_A$ is finite.
Furthermore, $\text{ISD}_A$ effectively computable.
For $\sigma\in \Sigma\cup\{\#\}$ of rank $k>1$ and  $\text{is}_1, \dots,\text{is}_k\in \text{ISD}_A$, we define 
the directed graph $G^A_{\sigma,\text{is}_1, \dots,\text{is}_k} =(V,E)$
where $V=\{\alpha (j) \mid \alpha \in S\cup I,\ j\in [k]\}$ and where for $a\in S$, $b\in I$ and $i,j\in [k]$
\begin{itemize}
	\item $(a( i), b(j)) \in E$ if $a(\pi i)$ occurs in some $t\in \text{RHS}_A (\sigma, b(\pi j))$
	\item  $(b(j), a(j)) \in E$ if $(b,a)\in \text{is}_j$.
\end{itemize}
It holds that $A$ is circular if and only if  $\sigma\in \Sigma\cup\{\#\}$ of rank $k>1$ and  $\text{is}_1, \dots,\text{is}_k\in \text{ISD}_A$ exist such that $G^A_{\sigma,\text{is}_1, \dots,\text{is}_k}$
has a cycle. 
Hence, the circularity of $A$ is decidable~\cite{DBLP:journals/mst/Knuth68}.\\

\textbf{Look-Ahead.}\quad
Before we  define \emph{attributed tree transducer with look-ahead}, we define
\emph{bottom-up relabelings}.
Formally, a \emph{bottom-up relabeling} $B$ is a tuple $(P,\Sigma,\Sigma',F,R)$
where $P$ is the set of states, $\Sigma$, $\Sigma'$ are finite ranked alphabets
and $F\subseteq P$ is the set of final states. 
For $\sigma\in \Sigma$ and $p_1,\dots,p_k\in P$,
the set $R$ contains 
at most one rule of the form
$
\sigma (p_1 (x_1),\dots, p_k (x_k))\rightarrow p (\sigma' (x_1,\dots,x_k))
$
where $p\in P$ and $\sigma'\in \Sigma'$.
The rules of $B$ induce a derivation relation $\Rightarrow_B^*$
which is defined inductively as follows: 
\begin{itemize}
	\item Let $\sigma\in \Sigma_0$ and $\sigma \rightarrow p (\sigma')$ be a rule in $R$.
	Then $\sigma \Rightarrow_B  p (\sigma')$.
	\item Let $s=\sigma (s_1,\dots, s_k)$ with $\sigma\in \Sigma_k$, $k>0$,
	and $s_1,\dots, s_k \in T_\Sigma$. 
	For $i\in [k]$, let $s_i \Rightarrow_B^* p_i (s_i')$. 
	Furthermore, let $\sigma (p_1 (x_1),\dots, p_k (x_k))\rightarrow p (\sigma' (x_1,\dots,x_k))$ be a rule in
	$R$. Then $s\Rightarrow_B^* p (\sigma' (s_1',\dots, s_k') )$.
\end{itemize}
For $s\in T_\Sigma$ and $p\in P$,
we write $s\in \text{dom}_B (p)$ if $s\Rightarrow_B^* p (s')$ for some tree $s'\in T_{\Sigma'}$.
The translation realized by $B$ is given by
$
\tau_B= \{(s,s') \in T_\Sigma\times T_{\Sigma'} \mid s\Rightarrow_B^* p(s') \text{ where } p\in F\}.
$
Since $\tau_B$ is a partial function, we also write $\tau_B (s) =t$ if~$(s,t)\in \tau_B$.
The domain and the range of $B$ are defined in the obvious way. 

We define an \emph{attributed tree transducer with look-ahead (or $att^R$)}
as a pair
$\hat{A}=(B,A)$ where $B$ is a
bottom-up relabeling and
$A=(S,I,\Sigma',\Delta,a,R)$ is an $att$. 
The translation realized by $\hat{A}$ is given by
\[
\tau_{\hat{A}}=
\{ (s,t) \in T_\Sigma \times T_\Delta \mid \tau_B (s)= s' \text{ and } (s', t) \in \tau_{A}  \}.
\]
Functionality is defined for $atts^R$ in the obvious way. 
 We write
 $\tau_{\hat{A}} (s) =t$ as usual if $(s,t)\in \tau_{\hat{A}}$ if $\tau_{\hat{A}}$ is a function.
 An $att^R$ $\hat{A}=(B,A)$ is deterministic, i.e., a $datt^R$, if its underlying $att$ $A$ is.
If $A$ is a (deterministic) top-down transducer then  $\hat{A}$ is called a (deterministic) \emph{top-down transducer with look-ahead}
(or $(d)t^R$ for short).
We say that $\hat{A}=(B,A)$ is an $att^R$ with monadic output if $A$ is an $att$ with monadic output.\\

\textbf{Look-Around.}\quad
\emph{Look-Around} is similar to look-ahead; it is also a relabeling device that provides additional information to an $att$.
However, it is   more expressive than  look-ahead.
To define look-around, we first define  \emph{top-down relabelings}.
A top-down relabeling is a deterministic top-down tree transducer 
$T=(S,\emptyset,\Sigma ,\Sigma',a_0,R)$
where 
 all rules 
are of the form  $q (\sigma (x_1,\dots,x_k )) \rightarrow \sigma'(q_1( x_1),\dots,q_k (x_k))$
where $\sigma\in \Sigma_k$,  $\sigma'\in \Sigma'_k$ and $k\geq 0$.
Since  top-down relabelings are top-down transducers,  \emph{top-down relabelings with look-ahead} are defined in the obvious way.

An \emph{attributed tree transducer with look-around (or $att^U$)}
is a tuple $\breve{A}=(U,A)$
where $A$ is an $att$ and $U$ is a top-down relabeling with look-ahead.
The translation realized by $\breve{A}$ is defined analogously as for $att^R$.
This means that an $att^U$ relabels its input tree in two
phases: First the input tree is relabeled in a bottom-up phase. 
The resulting tree is relabeled again in a top-down phase before it is processed by~$A$.
Functionality and determinism  for $atts^U$ are defined analogously as for $atts^R$.
In particular, if $\breve{A}$ is deterministic then it is $datt^U$.
We say that $\breve{A}=(U,A)$ is an $att^U$ with monadic output if $A$ is an $att$ with monadic output.

The following results hold for $atts^U$.
First we show that any $datt^U$ $\breve{A}$ is of linear size increase, i.e.,
that a constant  $c\in \mathbb{N}$ exists such that
for all $(s,t)\in \tau_{\breve{A}}$, $|t| \leq c \cdot |s|$.

\begin{proposition}\label{datt lsi}
	Any $datt^U$ with monadic output is of linear size increase.
\end{proposition}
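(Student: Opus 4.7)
The plan is to reduce the claim to the case of a plain $datt$ and then bound the length of a rewriting chain by a direct counting argument. First I would observe that the look-around $U$, being by definition a top-down relabeling equipped with look-ahead, preserves the shape of its input: each rule in both the bottom-up relabeling phase and the top-down relabeling phase produces exactly one output symbol per input symbol. Hence $|\tau_U(s)| = |s|$ whenever defined, and it suffices to show that the underlying $datt$ $A = (S, I, \Sigma', \Delta, a_0, R)$ with monadic output satisfies $|t| = O(|s|)$ for every $(s,t) \in \tau_A$.

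Let $c$ denote the maximum size of a right-hand side of a rule of $A$. Since $\Delta$ contains only symbols of rank $0$ and $1$, every such right-hand side $\xi$ is itself a monadic tree whose unique leaf is either an output constant in $\Delta_0$ or a single attribute reference. A routine induction on derivation length then shows that every intermediate tree in a derivation $a_0(1) \Rightarrow_{A,s^\#}^{*} t$ is monadic and contains at most one leaf labeled by an attribute reference. Each rewrite step adds at most $c$ output symbols and replaces that unique attribute-reference leaf by at most one new attribute-reference leaf (or by a constant, thereby terminating the chain).

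Thus the derivation is a linear chain $t_0 = a_0(1), t_1, \dots, t_n = t$, in which both the selected leaf and the rule applied to it are uniquely determined by the monadic form of the intermediate terms together with the determinism of $A$. The key step is the claim that no attribute instance $\alpha(v) \in \text{SI}(s^\#)$ occurs as the distinguished leaf at two different positions in the chain: if it did, the forced continuations from each occurrence would coincide and the chain would never terminate, contradicting the existence of the output $t$. Granted this, $n$ is bounded by the total number of attribute instances $(|S|+|I|) \cdot (|s|+1)$, and therefore $|t| \le c \cdot (|S|+|I|) \cdot (|s|+1) = O(|s|)$.

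I expect the main obstacle to be the rigorous justification of the no-repeat claim, since this is the point at which determinism of $A$ and termination of the derivation must be combined carefully; everything else is a direct consequence of the monadic structure of the right-hand sides and is essentially bookkeeping on the shape of intermediate terms.
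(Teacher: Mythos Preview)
Your proposal is correct and follows essentially the same approach as the paper: reduce to the underlying $datt$ by observing that $U$ is size-preserving, then bound the length of the (necessarily linear) derivation chain by $|S\cup I|\cdot|s^\#|$ via the observation that determinism plus termination forbids any attribute instance from repeating, and multiply by the maximal rule size. The paper's version is terser on the no-repeat step (it simply asserts it from determinism), whereas you spell out the looping argument explicitly; otherwise the two proofs coincide.
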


\begin{proof}
	Let $\breve{A}=(U,A)$ be a  $datt^U$ with monadic output.
	Obviously it is sufficient to show that the underlying $att$ $A$ of $\breve{A}$ is of linear size increase.
	Let $(s,t)\in \tau_A$. Then trees $t_1,\dots,t_n \in T_\Delta [SI(s^\#)]$ exist such that
	$
	a_0 (1) =  t_1\Rightarrow_{A, s^\#} \cdots \Rightarrow_{A, s^\#} t_n \Rightarrow_{A, s^\#} t.
	$
	Note that (even in the case that $A$ is circular) since $A$ is deterministic,
	for all $\alpha (\nu) \in \text{SI}(s^\#)$ at most one $j\in [n]$ exists
	such that $\alpha (\nu)$ occurs in $t_j$.  Clearly,
	if $\nu=\epsilon$ then  no $j\in [n]$ exists
	such that $\alpha (\nu)$ occurs in $t_j$.
	Thus, $n\leq  |S\cup I| \cdot |s|$. 
	Denote by $\text{maxsize}$ the maximal size of a right-hand side of a rule of $A$.
	Then clearly, $|t| \leq \text{maxsize}\cdot |S\cup I| \cdot |s| $.
\end{proof}

\noindent
With Proposition~\ref{datt lsi}, the following holds.

\begin{proposition}\label{equivalent proposition}
	For $datts^U$ with monadic output, equivalence is decidable.
\end{proposition}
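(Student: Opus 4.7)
The plan is to combine Proposition~\ref{datt lsi} with the classical characterization of deterministic linear-size-increase tree translations as MSO-definable transductions, and then invoke decidability of equivalence for the latter class.

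First I would observe that every $datt^U$ is effectively a deterministic macro tree transducer with look-around (attributed tree transducers are a syntactic restriction of macro tree transducers, and look-around is preserved): inherited attributes can be simulated through the parameter mechanism of a macro tree transducer in the standard way. Combining this observation with Proposition~\ref{datt lsi} places $\tau_{\breve{A}}$ inside the class of deterministic macro-tree-transducer translations of linear size increase. By the Engelfriet--Maneth characterization of this class (see \cite{DBLP:journals/siamcomp/EngelfrietM03}), every such translation is \emph{deterministic MSO-definable}, and an equivalent deterministic MSO tree transducer can be constructed effectively from $\breve{A}$.

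Second, given two $datts^U$ with monadic output $\breve{A}_1$ and $\breve{A}_2$, I would construct the corresponding deterministic MSO tree transducers $M_1$ and $M_2$ as above and then invoke the decidability of equivalence for deterministic MSO tree transducers. Concretely, this can be obtained by expressing the statement ``$M_1(s) \neq M_2(s)$'' as an MSO sentence $\varphi$ over the input alphabet (using the standard product/copying construction on MSO transductions, and that equality of output trees is MSO-definable on the disjoint union of the two outputs, which is itself MSO-definable from the input), and then applying the decidability of MSO on finite trees to decide satisfiability of $\varphi$. Unsatisfiability is exactly equivalence of $\breve{A}_1$ and $\breve{A}_2$.

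The main obstacle I anticipate is the first step: turning a $datt^U$ (which allows circularity in principle and mixes synthesized/inherited attributes with a two-phase look-around) into a genuine deterministic macro tree transducer with look-around, so that the LSI~$\Rightarrow$~MSO theorem of \cite{DBLP:journals/siamcomp/EngelfrietM03} is literally applicable. Circularity is harmless because the translation is a partial function and Proposition~\ref{datt lsi} already bounds the output size on every input in the domain, so any circular derivations contribute nothing; the remaining issue is routine but notationally involved bookkeeping to fold the upward flow of inherited attributes into macro parameters and to absorb the look-around, for which the constructions of \cite{DBLP:journals/jcss/BloemE00} and \cite{DBLP:journals/iandc/EngelfrietM99} provide the template.
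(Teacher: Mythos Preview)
Your overall strategy coincides with the paper's: convert the $datt^U$ into a deterministic macro tree transducer (the paper cites Lemma~24 of~\cite{DBLP:journals/acta/EngelfrietIM21} for exactly this), use Proposition~\ref{datt lsi} to conclude linear size increase, and then appeal to decidability of equivalence for the resulting class. The paper finishes by citing Corollary~13 of~\cite{DBLP:journals/ipl/EngelfrietM06}, which directly states that equivalence is decidable for deterministic macro tree transducers of linear size increase (equivalently, for deterministic MSO tree transducers).

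Where your proposal diverges, and where it breaks, is your final step. You claim that ``$M_1(s)\neq M_2(s)$'' can be written as an MSO sentence over the input by taking the MSO-definable disjoint union of the two outputs and then using that ``equality of output trees is MSO-definable on the disjoint union''. The second part is false. Isomorphism of two ordered ranked trees presented as a disjoint union is not MSO-expressible, already in the monadic (string) case: take inputs over $\{a,b\}$, let $M_1$ output a unary string whose length is the number of $a$'s and $M_2$ a unary string whose length is the number of $b$'s (both are deterministic MSO string transductions). Then $\{s\mid M_1(s)=M_2(s)\}$ is the non-regular language $\{s\mid |s|_a=|s|_b\}$, so no MSO sentence on the input can express equivalence. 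Consequently the inverse-image argument you sketch cannot work, and equivalence of deterministic MSO transducers is \emph{not} obtained by a reduction to MSO model checking on the input.

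The fix is simply to replace your step~4 by the black-box result the paper uses: decidability of equivalence for deterministic macro tree transducers of linear size increase (equivalently, for deterministic MSO tree transducers) is a known theorem~\cite{DBLP:journals/ipl/EngelfrietM06}, proved by entirely different means. With that citation your argument becomes essentially identical to the paper's proof.
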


\begin{proof}
	Let $\breve{A}$ be a $datt^U$ with monadic output.
	By Lemma~24
	\footnote{Note that (deterministic) $atts^U$
		are (deterministic)
		tree-walking tree transducers ($TTs$) given the definition of $TTs$ in~\cite{DBLP:journals/acta/EngelfrietIM21}.
		Specifically, the look-around in~\cite{DBLP:journals/acta/EngelfrietIM21}
		is equivalent to our look-around. More precisely, as stated in~\cite{DBLP:journals/acta/EngelfrietIM21} (before Lemma~10), the look-around of~\cite{DBLP:journals/acta/EngelfrietIM21} is  the same as a MSO-relabeling which is the same as a
		relabeling attribute grammar~\cite{DBLP:journals/jcss/BloemE00}~(Theorem~10), which in turn is equivalent to our look-around~\cite{DBLP:journals/iandc/EngelfrietM99}~(Theorem~4.4).}
	of~\cite{DBLP:journals/acta/EngelfrietIM21}
	a translation realized by  $\breve{A}$ can also be realized by deterministic macro tree transducers $M$.
	By Proposition~\ref{datt lsi}, $\breve{A}$ is of linear size increase.
	Since $\breve{A}$ and $M$ are equivalent, $M$ is obviously also of linear size increase. 
	By Corollary~13 of~\cite{DBLP:journals/ipl/EngelfrietM06}
	equivalence is decidable for deterministic
	macro tree transducers of linear size increase.
\end{proof}

\noindent
Finally, we prove the following result.

\begin{proposition}\label{empty test}
	Let $A$ be a  $att^U$. Let $L$ be a recognizable tree language.
	It is decidable whether or not $\text{range} (A) \cap L=\emptyset$.
\end{proposition}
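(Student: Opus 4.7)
The plan is to reduce the problem to an emptiness test for a tree automaton. Specifically, I would show that the preimage $\tau_A^{-1}(L) := \{s \in T_\Sigma \mid \exists t \in L: (s,t) \in \tau_A\}$ is an effectively recognizable tree language. Since $\text{range}(A) \cap L = \emptyset$ if and only if $\tau_A^{-1}(L) = \emptyset$, and emptiness of a recognizable tree language is decidable, this yields the claim.

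To obtain such a tree automaton, I decompose $A=(U, A')$, where $U$ is a top-down relabeling with look-ahead and $A'$ is an ordinary $att$. I first handle the $att$-part by a classical construction: the inverse image of a recognizable language under an $att$-translation is effectively recognizable. Concretely, one builds a bottom-up tree automaton on the input alphabet of $A'$ whose state at each node $v$ records (i)~the is-dependency of the subtree rooted at $v$, computable bottom-up as discussed in Section~\ref{preliminaries}, and (ii)~a function that, given any assignment of states of the $L$-automaton to the inherited attributes at $v$, returns the set of $L$-automaton states reachable at each synthesized attribute at $v$. At a symbol $\sigma$ this information is combined from the children's states by simulating the $L$-automaton on the right-hand sides of the rules in $R_\sigma$; the is-dependency is precisely what is needed to resolve the mutual recursion between synthesized and inherited attributes without falling into a cycle. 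For the top-down relabeling with look-ahead $U$, recognizability is preserved under inverse translation by a standard product construction with $U$'s bottom-up look-ahead automaton and the top-down relabeling component. Composing, $\tau_A^{-1}(L) = \tau_U^{-1}(\tau_{A'}^{-1}(L))$ is effectively recognizable, and emptiness is then decided by the usual algorithm for tree automata.

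The main obstacle is the construction of the bottom-up automaton for $\tau_{A'}^{-1}(L)$. Since attributes of $A'$ move both downward and upward in the input tree and $A'$ may even be circular in our formulation, the product construction has to be interleaved carefully with the bottom-up computation of is-dependencies in order to guarantee finiteness and correctness. A cleaner alternative, which I would likely pursue, is to convert $A$ into an equivalent (nondeterministic) macro tree transducer by an extension of Lemma~24 of~\cite{DBLP:journals/acta/EngelfrietIM21} to the nondeterministic case, and then invoke the classical result that inverse images of recognizable tree languages under macro tree transducers are effectively recognizable; emptiness of the resulting tree automaton then decides the proposition.
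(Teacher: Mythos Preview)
Your alternative route at the end—convert to macro tree transducers and invoke inverse preservation of recognizability—is precisely what the paper does, though with different citations: it uses Corollary~39 of~\cite{DBLP:journals/acta/EngelfrietM03} (an $att$ is a $0$-pebble tree transducer, hence simulated by a \emph{composition} of macro tree transducers; the look-around is absorbed via Baker's decomposition of bottom-up transducers) together with Theorem~7.4.1 of~\cite{DBLP:journals/jcss/EngelfrietV85} for the inverse image, so no nondeterministic extension of Lemma~24 of~\cite{DBLP:journals/acta/EngelfrietIM21} is required and one gets a composition of MTTs rather than a single one. Your primary approach—build the inverse-image automaton for the $att$ part directly from is-dependencies and per-attribute state-assignment functions, then pull back through $U$—is a legitimate and classical alternative that is more self-contained but, as you rightly note, heavier to carry out when circularity is allowed; the paper sidesteps all of that by deferring to the existing MTT literature.
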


\begin{proof}
	By Corollary~39 
	\footnote{Note that $atts$ are $0$-pebble tree transducers. Also note that by Theorem~13 of
		\cite{DBLP:journals/iandc/Baker79b},  any bottom-up tree transducer can be simulated by a composition of two top-down-tree transducers.}
	of~\cite{DBLP:journals/acta/EngelfrietM03}
	a composition of macro tree transducer $M$
	exists such that $\text{range} (A)=\text{range} (M)$. 
	Denote by $\tau_M$ the translation realized by $M$.
	Denote by $\tau_M^{-1} (L)$ the set $\{ s \mid \exists t \in L:\ (s,t)\in \tau_M\}$.
	By Theorem~7.4.1 of \cite{DBLP:journals/jcss/EngelfrietV85}, $\tau_M^{-1} (L)$  is recognizable.
	Thus emptiness is decidable for $\tau_M^{-1} (L)$. Obviously if $\tau_M^{-1} (L)$ is empty
	then $\text{range} (M)\cap L$ and hence  $\text{range} (A)\cap L$ are also empty.
\end{proof}

\section{From Attributed Tree Transducers with Monadic Output to Top-Down Tree Transducers}\label{monadic}
In the following section, 
we show that given a $datt$ with monadic output, i.e.,
a $datt$ where output symbols are  of rank at most~$1$, it is decidable whether or not an equivalent $dt^R$  exists.

\subsection{The Single Path Property}
Before we describe the decision procedure, consider the following definitions.

In the following, we
fix a $datt$ $A=(S,I,\Sigma,\Delta,a_0,R)$ with monadic output. 
For an input tree $s\in T_\Sigma$ and $v\in V(s)$, we say that on input $s$, an attribute $\alpha$ of $A$
\emph{processes} the node $v$ if a tree $t\in T_\Delta [\text{SI} (s^\#)]$ exists such that
$ a_0 (1)\Rightarrow^*_{A,s^\#} t$  and $\alpha (1.v)$ occurs in $t$.

Consider $s'\in T_\Sigma \cup T_{\Sigma^\#}$  and let $t, t' \in T_\Delta [\text{SI}(s')]$.
Then $t'$ is the \emph{normal form} of $t$ if $t \Rightarrow_{A,s'}^* t'$ and
no tree $t''$ exists such that $t' \Rightarrow_{A,s'} t''$.
We denote by $\text{nf} (\Rightarrow_{A,s'},t)$ the unique
 normal form of $t$ with
respect to $\Rightarrow_{A,s'}$ if it exists. Note that if $A$ is noncircular then
a unique normal form of $t$ always exists. However, if $A$ is circular then the existence of a
normal form is not guaranteed.

Consider an arbitrary $dt^R$ $\breve{T}=(B_T,T)$ with monadic output.
The behavior of~$T$ is limited in a particular way: Let $s$ be an input tree and let $B_T$ relabel $s$ into $s'$.
On input $s'$, the states of $T$ only process the nodes on a single
\emph{path} of $s'$. A path is a sequence of nodes $v_1,\dots,v_n$
such that $v_i$ is the parent node of $v_{i+1}$. 
This property follows as obviously at most one state occurs on the right-hand side of any rule of~$T$.
Using this property, we  show that
if a $dt^R$ $T=(B_T,T)$ equivalent to $A$ exists then 
a $datt^R$ $\hat{A}=(B,A')$ can be constructed from $A'$ such that
attributes of~$A'$ become limited in the same way as states of $T$: 
 Let $s$ be an input tree and let $B$ relabel~$s$ into~$\hat{s}$.
On input~$\hat{s}$, the states of $A'$ only process the nodes on a single
path of~$\hat{s}$.
 We call this property the \emph{string-like} property and call
$\hat{A}$ the $datt^R$ \emph{associated} with $A$.
Our proof uses of the 
result of~\cite{DBLP:conf/lics/FiliotGRS13}.
This result states that for a
\emph{two-way transducer} it is decidable whether or not an equivalent \emph{one-way transducer} exists.
Furthermore, in the affirmative case such an one-way transducer can be constructed.
Two-way transducers and one-way transducers
 are essentially attributed  transducers and 
top-down transducers with monadic input and monadic output, respectively.
We show that 
the $datt^R$ $\hat{A}$ associated with $A$ can be converted into a  two-way transducer $T_W$.
It can be shown that the  procedure of~\cite{DBLP:conf/lics/FiliotGRS13} yields
a one-way transducer $T_O$ equivalent to $T_W$ if and only if
a $dt^R$  equivalent to $A$ exists. 
Hence, it is decidable whether or not a $dt^R$ equivalent to $A$ exists.
We then show that in the affirmative case we can construct such a $dt^R$ from $T_O$.

Subsequently, we define the look-ahead with which we equip $A$.
Consider the rules of $A$.
Due to the following technical lemma, we assume  that  only right-hand sides of rules in $R_{\#}$ are ground (i.e., trees in $T_\Delta$).

\begin{lemma}\label{only}
	For any $att$ $A$ an equivalent $att$ $A'$ can be constructed such that
	 only its rules of $A'$ for the root marker have ground right-hand sides.
\end{lemma}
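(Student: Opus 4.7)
The plan is to handle each ground right-hand side appearing in $R_\sigma$ for some $\sigma \in \Sigma$ by replacing it with a non-ground one that defers production of that ground tree to the root marker. The mechanism I would use is a fresh inherited attribute that acts as a pure messenger: it propagates upward through the input tree without contributing any output, and is finally unfolded into the intended ground tree by a new rule in $R_\#$.

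Concretely, let $T = \{t_1, \dots, t_m\}$ be the finite set of ground right-hand sides $t_j \in T_\Delta$ that occur in rules of $R_\sigma$ for some $\sigma \in \Sigma$. For each $t_j$ I introduce a fresh inherited attribute $b_{t_j}$ and extend the rule sets of $A$ as follows. For every $\sigma \in \Sigma_k$ with $k \geq 1$, every $i \in [k]$, and every $j$, I add the propagation rule $b_{t_j}(\pi i) \to b_{t_j}(\pi)$ to $R_\sigma$; to $R_\#$ I add the ground rule $b_{t_j}(\pi 1) \to t_j$. I then modify the original rules: each ground rule $a(\pi) \to t_j$ in $R_\sigma$ with $a \in S$ is replaced by $a(\pi) \to b_{t_j}(\pi)$, and each ground rule $b(\pi i) \to t_j$ in $R_\sigma$ with $b \in I$ is replaced by $b(\pi i) \to b_{t_j}(\pi i)$. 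I call the resulting transducer $A'$.

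For correctness, the central observation I would verify is that in $A'$, for any non-root node $v$ of $s^\#$, the call $b_{t_j}(v)$ derives in finitely many steps to $t_j$: iterated applications of the propagation rules give $b_{t_j}(v) \Rightarrow_{A', s^\#}^* b_{t_j}(1)$, and the new $R_\#$-rule gives $b_{t_j}(1) \Rightarrow_{A', s^\#} t_j$. Hence every derivation step $\alpha(v) \Rightarrow_{A, s^\#} t_j$ that uses a ground rule of $R_\sigma$ in $A$ is simulated in $A'$ by a derivation sequence yielding the same output. Conversely, any $A'$-derivation can be rearranged so that each messenger subderivation $b_{t_j}(\cdot) \Rightarrow_{A',s^\#}^+ t_j$ is completed in one block; collapsing these blocks yields a corresponding $A$-derivation producing the same tree. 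This gives $\tau_A = \tau_{A'}$.

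By construction, every rule of $A'$ outside $R_\#$ contains at least one attribute call on its right-hand side, so the only ground rules of $A'$ lie in $R_\#$, as required. The step that demands the most care is ensuring that the propagation rule $b_{t_j}(\pi i) \to b_{t_j}(\pi)$ is installed uniformly for every non-root symbol and every child index, so that $b_{t_j}(v)$ is well-defined at every node of every input tree; otherwise the simulation could break at nodes whose ancestors carry labels missing such a rule. Determinism and noncircularity of $A$ transfer to $A'$, since the new attributes reference only themselves along a strictly rootward dependency chain.
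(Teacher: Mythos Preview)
Your construction is essentially identical to the paper's: introduce one fresh inherited ``messenger'' attribute per ground right-hand side, replace each ground rule by a call to that attribute, let it propagate up to the root marker, and discharge it there with a ground rule in $R_\#$. One small fix is needed: the replacement $b(\pi i) \to b_{t_j}(\pi i)$ is not a syntactically valid $att$ rule, since by definition the right-hand side of a rule in $R_\sigma$ may only reference inherited attributes at $\pi$, not at $\pi i$; the correct version is $b(\pi i) \to b_{t_j}(\pi)$, which is exactly what the paper writes (and which also saves the redundant intermediate step).
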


\begin{proof}
	Let  $A=(S,I,\Sigma,\Delta,a_0,R)$.
	We define $A'=(S,I',\Sigma,\Delta,a_0,R')$, where 
	\[
	I'= I\cup \{\langle \xi \rangle  \mid \xi \in T_\Delta \text{ is the right-hand side of some rule of } A\}.
	\]
	We define $R_\#' =R_\# \cup \{\langle \xi \rangle (\pi 1) \rightarrow \xi \mid \langle\xi \rangle \in I'\setminus I\}$.
	 Recall that as defined in Section~\ref{preliminaries},  $\pi 0 = \pi$. Let $k\geq 0$.
	 	For $\sigma\in \Sigma_k$, denote by $P(\sigma)$ the set of all rules $\rho\in R_\sigma$
	 which have ground right-hand sides.
	 Then we define
	\[
	\begin{array}{ll}
		R_\sigma'=&\{ \rho \mid \rho\in R_\sigma \setminus P (\sigma) \}\\
		&\cup \{\alpha (\pi i) \rightarrow \langle \xi \rangle (\pi) \mid 
		\alpha (\pi i) \rightarrow \xi \in P (\sigma), \text{ where } \alpha\in S\cup I\text{ and }i\geq 0\}\\
		&\cup \{\langle \xi \rangle (\pi j) \rightarrow \langle \xi \rangle (\pi)
		\mid \langle\xi \rangle \in I'\setminus I\ \text{and}\ 1\leq j\leq k  \}.
	\end{array}
\]
It should be clear that $A'$ and $A$ are equivalent. 
\end{proof}

Let $s\in \text{dom} (A)$ and let $v\in V(s)$.
We define the \emph{visiting pair set at $v$ on input $s$} as a subset of the set $\text{ISD}_A (s/v)$.
Informally, the visiting pair set at $v$ on input $s$
only contains those dependencies in $\text{ISD}_A (s/v)$
that actually occur at $v$ in the translation of $A$ on input $s$. 
More formally, 
we call the set $\psi\subseteq I\times S$ the 
\emph{visiting pair set at $v$ on input $s$}
 if	
	\[
	\psi =\{ (b,a)\in \text{ISD}_A (s/v) \mid \text{on input }s, \text{ the attribute }a \text{ of }A \text{ processes }v \}.
	\]
Let $\psi$ be the visiting pair set at $v$ on input $s$.
In the following, we denote by
$\Omega_\psi$ the set consisting of all trees $s'\in T_\Sigma$ such that $\text{ISD}_A (s') \supseteq \psi$.
This essentially means that the set $\Omega_\psi$ contains all trees $s'$ such that 
the visiting pair set at $v$ on input $s[v\leftarrow s']$ is also $\psi$.
If
$a\in S$ exists such that $(b,a)\in \psi$ for some $b\in I$ and 
the range of $a$ when translating trees in  $\Omega_\psi$
is unbounded, i.e., if
the cardinality of $\{\text{nf}(\Rightarrow_{A,s'}, a(\epsilon)) \mid s'\in \Omega_\psi \}$
is unbounded,
 then we say that the \emph{variation of $\Omega_\psi$} is unbounded.
 Note that for all $(b,a)\in \psi$ and all  $s'\in \Omega_\psi$, 
 $\text{nf}(\Rightarrow_{A,s'}, a(\epsilon))$
 is defined.
 Specifically, $(b,a)\in \psi$ and $s'\in \Omega_\psi$ implies $(b,a)\in \text{ISD}_A (s')$
 which by its definition and due to the fact that output symbols of  $A$ are of rank at most $1$
  implies that $\text{nf}(\Rightarrow_{A,s'}, a(\epsilon))$
 is defined.
  Note that obviously $b(\epsilon)$ occurs in $\text{nf}(\Rightarrow_{A,s'}, a(\epsilon))$.
  If $\psi$ is the visiting pair set at $v$ on input $s$ and the variation of $\Omega_\psi$
is unbounded then we also say that the \emph{variation at $v$ on input $s$} is unbounded.

The variation plays a key role for proving our claim. 
In particular, the following property is derived from it:
We say that $A$ has the \emph{single path property} if for all trees $s \in dom(A)$ 
a path $\rho$ exists such that the variation at $v \in V(s)$ is bounded whenever $v$ does
not occur in $\rho$.  The following lemma states that the single path property
is a necessary condition for the $att$ $A$ to have an equivalent $dt^R$.

\begin{lemma}\label{necessary condition}
	If a $dt^R$ $T$ equivalent to $A$ exists then $A$ has the single path property. 
\end{lemma}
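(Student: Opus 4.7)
The plan is to exploit the fact that a $dt^R$ with monadic output processes only a single root-to-descendant path of its input. Formally, since every rule of the top-down transducer $T'$ underlying an equivalent $dt^R$ $T=(B_T,T')$ contains at most one state occurrence on its right-hand side (because output symbols have rank at most one), the set of nodes processed by states of $T'$ on any input forms a single path from the root. I would proceed by contradiction: assume $A$ fails the single path property, fix a witnessing $s\in\text{dom}(A)$, let $\rho_T$ be the path of $\tau_{B_T}(s)$ processed by $T'$, and pick $v\notin\rho_T$ at which the variation on input~$s$ is unbounded.

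Unpacking the definitions, for the visiting pair set $\psi$ at $v$ on~$s$ there is a pair $(b,a)\in\psi$ such that infinitely many $s'_1,s'_2,\dots\in\Omega_\psi$ yield pairwise distinct normal forms $N_i:=\text{nf}(\Rightarrow_{A,s'_i},a(\epsilon))$. Since $B_T$ has only finitely many states, by the pigeonhole principle I pass to an infinite subsequence on which every $s'_i$ receives the same $B_T$-state $p$ at its root; set $s_i:=s[v\leftarrow s'_i]$. Because $B_T$ is bottom-up and the state at $v$ is fixed to~$p$ across all~$i$, the trees $\tau_{B_T}(s_i)$ agree with $\tau_{B_T}(s)$ on the $B_T$-states and labels of every node outside the subtree rooted at~$v$, so in particular along the ancestor chain of $v$ and in all sibling subtrees the labels are preserved.

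I would then show that $T$ produces the same output on every $s_i$: since $v\notin\rho_T$ and the deterministic $T'$'s processing path is determined entirely by the labels it reads, $T'$ traces precisely $\rho_T$ on every $\tau_{B_T}(s_i)$ and emits identical output. Hence $\tau_T(s_i)$ is independent of~$i$, and by equivalence so is $\tau_A(s_i)$; in particular every $s_i$ lies in $\text{dom}(A)$ and the deterministic derivation of $A$ on $s_i$ terminates.

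The main obstacle is now to contradict this using the distinctness of the $N_i$. The key estimate is $|\tau_A(s_i)|\geq |N_i|-1$. Since $s'_i\in\Omega_\psi$, the attribute $a$ still processes~$v$ on input~$s_i$, so any derivation of $\tau_A(s_i)$ from $a_0(1)$ must rewrite some occurrence of $a(1v)$. By confluence of the deterministic derivation on $s_i\in\text{dom}(A)$, I may then fully resolve the synthesized sub-derivation inside $s'_i$, which substitutes $a(1v)$ by the tree obtained from $N_i$ by re-addressing $\epsilon$ to~$1v$; its $\Delta$-labeled nodes can never be erased by later rewrites and so survive in $\tau_A(s_i)$. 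As $|\tau_A(s_i)|$ is constant in~$i$, the size $|N_i|$ is therefore bounded; but only finitely many trees of bounded size exist over the finite alphabet $\Delta\cup\{b(\epsilon)\mid b\in I\}$, contradicting the infinite family of pairwise distinct~$N_i$.
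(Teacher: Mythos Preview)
Your argument has a genuine gap in the step where you claim that ``the trees $\tau_{B_T}(s_i)$ agree with $\tau_{B_T}(s)$ on the $B_T$-states and labels of every node outside the subtree rooted at~$v$''. The pigeonhole principle only guarantees that the $s'_i$ share a common $B_T$-state $p$ among themselves; it does \emph{not} guarantee that $p$ equals the $B_T$-state of the original subtree $s/v$. If $p$ differs from that original state, then the relabeled label at the parent of $v$---and inductively at every ancestor of $v$---changes when you pass from $s$ to $s_i$. Since some of those ancestors lie on $\rho_T$, the path $T'$ traces on $\tau_{B_T}(s_i)$ need not be $\rho_T$; it may well turn toward $v$ and enter the substituted subtree. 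In that case $T'$ reads labels inside $s'_i$, its output varies with $i$, and your contradiction collapses. You have created a circularity: $v$ was chosen relative to the path on $s$, but the substitution at $v$ can move the path.

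The paper breaks this circularity by working with \emph{two} sibling nodes $v_1,v_2$, both of unbounded variation (this follows from the negation of the single path property together with the easy observation that unbounded variation is inherited by ancestors). It pigeonholes at both, obtaining states $l_{i_1},l_{i_2}$, and only then looks at the tree $\hat{s}=s[v_1\leftarrow s_1,v_2\leftarrow s_2]$. On $\hat{s}$, one of $v_1,v_2$ is unprocessed by $T'$---say $v_1$---and now varying $s_1$ inside $\Omega_{\psi_1}\cap\text{dom}_{B_T}(l_{i_1})$ keeps the state at $v_1$ fixed at $l_{i_1}$, hence keeps all labels outside the $v_1$-subtree equal to those of $\hat{s}$ (not of $s$). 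The path on which $T'$ avoids $v_1$ is thus preserved verbatim. Your size-comparison endgame is fine; it is the ``$T$'s output is constant'' half that needs this two-sibling manoeuvre.
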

\begin{proof}
	Denote by $B_T$ the bottom-up relabeling of $T$.
	Let $l_1,\dots,l_n$ be the states of $B_T$.
	W.l.o.g. assume that $B_T$ operates as follows:
	Consider the tree $\sigma (s_1,\dots,s_k)$, where $\sigma\in \Sigma_k$
	and $s_1,\dots,s_k \in T_\Sigma$.
	Then the bottom-up relabeling $B_T$ of $T$ relabels
	$\sigma$  by the symbol $\sigma_{l_1',\dots,l_k'}$ by  if
	$s_i \in\text{dom}_{B_T} (l_i')$ for $i\in [k]$.

	Let $s\in\text{dom} (A)$ and  $v_1, v_2 \in V(s)$ such that 
	$v_1$ and $v_2$ have the same parent node and $v_1\neq v_2$.
	Assume that  the variation at both
$v_1$ and $v_2$ is unbounded. Let $\psi_1$ and $\psi_2$ be the visiting pair sets
at $v_1$ and $v_2$ on input $s$, respectively.
	Since $T$ and $A$ are equivalent, we can assume that $\text{dom} (B_T)=\text{dom} (A)$. 
	Note that by~\cite{DBLP:journals/ipl/FulopM00}, the domain of $A$ is  effectively recognizable.
	Since  $s\in\text{dom} (A)$,
	it holds that
	 $s[v_1\leftarrow s_1]\in \text{dom} (A)$
	for all $s_1 \in \Omega_{\psi_1}$. 
	This in turn means that for all $s_1 \in \Omega_{\psi_1}$ a state $l$ of $B_T$
	exist such that  $s_1\in \text{dom}_{B_T} (l)$.
	Therefore, all non-empty sets of the form
	$
	\Omega_{\psi_1} \cap \text{dom}_{B_T} (l_i),
	$
	where $i\in [n]$,
	form a partition of $\Omega_{\psi_1}$.
	Hence, and as the variation of
	$\Omega_{\psi_1}$ is unbounded, a state $l_{i_1}$ of $B_T$
	and $(b,a)\in \psi_1$ exist
	such that the cardinality of
	\[
	\mathcal{U}_1=\{\text{nf}(\Rightarrow_{A,s}, a(\epsilon))\mid  s\in \Omega_{\psi_1} \cap \text{dom}_{B_T}
	(l_{i_1}) \}
	\]
	is unbounded.
	Analogously, it follows that a state $l_{i_2}$ of $B_T$ 
	with the same property
	exists for  
	$\Omega_{\psi_2}$. 
	
	In the following let $s_j\in \Omega_{\psi_j} \cap \text{dom}(l_{i_j})$ for $j=1,2$.
	Consider the tree 
	$
	\hat{s}=s[v_j \leftarrow s_j\mid j=1,2].
	$
	Note that the visiting pair sets at $v_1$ and $v_2$ on input $\hat{s}$ are also
	$\psi_1$ and $\psi_2$, respectively. 
	Thus,  $\hat{s}\in \text{dom} (A)$ and consequently $\hat{s}\in \text{dom} (T)$.
	Let $T$ produce $t\in T_\Delta$ on input $\hat{s}$.
	As $T$ produces monadic output trees, it is obvious that on input $\hat{s}$ either $v_1$ or $v_2$
	is not processed by a state of $T$. W.l.o.g. assume that $v_1$ is not processed by
	$T$.
	Now consider an  arbitrary  $s_1'\in \Omega_{\psi_1} \cap \text{dom}_{B_T}(l_{i_1})$.
	Since $s_1'\in  \text{dom}_{B_T}(l_{i_1})$ and as $T$ is deterministic
	it follows 
	that on inputs $\hat{s}$ and $\hat{s}[v_1\leftarrow s_1']$ the same output
	tree is produced by $T$, i.e., for both input trees the output tree $t$ is produced.
	However, depending on the choice of $s_1'$,
	$A$ does \emph{not} produce the same output tree
	on inputs $\hat{s}$ and $\hat{s}[v_1\leftarrow s_1']$.
	Because the unboundedness of the set
	$\mathcal{U}_1$,
	a tree $\tilde{s}_1\in  \Omega_{\psi_1} \cap \text{dom}(l_{i_1})$
	and a pair $(b,a)\in \psi_1$
	exist such that
	\[
	\text{height} (t) < \text{height} (\text{nf}(\Rightarrow_{A, \tilde{s}_1},a(\lambda))).
	\]
	Consider the tree $\hat{s}[v_1\leftarrow \tilde{s}_1]$.
	Since the visiting pair set at $v_1$ on input $\hat{s}[v_1\leftarrow \tilde{s}_1]$ is also $\psi_1$,
	it follows easily that on input  $\hat{s}[v_1\leftarrow \tilde{s}_1]$,
	$A$ yields a tree of height greater than $\text{height} (t)$.
	Therefore, $A$ and $T$ are not equivalent. 
\end{proof}

\begin{example}\label{att example 2}
	Consider the $att$ $A_2=(S,I,\Sigma,\Delta,a,R)$ where
	$\Sigma= \{f^2, e^0, d^0\}$ and $\Delta=\{f^1, g^1, e^0, d^0\}$.
	The set of attributes are given by $S=\{a, a_e,a_d\}$ and $I=\{b_e, b_d, \langle e\rangle, \langle d \rangle\}$.
	We define
	\[
	\begin{array}{cc ccc c ccc c ccc}
	R_f= \{\ & a_d (\pi) & \rightarrow & f (a (\pi 1)  ), & \quad &   	b_d (\pi 1) & \rightarrow & a_d (\pi 1), &\quad & 	b_d(\pi 2) &  \rightarrow & b_d(\pi), \\ 
	& a_e (\pi) & \rightarrow & g (a (\pi 1)  ),  & \quad & b_e (\pi 1) & \rightarrow & a_e (\pi 1),  &\quad & 	b_e(\pi 2) & \rightarrow & b_e(\pi), \\
	& a(\pi) & \rightarrow & a(\pi 2),  & \quad &  \langle e \rangle (\pi 1) & \rightarrow & \langle e \rangle (\pi ),  &\quad & \langle d \rangle (\pi 1) & \rightarrow & \langle d \rangle (\pi )\ \} \\
	\end{array}
\]
\[
\begin{array}{cc ccc c ccc c ccc ccc}
\text{and }R_\#=\{ &
b_e(\pi 1) &  \rightarrow & a_e (\pi 1),  &\ & b_d(\pi 1) & \rightarrow  & a_d (\pi 1), &\
\langle e \rangle (\pi 1) & \rightarrow &e, &\ & \langle d \rangle (\pi 1) & \rightarrow & d\ \}.\\
\end{array}
	\]
	Furthermore, we define  
	\[
	R_e=\{a(\pi)\rightarrow b_e (\pi),\ a_e (\pi)\rightarrow \langle e \rangle (\pi)\}
	\text{ and }
	R_d=\{a(\pi)\rightarrow b_d (\pi),\ a_d (\pi)\rightarrow \langle d \rangle (\pi)\}.
	\]
	Let $s\in T_\Sigma$ and denote by $n$ the length of the leftmost path of $s$.
	On input~$s$, $A_2$ produces the tree $t$  of height $n$ whose nodes are labeled as follows:
	if $v\in V(t)$ is not a leaf and the rightmost leaf of the subtree $s/v$ is labeled by $e$ then 
	$t[v]=g$, otherwise $t[v]=f$.
	If $v$ is a leaf then $t[v]=s[v]$. 
		For instance, the input tree $s=f ( f( f(d,d),d), f(d,e) )$   
		is translated to the output tree
		$g(f(f(d)))$. 
\end{example}

Clearly, a  $dt^R$ that is equivalent to the $att$ $A_2$ in Example~\ref{att example 2} exists. 
Furthermore $A_2$ has the single path property. In particular, it can be verified that the
variations of all nodes that do not occur on the left-most path of the input tree
are bounded. More precisely, if the node $v$ does not occur on the leftmost path of the input tree
then its visiting pair set is either $\psi_e=\{ (b_e,a)\}$ or $\psi_d=\{ (b_d,a)\}$.
Consider the set $\Omega_{\psi_e}$.
Clearly, $\Omega_{\psi_e}$ consists of all trees in $T_\Sigma$ whose rightmost leaf is labeled by $e$.
For all such trees the attribute $a$ yields the output $b_e (\epsilon)$. This in turn means
that the variation of $\Omega_{\psi_e}$ is bounded.
The case for $\psi_d$ is analogous.

In contrast, consider the $att$ $A_1$ in Example~\ref{att example}. 
Recall that it
translates an input tree $s$ into a monadic tree $t$ of height  $n+1$ if $s$ has $n$ leaves.
This translation is not realizable by any $dt^R$. This is reflected in the fact that
the $att$ of Example~\ref{att example} does \emph{not} have
the single path property.
In particular, consider $s=f(f(e,e), f(e,e))$.
The visiting pair set at all nodes of $s$  is $\psi=\{(a,b)\}$.
Furthermore, $\Omega_\psi=T_\Sigma$.
It can be verified that the variation of $\Omega_\psi$  is unbounded.

Recall that we aim to construct the 
$att^R$ $\hat{A}=(B,A')$ associated with $A$ and that we require $\hat{A}$ to have the string-like-property.
This property is closely related to the
single path property.
In particular, the basic idea behind $\hat{A}$ is as follows. 
Let $s\in \text{dom} (A)$ and let $B$ relabel $s$ into $s'$.
The idea is that on input $s'$, if attributes of $A'$  process $v\in V(s')$ then
the variation at $v$ on input $s$ with respect to $A$ is unbounded. 
Note that obviously $V(s')=V(s)$.
Clearly, if  $A$ has the single path property then attributes of $A'$
only process nodes of a single path of $s'$.

Now the question is how precisely do we construct $\hat{A}$? 
To construct $\hat{A}$, the basic idea is to precompute all
parts of the output tree
that would be otherwise produced at nodes with bounded variation
using the look-ahead of $\hat{A}$.
To do so, we first require the following lemma.

\begin{lemma}\label{visitin pair sets computable}
The set of all visiting pair sets of
$A$  can be computed.
\end{lemma}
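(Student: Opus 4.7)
The plan is to reduce the problem to enumerating all reachable labelings of a finite-state tree-annotation device. Since the visiting pair set at $v$ on input $s$ is determined by the pair $(\text{ISD}_A(s/v), V_v)$, where $V_v \subseteq S$ is the set of synthesized attributes of $A$ processing $v$ on input $s$, via the formula $\{(b,a) \in \text{ISD}_A(s/v) \mid a \in V_v\}$, it suffices to enumerate all pairs $(\text{is}, V) \in \text{ISD}_A \times 2^S$ realizable as $(\text{ISD}_A(s/v), V_v)$ for some $s \in \text{dom}(A)$ and $v \in V(s)$. Recall that $\text{ISD}_A$ is already finite and effectively computable.

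First I would equip each input tree with a bottom-up relabeling annotating every node $v$ with $\text{ISD}_A(s/v)$, using the already-described inductive computation. On this annotated tree I would then define a deterministic top-down propagation that computes $V_v$ at each node, starting from $V_1 = \{a_0\}$ at the root of~$s^\#$. At an interior node $u$ labeled $\sigma$ with children whose is-dependencies are $\text{is}_1, \ldots, \text{is}_k$, the update is obtained by a local fixed-point on the directed graph whose vertices are the attribute occurrences at $u$ and at its children and whose edges are (i)~induced by the right-hand sides of rules in $\text{RHS}_A(\sigma, a(\pi))$ and $\text{RHS}_A(\sigma, b(\pi j))$, and (ii)~the edges from $a$-at-$u.j$ to $b$-at-$u.j$ for $(b,a) \in \text{is}_j$. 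Reachability from $V_u$ in this graph determines both the set of synthesized attributes propagated to each child (serving as $V_{u.j}$ at the next level) and the set of inherited attributes activated at $u$ and at its children; an analogous local computation using $R_\#$ handles the root.

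Since the state space of this combined bottom-up/top-down annotation device is $\text{ISD}_A \times 2^S$, which is finite, I can then form its product with a bottom-up tree automaton recognizing $\text{dom}(A)$---effectively recognizable by~\cite{DBLP:journals/ipl/FulopM00}---and enumerate all labels $(\text{is}, V)$ that occur at some node of some accepted input tree. The set of visiting pair sets is obtained by applying the map $(\text{is}, V) \mapsto \{(b,a) \in \text{is} \mid a \in V\}$ to this finite collection.

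The main subtlety lies in the local computation at each node: the flow is genuinely bidirectional, since synthesized attributes at $u$ may generate inherited attributes at $u$ via $R_\sigma$, inherited attributes at a child $u.j$ may trigger further synthesized attributes at $u.j$ via $\text{is}_j$, and these can feed back through $R_\sigma$ to produce additional attributes at $u$ and its siblings. However, because the local graph is finite and reachability is monotone, the fixed-point terminates and yields a well-defined deterministic transition, so the construction goes through.
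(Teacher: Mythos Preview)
Your approach is correct but proceeds quite differently from the paper. The paper constructs an auxiliary $att^R$ that takes as input a tree $s$ with exactly one marked node $v$, simulates $A$ while stripping all $\Delta$-symbols, and emits an attribute name each time the marked node is entered or exited; the output is a monadic tree of the form $\breve{a}_1(\breve{b}_1(\cdots\breve{a}_n(\breve{b}_n(e))\cdots))$, from which the visiting pair set $\{(b_1,a_1),\ldots,(b_n,a_n)\}$ is read off. Because determinism forces each attribute to visit $v$ at most once, these outputs have size at most $|S|+|I|+1$, so the range is finite and computable by Theorem~4.5 of~\cite{DBLP:journals/iandc/DrewesE98}. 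Your construction instead realises the map $v \mapsto (\text{ISD}_A(s/v), V_v)$ directly as a look-around (a bottom-up pass for $\text{ISD}_A$ followed by a deterministic top-down pass for $V_v$), and then enumerates the labels reachable on inputs in $\text{dom}(A)$. The paper's route is more modular---it reduces to a known black-box result on finite ranges of $att^R$---while yours is more elementary and self-contained, exposing the finite-state nature of the visiting information explicitly and avoiding the detour through range computation. One small point: your stated base case $V_1 = \{a_0\}$ is not literally correct, since further synthesized attributes may visit node~$1$ via the rules in $R_\#$; but you already note that an analogous local fixed-point at the root marker handles this, so the argument is sound.
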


\begin{proof}
To compute all subsets of $I\times S$ that are visiting pair sets of
$A$,
we record which attribute of $A$ process which
node of the input tree. 
To do so, we construct the $att$ $A'$ from $A$.
The general idea is as follows: Let $s\in T_\Sigma$.
We mark a single node $v$ of $s$ by annotating its label by $\pm$
to distinguish it.
Whenever that node $v$ is processed by an attribute~$\alpha$, we record $\alpha$ by making $\alpha$ part of the output.

Formally, we define $A'=(S,I,\Sigma',\Delta',R',a_0)$
with $\Sigma'=\Sigma \cup \{\sigma_\pm \mid \sigma \in \Sigma\}$ where $\sigma_\pm \in \Sigma'_k$ if $\sigma\in \Sigma_k$
and $\Delta'= \{ \breve{\alpha} \mid \alpha\in S\cup I\}\cup \{e\}$
where  $e$ is of rank $0$ and all remaining symbols in $\Delta'$
are of rank $1$.
We demand that input trees of $A'$ contain exactly one
node labeled by a symbol of the form $\sigma_\pm$.

We now define the rule set $R'$.
Recall that due to Lemma~\ref{only}, we assume that only rules of $A$ for the root marker have ground right-hand sides.
We define that if
$b(\pi 1)\rightarrow t\in R_{\#}$, where $t$ is ground,
then $b(\pi 1)\rightarrow e \in R'_{\#}$.
The remaining rules of $A'$ are obtained as follows:
Let  $\sigma \in \Sigma\cup \{\#\}$
and let  $\rho\in R_\sigma$.
In case that $\sigma=\#$ let the right-hand side of $\rho$ be non-ground.
Then, we define that  $\rho' \in R_\sigma'$, where $\rho'$ is obtained from $\rho$
by removing all
$\Delta$-symbols occurring in $\rho$, e.g.,
if $R_\sigma$ contains $a_1(\pi)\rightarrow f(g(a_2(\pi 1)))$ and $b (\pi 1) \rightarrow g(a_2(\pi 1))$ 
where $a_1,a_2\in S$, $b\in I$ and $f,g\in \Delta$ then
$a_1(\pi)\rightarrow a_2(\pi 1), b (\pi 1) \rightarrow a_2(\pi 1)\in R_\sigma'$.
For a symbol of the form $\sigma_\pm$, where $\sigma\in \Sigma_{k}$, we proceed as follows:
Let  $a(\pi) \rightarrow t \in R_{\sigma}$,
where $a\in S$. 
\begin{enumerate}
	\item If $a' (\pi j)$ with  $a' \in S$ and $j\in [k]$ occurs in $t$
	then
	$
	a(\pi) \rightarrow \breve{a} (a' (\pi j))\in R'_{\sigma_\pm}.
	$
	\item If $b(\pi)$ occurs in $t$ 
	where  $b\in I$ then
	$
	a(\pi) \rightarrow \breve{a} (\breve{b}  (b (\pi)))\in R'_{\sigma_\pm}.
	$
\end{enumerate}
Let $b(\pi j) \rightarrow t'\in R_{\sigma}$ where $b\in I$ and $j\in [k]$.
\begin{enumerate}
\item If $b' (\pi)$ occurs in $t'$
where $b'\in I$   then
$
b(\pi j) \rightarrow \breve{b}' (b' (\pi))\in R'_{\sigma_\pm}.
$
\item If $a (\pi i)$ occurs in $t'$
where $a\in S$ and $i\in [k]$ then
$
b(\pi j) \rightarrow  a (\pi i)\in R'_{\sigma_\pm}.
$
\end{enumerate}
Note that so far $A'$ does not test whether or not  exactly one symbol of the input tree is marked.
To address this issue,
we equip $A'$ with a deterministic bottom-up relabeling $B'$ thus obtaining
the $att^R$ $(B',A')$.
On input $s$, $B'$ tests whether or not the tree $s$ is of the from
we demanded. If so then $B'$ outputs $s$; otherwise
no output is produced. Clearly such a bottom-up relabeling $B'$ can be constructed.

Let $s\in T_\Sigma$ and $s'$ be obtained from $s$ by `marking' a single node $v$
as specified earlier.
By construction of $A'$, it is clear that
in a translation of $A$ on input $s$, the attribute $\alpha$ processes the node $v$ of $s$
if and only if $\alpha$ also processes  $v$
in a translation of $A'$ on input $s'$.
Furthermore, clearly $s\in \text{dom} (A)$ if and only if $s'\in \text{dom} (A')$.
If $s'\in \text{dom} (A')$ then $A'$ outputs a
tree  of the form
$\breve{a}_1 (\breve{b}_1( \cdots (\breve{a}_n( \breve{b}_n ( e))) \cdots ))$,
where $a_1,\dots, a_n \in S$ and $b_1,\dots,b_n \in I$, on input $s'$. Clearly, this means that for a translation of $A$, 
the visiting pair set at $v$ on input $s$ is
$\{ (a_1,b_1),\dots, (a_n,b_n)\}$.

Thus, we obtain the set of all visiting pair sets of $A$
by computing the range of the $att^R$ $(B',A')$.
Note that since  $(B',A')$ is deterministic, it cannot produce output trees 
$\breve{a}_1 (\breve{b}_1( \cdots (\breve{a}_n( \breve{b}_n ( e))) \cdots ))$
such that for some $i,j\in [n]$, where $i\neq j$, 
either $\breve{a}_i = \breve{a}_j$ or $\breve{b}_i = \breve{b}_j$.
Therefore, output trees of $(B',A')$ have at most $|S|+|I|+1$ nodes and hence
the range of $(B',A')$ is bounded.
Thus it can be computed
due to Theorem~4.5 of~\cite{DBLP:journals/iandc/DrewesE98}, see also
Lemma~3.8 of~\cite{DBLP:journals/siamcomp/EngelfrietM03}.
\end{proof}

\noindent
Additionally, we require the following lemma as well. 

\begin{lemma}\label{bounded computable}
	Let $\psi$ be  the visiting pair set for some tree $s$ and some node $v\in V(s)$. It is decidable
	whether or not the variation of $\Omega_\psi$ is bounded. 
\end{lemma}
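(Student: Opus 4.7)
The plan is to reduce boundedness of the variation to deciding finiteness of the range of an auxiliary transducer, which in turn reduces to a range-emptiness question covered by Proposition~\ref{empty test}.

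First, I would verify that $\Omega_\psi$ is a recognizable tree language. Since $\text{ISD}_A$ is finite and effectively computable, and since $\text{ISD}_A(s')$ can be computed bottom-up, for every fixed $\text{is}\in \text{ISD}_A$ the set $\{s'\in T_\Sigma \mid \text{ISD}_A(s')=\text{is}\}$ is recognizable. The set $\Omega_\psi$ is the finite union of these sets over all $\text{is}\supseteq \psi$, hence recognizable. In particular we can build a bottom-up relabeling $B_\psi$ whose domain is exactly $\Omega_\psi$.

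Second, for every pair $(b,a)\in \psi$ I would construct a $datt^R$ $A^\dagger_{b,a}=(B_\psi,A_{b,a})$ which, on input $s'\in \Omega_\psi$, outputs the normal form $\text{nf}(\Rightarrow_{A,s'},a(\epsilon))$ with the leaf $b(\epsilon)$ replaced by a fresh ground symbol $\hat{b}\in \Delta_0$. The underlying $datt$ $A_{b,a}$ has the same attributes and non-root rules as $A$, an initial synthesized attribute producing $a(1)$ at $\#$, and a single rule at the root marker $b(\pi 1)\rightarrow \hat{b}$. Since $(b,a)\in \psi\subseteq\text{ISD}_A(s')$ for $s'\in \Omega_\psi$ and output is monadic, $A^\dagger_{b,a}$ terminates and produces exactly the desired tree. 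Thus
\[
\text{range}(A^\dagger_{b,a}) = \{\text{nf}(\Rightarrow_{A,s'},a(\epsilon))[b(\epsilon)\leftarrow \hat{b}] \mid s'\in \Omega_\psi\},
\]
and the variation of $\Omega_\psi$ is bounded if and only if $\text{range}(A^\dagger_{b,a})$ is finite for every $(b,a)\in \psi$.

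Third, I would decide finiteness of $\text{range}(A^\dagger_{b,a})$. By Proposition~\ref{datt lsi}, $A^\dagger_{b,a}$ has linear size increase, so all outputs (which are monadic) have size at most $c\cdot |s'|$ for a computable constant $c$. By a pumping argument over the finitely many \emph{configurations} $(\alpha,\text{is})$ with $\alpha\in S\cup I$ and $\text{is}\in \text{ISD}_A$ that may occur along a root-to-leaf path of $s'$, I obtain a computable threshold $N$ such that $\text{range}(A^\dagger_{b,a})$ is infinite iff it contains a tree of size greater than $N$: if two nodes $v\prec v'$ on the traversed path share the same configuration, splicing $s'/v'$ in place of $s'/v$ yields a new input in $\Omega_\psi$ whose output differs in size, and iterating produces arbitrarily large outputs. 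With $N$ in hand, I apply Proposition~\ref{empty test} to the recognizable language $L_N=\{t\in T_{\Delta\cup\{\hat{b}\}} : |t|>N\}$ to decide whether $\text{range}(A^\dagger_{b,a})\cap L_N=\emptyset$; if so the range is finite, otherwise it is infinite.

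The main obstacle I expect is making the pumping step rigorous and computable in the attributed setting, since attributes move both up and down rather than strictly top-down. I would handle it by pumping along the subtree-inclusion order on the nodes actually visited by $a$ in the normal-form derivation, using the finite set $(S\cup I)\times \text{ISD}_A$ as the pigeonhole space and exploiting that $\text{ISD}_A$ is preserved by substitution at nodes carrying matching is-dependencies, so the pumped input stays in $\Omega_\psi$.
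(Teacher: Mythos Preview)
Your first two steps are essentially identical to the paper's proof: the paper also restricts the domain to $\Omega_\psi$ via a bottom-up relabeling (computing is-dependencies and keeping final those states containing $\psi$), and for each $(b,a)\in\psi$ builds the $datt^R$ with initial attribute $a$ and the single root-marker rule $b(\pi 1)\rightarrow e$. The reduction ``variation bounded $\Leftrightarrow$ each such range is finite'' is exactly the paper's argument.

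Where you diverge is the third step. The paper does not attempt a bespoke pumping argument; it simply invokes Theorem~4.5 of Drewes and Engelfriet (see also Lemma~3.8 of Engelfriet and Maneth), which states that finiteness of the range of such a transducer is decidable, and in the affirmative case the range is computable. That single citation closes the proof.

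Your pumping sketch, by contrast, has a real gap. The pigeonhole space $(S\cup I)\times\text{ISD}_A$ is too coarse: an attribute computation with monadic output may visit a node $v$ several times, entering with different synthesized attributes and leaving with different inherited ones. Matching two nodes $v\prec v'$ merely on a single pair $(\alpha,\text{is})$ does not guarantee that the entire visiting sequence at $v$ coincides with that at $v'$, so after the splice the derivation need not line up (and the output need not grow). To make the pumping rigorous you would have to record, at each node, the full ordered sequence of attributes that cross it (a ``crossing sequence''), together with the is-dependency; only then does substitution preserve the computation and give a strictly larger output. This is fixable, but it is precisely the kind of technical work that the Drewes--Engelfriet result packages for you. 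Replacing your third step by that citation gives the paper's proof verbatim.
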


\begin{proof}
In the following, we construct for all $(b,a) \in \psi$
an $att^R$ $\breve{A}_{b,a}$=$(B,A(a))$.
The $att$ $A(a)$ is obtained from $A$ by replacing
the initial attribute of $A$ by $a$
and replacing
 the rule set $R_\#$ of $A$ by
the set $\{b(\pi 1)\rightarrow e\}$
where $e$ is some symbol of $\Delta$ of rank $0$.
Recall that due to Lemma~\ref{only}, we assume that only rules of $A$ for the root marker have ground right-hand sides.

The relabeling $B$ basically tests whether or not
an input tree $s$ is an element of
$\Omega_{\psi}$. If so, then $B$ outputs $s$ on input $s$;
otherwise no output is produced.
To construct $B$, 
recall that the is-dependencies of~$A$
can be computed in a bottom-up fashion, i.e.,
a
bottom-up-relabeling 
$\breve{B}$ whose states are subsets of $I\times S$ can be constructed such that 
for all $s\in T_\Sigma$, $s\Rightarrow_{\breve{B}}^* (ISD_A(s)) (s)$.
Note that the input and output alphabet of $\breve{B}$ are identical.
Furthermore all states of $\breve{B}$ are final states.
Formally, $B$ is obtained from $\breve{B}$
by defining that only states $p$ of $\breve{B}$ such that $\psi \subseteq p$
are final states.

Obviously, the variation of $\Omega_{\psi}$ is bounded if and only if 
for all $(b,a)\in \psi$,
the range of the corresponding
$att^R$ $\breve{A}_{b,a}$ is finite.
By Theorem~4.5 of~\cite{DBLP:journals/iandc/DrewesE98}, see also
Lemma~3.8 of~\cite{DBLP:journals/siamcomp/EngelfrietM03}, finiteness of ranges is decidable.
\end{proof}

Let $\psi$ be  a visiting pair set and
let the variation of $\Omega_\psi$ be bounded. 
Then a minimal integer $\kappa_{\psi}$ can be computed such that
 for all $a\in S$ such that $(b,a)\in \psi$ for some $b\in I$ and for all $s'\in \Omega_\psi$,
 $\text{height} ( \text{nf} (\Rightarrow_{A,s'}, a(\epsilon)) )\leq \kappa_{\psi}$.
More precisely, to compute $\kappa_{\psi}$, we simply need to compute the ranges of all the $atts^R$
constructed in the proof of Lemma~\ref{bounded computable}.
The ranges of these $atts^R$ are computable according to Theorem~4.5 of~\cite{DBLP:journals/iandc/DrewesE98}. 
Therefore it follows  along with Lemmas~\ref{visitin pair sets computable} and~\ref{bounded computable} that 
\[
\kappa=\text{max} \{\kappa_\psi \mid\psi \text{ is a visiting
pair set of } A  \text{ and the variation of }\Omega_\psi\text{ is bounded}\}
\]
is computable. 
Denote by 
$T_\Delta^\kappa [I(\{\epsilon\})]$ the set of all trees in $T_\Delta [I(\{\epsilon\})]$  
that are of height at most $\kappa$. Informally, the bottom-up relabeling $B$ constructed subsequently,
precomputes output subtrees of height at most $\kappa$ that contain the inherited attributes of the root of the current input subtree. Hence, the $att$ $A'$ of $\hat{A}$ does not need to compute those 
output subtrees itself; the translation  is continued immediately with  those 
output subtrees.

Formally,  $B=(\mathcal{P},\Sigma,\Sigma^B,P,R_B)$ where states in $\mathcal{P}$ are sets 
\[
\varrho\subseteq\{ (a,\xi) \mid a\in S\text{ and } \xi \in T_\Delta^\kappa [I(\{\epsilon\})]\}.
\]
The idea is that if
$s\in \text{dom}_{B} (\varrho)$ and
$(a,\xi) \in \varrho$ then $\xi=\text{nf} (\Rightarrow_{A,s}, a(\epsilon))$.
Conversely, let $a'$ be a synthesized attribute for which no tree $\xi'\in T_\Delta [I(\{\epsilon\})]$
exists such that $(a',\xi') \in \varrho$.
For such an attributes $a'$ and all $s\in \text{dom}_{B} (\varrho)$,
 the height of
$\text{nf} (\Rightarrow_{A,s}, a'(\epsilon))$  either exceeds $\kappa$ or
$\text{nf} (\Rightarrow_{A,s}, a'(\epsilon))$ is not in  $T_\Delta [I(\{\epsilon\})$.

The symbols in $\Sigma^B$ are of the form $\sigma_{\varrho_1,\dots,\varrho_k}$
where $\sigma \in \Sigma_k$ and $\varrho_1,\dots,\varrho_k \in \mathcal{P}$.
Let $\sigma (s_1,\dots,s_k)\in T_\Sigma$. Then $B$ relabels $\sigma$ by $\sigma_{\varrho_1,\dots,\varrho_k}$
if for $i\in [k]$, $s_i \in \text{dom}_{B} (\varrho_i)$.
To do so the rules of $B$ are defined as follows:
Let $\sigma \in \Sigma_k$, states $\varrho_1,\dots,\varrho_k\in\mathcal{P}$ 
and the rules in the set $R_\sigma$ of $A$ be given.
For all $i\in [k]$, we fix a tree $s_i'$ such that $s_i' \in \text{dom}_{B} (\varrho_i)$.
We remark that the following definition does not depend on the choice of $s_i'$. 
Given the trees $s_1',\dots, s_k'$  and  $R_\sigma$ we compute
the set $\varrho$ containing all pairs $(a,\xi)$
such that $\xi=\text{nf} (\Rightarrow_{A,\sigma(s_1',\dots,s_k')} ,a(\epsilon) )$ is a tree in $T_\Delta^\kappa [I(\{\epsilon\})]$.
With $\varrho$,  we define that 
$
\sigma (\varrho_1 (x_1),\dots, \varrho_k (x_k))\rightarrow \varrho (\sigma_{\varrho_1,\dots,\varrho_k} (x_1,\dots,x_k)) \in R_B.
$

\begin{example}\label{att example 3}
	Consider the $att$ $A_2$ in Example~\ref{att example 2}.
	Recall that all nodes that do not occur on the leftmost path of the input tree $s$ of $A_2$
	have bounded variation. Let $v$ be such a node. Then the visiting pair set
	at $v$ is either $\psi_e=\{ (a,b_e)\}$ or $\psi_d=\{ (a,b_d)\}$. 
	Assume the former. Then $\text{nf} (\Rightarrow_{A_2,s/v}, a(\epsilon))=b_e(\epsilon)$.
	If we know beforehand that $a$ produces $b_e(\epsilon)$ when translating $s/v$, then there is no need to process $s/v$ with $a$ anymore.
	This can be achieved via a bottom-up relabeling $B_2$ that precomputes all output trees of
	height at most $\kappa=\kappa_{\psi_e}= \kappa_{\psi_d}=1$.
	In particular the idea is that
	if for instance $v\in V(s)$ is  relabeled by $f_{\{(a,b_d(\epsilon))\}, \{a, b_e(\epsilon)\}}$
	then this means when translating
	$s/v.1$ and $s/v.2$, $a$ produces $b_d(\epsilon)$ and $b_e (\epsilon)$, respectively.
	For completeness, the full definition of $B_2$ is as follows:
	The states of $B_2$ (which are all also final states) are 
	\[
	\begin{array}{ccc}
		\varrho_1= \{ (a_e, \langle  e\rangle (\epsilon)),\ (a, b_e (\epsilon)) \} & \quad & \varrho_3= \{(a,b_d(\epsilon))\} \\
		\varrho_2= \{ (a_d, \langle  d \rangle (\epsilon)),\ (a, b_d (\epsilon))  \} & \quad & \varrho_4= \{ (a, b_e(\epsilon)) \}.
	\end{array}
	\]
	In addition to $e \rightarrow  \varrho_1 (e)$ and $d \rightarrow  \varrho_2 (d)$, $B_2$ also
	contains the rules
	\[
	\begin{array}{cccc ccc }
		f(\varrho (x_1), \varrho_1 (x_2)) & \rightarrow & \varrho_4 (f_{\varrho, \varrho_1} (x_1,x_2)) &\quad &
		 f(\varrho (x_1), \varrho_2 (x_2)) & \rightarrow & \varrho_3 (f_{\varrho, \varrho_2} (x_1,x_2))  \\
		f(\varrho (x_1), \varrho_3 (x_2)) & \rightarrow & \varrho_3 (f_{\varrho, \varrho_3} (x_1,x_2)) &\quad&
		 f(\varrho (x_1), \varrho_4 (x_2)) & \rightarrow & \varrho_4 (f_{\varrho, \varrho_4} (x_1,x_2)),
	\end{array}
	\]
	where $\varrho\in \{ \varrho_1,\dots,\varrho_4\}$.
	Using $B_2$ we will later construct the $att$ $A_2'$, that is, $A_2$ modified to make
	use of $B_2$ so that only nodes of the leftmost path
	of~$s_2$ are processed.
\end{example}

To construct  the $att^R$ $\hat{A}=(B,A')$, all that is left is to define $A'$.
We define  $A'=(S,I,\Sigma^B ,\Delta, a_0, R')$.
The rules of $A'$ for a symbol $\sigma_{\varrho_1,\dots,\varrho_k}\in \Sigma^B$
 are defined as follows.
First, we define for each 
state $\varrho$ of $B$ an auxiliary symbol $\langle \varrho \rangle$ of rank~$0$ with which we expand $A$.
For such a symbol, we define the rule
 $a(\pi)\rightarrow t \in R_{\langle \varrho \rangle}$ if the pair $(a,t[\pi\leftarrow \epsilon])$ occurs $\varrho$. Recall that $t[\pi\leftarrow \epsilon]$  denotes the substitution that replaces all occurrences
of $\pi$ in $t$ by  $\epsilon$.
Now consider the tree $\sigma(\langle\varrho_1 \rangle,\dots, \langle\varrho_k \rangle)$ and
let 
$
\eta=\text{nf} (\Rightarrow_{A, \sigma(\langle\varrho_1 \rangle,\dots, \langle\varrho_k \rangle)},a(\epsilon))$
be a tree in $T_\Delta [I(\{\epsilon\}) \cup S([k])]$.
Then we define the rule
$a(\pi)\rightarrow \eta'\in R'_{\sigma_{\varrho_1,\dots,\varrho_k}}$ for $A'$ where $\eta'$
is the tree such that $\eta'[\pi\leftarrow \epsilon] =\eta$.
Finally, we define $R'_\# =R_\#$.
It should be clear that the following holds.

\begin{lemma}\label{associate equal}
	The $att$ $A$ and its associated $att^R$ $\hat{A}=(B,A')$ are equivalent.
\end{lemma}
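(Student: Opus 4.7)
The plan is to prove the equivalence by induction on the structure of the input tree, after establishing two invariants about the construction of $B$ and $A'$.

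First, I would establish the following invariant of the bottom-up relabeling $B$: for every $s\in T_\Sigma$ with $s\Rightarrow_B^* \varrho(s')$, the state $\varrho$ equals
\[
\{(a,\xi)\mid a\in S,\ \xi=\text{nf}(\Rightarrow_{A,s},a(\epsilon))\text{ and }\xi\in T_\Delta^\kappa[I(\{\epsilon\})]\}.
\]
This follows by induction on $s$: for $s=\sigma(s_1,\dots,s_k)$ with $s_i\in\text{dom}_B(\varrho_i)$, every normal form $\text{nf}(\Rightarrow_{A,s},a(\epsilon))$ is determined by the rules in $R_\sigma$ together with the collection $\{\text{nf}(\Rightarrow_{A,s_i},a'(\epsilon))\mid a'\in S,\ i\in[k]\}$, and the pairs in $\varrho_i$ already record those of these normal forms that fit into $T_\Delta^\kappa[I(\{\epsilon\})]$; since any additional contribution from $s_i$ would exceed height~$\kappa$, it cannot contribute to a pair stored in $\varrho$. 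This is precisely why the construction of $\varrho$ via the witness trees $s_i'$ is independent of the choice of witnesses.

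Second, I would verify the analogous invariant for the rules of $A'$: for any $\sigma_{\varrho_1,\dots,\varrho_k}\in\Sigma^B$ and any $s=\sigma(s_1,\dots,s_k)$ with $s_i\in\text{dom}_B(\varrho_i)$, the rule for $a$ in $R'_{\sigma_{\varrho_1,\dots,\varrho_k}}$ reproduces one ``compressed'' step of $A$ at the root of $s$. Indeed, the auxiliary symbols $\langle\varrho_i\rangle$ and their rules encode exactly the information stored in $\varrho_i$, so that the derivation $a(\epsilon)\Rightarrow_{A,\sigma(\langle\varrho_1\rangle,\dots,\langle\varrho_k\rangle)}^\ast \eta$ corresponds step-by-step to the longest derivation of $A$ on $s$ starting from $a(\epsilon)$ that only uses the precomputed normal forms stored in the $\varrho_i$'s. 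The resulting tree $\eta\in T_\Delta[I(\{\epsilon\})\cup S([k])]$ has output symbols and inherited-attribute leaves stemming from those finished computations, and synthesized-attribute leaves $a''(\pi i)$ for the subtrees $s_i$ whose $a''$-normal form was \emph{not} precomputed. The rule $a(\pi)\to\eta'$ records exactly this in $A'$.

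Third, combining these two ingredients, I would show by induction on $v\in V(s)$ (from root to leaves) that for each attribute $\alpha$ that $A$ processes at $v$ on input $s$, the corresponding derivation of $A'$ on $\hat{s}=B(s)$ starting from $\alpha(1.v)$ produces exactly the same contribution to the output, and conversely. Whenever $A'$ takes one step from $a(1.v)$ to $\eta'[\pi\leftarrow 1.v]$, $A$ takes a (possibly longer) sequence of steps that, by the second invariant, compiles to the same output and hands control to the same child attribute occurrences. Since $R'_\#=R_\#$, the roots agree, and ranging over all inputs $s\in\text{dom}(A)$ yields $\tau_A=\tau_{\hat A}$.

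The main technical obstacle is ensuring that $\eta=\text{nf}(\Rightarrow_{A,\sigma(\langle\varrho_1\rangle,\dots,\langle\varrho_k\rangle)},a(\epsilon))$ is always well-defined and finite, even if $A$ is circular. The point is that every descent into an $\langle\varrho_i\rangle$ immediately consumes one of the finitely many pairs in $\varrho_i$ (or terminates with a synthesized leaf $a''(\pi i)$), so the rewriting within this auxiliary tree cannot loop; any genuine circularity of $A$ survives only as unexpanded placeholders $a''(\pi i)$, which is precisely the behaviour we want $A'$ to inherit from $A$.
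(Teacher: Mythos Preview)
The paper gives no proof of this lemma at all; it simply prefaces the statement with ``It should be clear that the following holds.'' Your sketch therefore supplies what the paper omits, and the overall plan—first pinning down the semantics of the states of $B$, then showing that each rule of $A'$ compresses a finite segment of an $A$-derivation, and finally a root-to-leaf induction comparing derivations of $A$ on $s$ with derivations of $A'$ on $B(s)$—is exactly the right way to unpack the construction.

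One point deserves tightening. Your termination argument for $\eta=\text{nf}(\Rightarrow_{A,\sigma(\langle\varrho_1\rangle,\dots,\langle\varrho_k\rangle)},a(\epsilon))$ claims that each descent ``consumes one of the finitely many pairs in $\varrho_i$,'' but nothing is consumed: the rule for $a'$ at $\langle\varrho_i\rangle$ can be applied again later, so a naive counting argument does not rule out a cycle such as $a'(i)\Rightarrow^* b(i)\Rightarrow a'(i)$. The correct justification is a simulation argument: every step the abstraction takes at depth~$0$ or~$1$ corresponds to a (possibly longer) step sequence of $A$ on the concrete tree $\sigma(s_1,\dots,s_k)$ that visits the very same attribute/position pairs at depth~$\le 1$; hence a loop in the abstraction would witness a genuine cycle of $A$ on every concrete instance. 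This is what you want to say in your last paragraph—replace the ``consumption'' picture with this bisimulation, and the argument goes through. Otherwise the sketch is sound.
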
 

Furthermore, since all output subtrees of height at most $\kappa$
are precomputed, attributes of $A'$  only process nodes whose
variation  with respect to $A$ are unbounded. To illustrate this point, consider the following example.

\begin{example}\label{att example 4}
Given the relabeling $B_2$ constructed in Example~\ref{att example 3}, we first construct the $att$ 
$A_2'$ from the $att$ $A_2$ of Example~\ref{att example 2} so that we can make use of $B_2$.
To begin with, the input alphabet of $A_2'$ consists of all output symbols of $B_2$.
In particular, note that the output alphabet of $B_2$ contains all input symbols of $A_2$ that are of rank~$0$.
Thus, the rules of $A_2'$ are defined as follows: First of all, every rule of $A_2$ for a symbol of rank $0$
or the root marker is carried over to $A_2'$.
The remaining rules of $A_2'$ are defined as follows:
The set $R_{f_{\varrho_1,\varrho_2}}$ contains the rules
\[
\begin{array}{ccc c ccc c ccc}
	 a_d (\pi) & \rightarrow & f (\langle e \rangle (\pi)) & \quad &   	b_d (\pi 1) & \rightarrow & a_d (\pi 1) &\quad & 	b_d(\pi 2) &  \rightarrow & b_d(\pi) \\
	                            a_e (\pi) & \rightarrow & g (\langle e \rangle (\pi))  & \quad & b_e (\pi 1) & \rightarrow & \langle e \rangle (\pi)  &\quad & 	b_e(\pi 2) & \rightarrow & b_e(\pi) \\
		                        
		                      a(\pi) & \rightarrow & b_d(\pi)\ & \quad &  \langle e \rangle (\pi 1) &\rightarrow &   \langle e \rangle (\pi) & \quad & \langle d \rangle (\pi 1) &\rightarrow & \langle d \rangle (\pi), \\
\end{array}
\]
 the set $R_{f_{\varrho_2,\varrho_2}}$ contains the rules
 \[
 \begin{array}{ccc c ccc c ccc }
 a_d (\pi) & \rightarrow & f (\langle d \rangle (\pi)) & \quad &   	b_d (\pi 1) & \rightarrow & \langle d \rangle (\pi) &\quad & 	b_d(\pi 2) &  \rightarrow & b_d(\pi)\\ 
 	                            a_e (\pi) & \rightarrow & g (\langle d \rangle (\pi))  & \quad & b_e (\pi 1) & \rightarrow &  a_e (\pi 1) &\quad & 	b_e(\pi 2) & \rightarrow & b_e(\pi) \\
 		                        a(\pi) & \rightarrow & b_d(\pi) & \quad &  \langle e \rangle (\pi 1) &\rightarrow &   \langle e \rangle (\pi) & \quad & \langle d \rangle (\pi 1) &\rightarrow & \langle d \rangle (\pi),\\
 \end{array}
 \]
 the set $R_{f_{\varrho_3,\varrho_2}}$ contains the rules
 \[
 \begin{array}{l ccc c ccc c ccc }
  	a_d (\pi) & \rightarrow & f (a_d (\pi1)) & \quad &   	b_d (\pi 1) & \rightarrow & a_d(\pi 1) &\quad & 	b_d(\pi 2) &  \rightarrow & b_d(\pi)\\
 	                             a_e (\pi) & \rightarrow & g (a_d (\pi1))  & \quad & b_e (\pi 1) & \rightarrow &  a_e (\pi 1) &\quad & 	b_e(\pi 2) & \rightarrow & b_e(\pi) \\
 		                          a(\pi) & \rightarrow & b_d(\pi) & \quad &  \langle e \rangle (\pi 1) &\rightarrow &   \langle e \rangle (\pi) & \quad & \langle d \rangle (\pi 1) &\rightarrow & \langle d \rangle (\pi) \\
 \end{array}
 \]
 and the set $R_{f_{\varrho_4,\varrho_2}}$ contains the rules
 \[
 \begin{array}{ccc c ccc c ccc}
 	 a_d (\pi) & \rightarrow & f (a_e (\pi1)) & \quad &   	b_d (\pi 1) & \rightarrow & a_d(\pi 1) &\quad & 	b_d(\pi 2) &  \rightarrow & b_d(\pi)\\  
 	                               a_e (\pi) & \rightarrow & g (a_e (\pi1))  & \quad & b_e (\pi 1) & \rightarrow &  a_e (\pi 1) &\quad & 	b_e(\pi 2) & \rightarrow & b_e(\pi) \\
 	                               a(\pi) & \rightarrow & b_d(\pi) & \quad &  \langle e \rangle (\pi 1) &\rightarrow &   \langle e \rangle (\pi) & \quad & \langle d \rangle (\pi 1) &\rightarrow & \langle d \rangle (\pi).\\
 \end{array}
 \]
 We remark that for $\varrho_i$ with $1\leq i \leq 4$, the rule sets $R_{f_{\varrho_i,\varrho_2}}$
 and $R_{f_{\varrho_i,\varrho_3}}$ are identical.
 As for the remaining rules, the rule sets $R_{f_{\varrho_i,\varrho_1}}$ and $R_{f_{\varrho_i,\varrho_4}}$ are obtained from $R_{f_{\varrho_i,\varrho_2}}$
 by replacing the rule $a(\pi) \rightarrow b_d(\pi)$ by  $a(\pi) \rightarrow b_e(\pi)$.

 This concludes the construction of the $att^R$ $\hat{A}_2=(B_2, A_2')$.
It is easy to see that on input $s'$, i.e., the tree obtained from $s\in T_\Sigma$ via the relabeling $B_2$, attributes of $A_2'$ only process nodes occurring on the left-most
path of $s'$.
\end{example}

Recall that by Lemma~\ref{necessary condition}, the single path property is a necessary condition for
the existence of a $dt^R$ equivalent to $A$. We will now show how to test whether $A$ has the single path property using its associated
$att^R$ $\hat{A}=(B,A')$.

\begin{lemma}\label{necessary condition decidable}
	It is decidable whether or not $A$ has the single path property.
	In the affirmative case, its associated
	$att^R$ $\hat{A}=(B,A')$ has the string-like property.
\end{lemma}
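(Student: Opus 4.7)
The plan is to reduce the single path property of $A$ to a property of the associated $att^R$ $\hat{A}=(B,A')$ that can be tested via emptiness of a suitable device.

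First I would establish the correspondence that $A$ has the single path property if and only if $\hat{A}$ has the string-like property. By construction of $\hat{A}$, the look-ahead $B$ assigns to each node $v$ a state $\varrho$ that tabulates exactly those normal forms $\xi=\text{nf}(\Rightarrow_{A,s/v},a(\epsilon))$ of height at most $\kappa$. Consequently, in the rules of $A'$ obtained from computing $\text{nf}(\Rightarrow_{A,\sigma(\langle\varrho_1\rangle,\dots,\langle\varrho_k\rangle)},a(\epsilon))$, every descending call $a(\pi i)$ is replaced by the precomputed $\xi$ whenever $(a,\xi)\in\varrho_i$. Hence $A'$ actually descends into $v$ with some attribute only when the variation at $v$ with respect to $A$ is unbounded. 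This yields the equivalence, from which the second part of the lemma follows immediately.

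Given this equivalence, I would decide the single path property by deciding the string-like property of $\hat{A}$, adapting the marking trick of Lemma~\ref{visitin pair sets computable}. Construct an $att^R$ $\breve{A}$ whose input alphabet extends $\Sigma^B$ by two kinds of marked copies $\sigma_{\pm_1}$ and $\sigma_{\pm_2}$ of each symbol, and whose bottom-up relabeling restricts inputs to trees containing exactly one $\pm_1$-marked node, exactly one $\pm_2$-marked node, and such that neither marked node is an ancestor of the other; all these are recognizable constraints. The rules of $\breve{A}$ mirror those of $A'$, except that whenever an attribute visits a $\pm_i$-marked node it emits an additional output symbol $m_i$. The string-like property of $\hat{A}$ then fails precisely when $\breve{A}$ produces an output containing both $m_1$ and $m_2$; since such outputs form a recognizable tree language, the test reduces to an emptiness check decidable by Proposition~\ref{empty test}.

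The main obstacle will be the first step: rigorously verifying, by induction on derivations, that on every relabeled input $\hat{s}$, an attribute of $A'$ processes a node $v$ if and only if the variation at $v$ on input $s$ with respect to $A$ is unbounded. The ``only if'' direction hinges on showing that when $(a,\xi)\in\varrho$ the corresponding rule in $A'$ genuinely inlines $\xi$ in place of $a(\pi i)$ and thereby avoids processing $v$ via $a$; the ``if'' direction relies on the fact that when $\varrho$ contains no pair for an attribute $a$ visiting $v$, no substitution from the auxiliary $\langle\varrho\rangle$-rules can short-circuit the descent, so $a$ must genuinely be invoked on $v$. The remaining steps (the marker construction and the emptiness test) are routine adaptations of techniques already employed in the paper.
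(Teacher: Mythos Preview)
Your plan matches the paper's approach closely: reduce the question to whether $\hat A$ is string-like, and decide the latter by a marking construction in the spirit of Lemma~\ref{visitin pair sets computable}. The differences are cosmetic. The paper marks two \emph{sibling} nodes rather than two incomparable ones (equivalent, since two incomparable processed nodes force two processed siblings below their least common ancestor). And whereas you keep the full output of $A'$ and appeal to Proposition~\ref{empty test}, the paper suppresses all output except at marked nodes and at ground $\#$-rules, so that the resulting $att^R$ has \emph{finite} range, which is then computed directly via \cite{DBLP:journals/iandc/DrewesE98}; either mechanism works.

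One caution: the per-tree biconditional you aim for in your final paragraph---``$A'$ processes $v$ on $\hat s$ if and only if the variation at $v$ on input $s$ is unbounded''---is stronger than what holds and stronger than what is needed. The ``only if'' direction is correct (and is precisely the paper's assertion), but the ``if'' direction can fail for a fixed $s$: even when the visiting-pair set $\psi$ at $v$ has unbounded variation, the particular subtree $s/v$ may still satisfy $\text{height}(\text{nf}(\Rightarrow_{A,s/v},a(\epsilon)))\le\kappa$, so $(a,\xi)\in\varrho$ and $A'$ does short-circuit the descent at $v$. Your justification of the ``if'' direction (``$\varrho$ contains no pair for $a$'') tacitly assumes the normal form at $s/v$ itself exceeds $\kappa$, which unbounded variation of $\Omega_\psi$ does not guarantee. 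Fortunately only the ``only if'' direction is needed: it yields single-path $\Rightarrow$ string-like (the lemma's second assertion), and the paper's test is literally a test for string-like. The converse implication is not argued in the paper either and is not used downstream, since all subsequent constructions rely on the string-like property directly.
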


\begin{proof}
Consider the $att^R$ $\hat{A}=(B,A')$ associated with $A$.
Recall that by Lemma~\ref{associate equal}, $\hat{A}$ and $A$ are equivalent.
Let  $s\in \text{dom} (A)=\text{dom} (\hat{A})$ and let $B$ relabel $s$ into $s'$.
By construction of $\hat{A}$, if nodes $v_1, v_2 \in V(s')$ with the same parent node
exist such that on input $s'$, attributes of $A'$ process both $v_1$ and $v_2$
then  $A$ does not have the single path property.
Thus, to test whether $A$ has the single path property,
we construct the following $att^R$ $\breve{A} = (\breve{B},\breve{A}')$ from $\hat{A}=(B,A')$.
The idea is similar to the idea in the proof in Lemma~\ref{visitin pair sets computable}.
Input trees of $\breve{A}$ are trees
$s\in \text{dom} (\hat{A})$ where two nodes $v_1, v_2$ with the same parent node
are annotated by flags $f_1$ and $f_2$ respectively.
The relabeling $\breve{B}$  checks whether or not
the flags $f_1$ and $f_2$ occur both exactly once in the input tree $s$
and whether or not the nodes $v_1$ and $v_2$ at which these flags appear have the same  parent node.
If not then $\breve{B}$ produces no output.
Additionally, $\breve{B}$ relabels input nodes as $B$ would, where nodes annotated with flags are relabeled in
the obvious way.
The $att$ $\breve{A}'$ simulates $A'$ such that
output symbols are only produced if an annotated symbol is processed by a synthesized attribute
or if a rule, where the right-hand side is ground, is applied.
In particular, for $i=1,2$ we introduce a special symbol $g_i$ which is only outputted
if the node with the flag $f_i$ is processed.
Hence, we simply need to check whether there is a tree with occurrences of both  
$g_1$ and $g_2$  in the range of $\breve{A}$.
By construction of the rules of $\breve{A}$, the range of $\breve{A}$ is finite. Thus it can be computed~\cite{DBLP:journals/iandc/DrewesE98,DBLP:journals/siamcomp/EngelfrietM03}.
\end{proof}

\subsection{From Tree to String Transducers and Back}\label{tree to string}
In the following let $\hat{A}=(B,A')$ be a fixed $att^R$ with the string-like property.
In this section we show how to construct an equivalent $dt^R$ if it exists. 
To do so, we construct a two-way transducer $T_W$ from $\hat{A}$ such that
a one-way transducer $T_O$ equivalent to $T_W$ exists if and only if 
a $dt^R$ equivalent to $\hat{A}$ exists.
Thus, due to the procedure in~\cite{DBLP:conf/lics/FiliotGRS13}, it is decidable 
whether or not a $dt^R$ equivalent to $\hat{A}$. Finally we show how to construct such a $dt^R$ from $T_O$.

\subsubsection{Converting a Tree Transducer into a String Transducer}
Recall that two-way transducers are essentially attributed tree transducers with monadic input and monadic output\footnote{
	Note that the two-way transducers in~\cite{DBLP:conf/lics/FiliotGRS13}
	are defined with 
	a \emph{left end marker} $\vdash$ and a \emph{right end marker} $\dashv$. While the left end marker $\vdash$
	corresponds to the root marker of our tree transducers, the right end marker 
	$\dashv$ has no counterpart. Monadic
	trees can be considered as strings with specific end symbols, i.e. symbols in $\Sigma_0$, 
	that only occur at the end of strings. Thus, $\dashv$ is not required.
	Conversely, two-way  transducers can  test if exactly one end symbol occurs in the input string and if
	it is the rightmost symbol. Hence, two-way transducers
	can simulate tree transducers with monadic input
	and output.
}.
Consider a tree $s\in \text{dom} (\hat{A})$ and let $B$ relabel $s$ into $s'$.
Informally, as on input $s'$, attributes of $A'$ only process nodes occurring on a single path $\rho$ of $s$,
the basic idea is to
`cut off' all nodes from $s'$ not occurring in $\rho$. 
This way, we
effectively
make input trees of $A'$ monadic.

Recall that $A'=(S,I,\Sigma^B,\Delta,a_0,R')$.
Formally, $T_W=(\tilde{S}, \tilde{I},\tilde{\Sigma},\Delta,\tilde{a}, \tilde{R})$, where
$\tilde{S}=S\cup \{\tilde{a}\}$ and $\tilde{a}\notin S$.
We define $\tilde{I}=I\cup I'$ where $I'$ is a set of auxiliary attributes which we define later.
 The set $\tilde{\Sigma}$ is obtained by converting the input alphabet of $A'$ to symbols of rank $1$.
To this end, we first define that the
input alphabet $\tilde{\Sigma}$ of $T_W$ contains all symbols in $\Sigma^B_0$, 
i.e., $\Sigma^B_0\subseteq \tilde{\Sigma}$.
Now consider a symbol $\sigma\in \Sigma^B_k$ with $k>0$. Given such a symbol, we define that the
$\tilde{\Sigma}$ contains the symbols $\langle \sigma,1 \rangle,\dots, \langle \sigma,k \rangle$ of rank~$1$. 
Informally, the idea is that a symbol of the form $\langle \sigma,i \rangle$ 
indicates that the next node is to be
interpreted as the $i$-th child. Thus, trees over  $\tilde{\Sigma}$ are basically encodings of prefixes of
trees over $\Sigma^B$.
For instance,
let $f\in \Sigma^B_2$, $g\in \Sigma^B_1$ and $e\in \Sigma^B_0$ and denote by $\top$ a symbol of rank $0$ not in $\Sigma^B$.
Note that in the following  we omit parentheses  for  monadic trees for better readability.
Then the tree $\langle f,2 \rangle  \langle f,1 \rangle  \langle f,1 \rangle  e$ encodes the prefix
$f(\top, f(f (e, \top) , \top) )$
while the tree
$\langle f,1\rangle \langle g,1\rangle e$ encodes  $f (g (e), \top)$.
The basic idea is that since  attributes of $A'$ only process nodes occurring on a single  path of the input 
tree,
such prefixes are sufficient to simulate $A'$.

In the following, we define the rules of $T_W$.
Due to Lemma~\ref{only}, assume that only rules of $A'$ for $\#$ have ground right-hand sides.
Let $A'$ contains the rule $a(\pi) \rightarrow t\in R'_{\sigma}$ where $\sigma\in \Sigma^B_k$, $k>0$, and $a\in S$. Furthermore, let $\alpha\in S$ such that $\alpha (\pi i)$ with $i\in [k]$ occur in $t$.
Then $T_W$ contains the rule
$
a(\pi) \rightarrow t[\pi i\leftarrow \pi 1] \in \tilde{R}_{\langle \sigma, i\rangle},
$ 
where $[\pi i\leftarrow \pi 1]$ denotes the substitution
that substitutes occurrences of $\pi i$ by~$\pi 1$.
If there are no occurrences of synthesized attributes in $t$ then we define
$a(\pi) \rightarrow t \in \tilde{R}_{\langle \sigma, i\rangle}$. 

Similarly, if $A'$ contains the rule
$b(\pi i) \rightarrow t'\in R'_{\sigma}$  where $b\in I$ and
for some $\alpha\in S$, $\alpha (\pi i)$ occurs in $t'$, then $T_W$ contains the rule
$
b(\pi 1) \rightarrow t'[\pi i\leftarrow \pi 1] \in \tilde{R}_{\langle \sigma,i\rangle }.
$ 
If no synthesized attributes occur in $t'$ then
$b(\pi 1) \rightarrow t' \in \tilde{R}_{\langle \sigma',i\rangle }$. 
We remark that since $\hat{A}$ has the string-like property,
$A'$ will never apply a rule of the form $b(\pi i) \rightarrow t'$ where $\alpha (\pi j)$
with $j\neq i$ occurs in $t'$. Thus, we do not need to consider such rules.

Recall that $\Sigma^B_0= \tilde{\Sigma}_0$. For all $\sigma\in \Sigma^B_0$, we define $R'_{\sigma} \subseteq \tilde{R}_{\sigma}$.
Finally, we define 
$R'_\# \subseteq \tilde{R}_\#$.
Clearly, the rules defined above can be used to simulate~$A'$.

As we have defined that a fresh attribute $\tilde{a}$ as the initial attribute of $T_W$ instead of $a_0$,
the reader might have guessed that we are not finished yet.
For the correctness of subsequent arguments, 
we require a technical detail: 
We require that
 the domain of $T_W$ only consists of trees  $\tilde{s}$
for which a tree $s\in \text{range} (B)$ exists
such that $\tilde{s}$ encodes a prefix of $s$.
In particular, we can only guarantee that a one-way transducer equivalent to
$T_W$ exists if the domain of $T_W$ only consists of such trees.
If $\tilde{s}$ encodes a prefix of $s\in \text{range} (B)$ then
we also say that $\tilde{s}$ \emph{corresponds to}~$s$.

\begin{example}
Consider  the $att^R$ $\hat{A}_2=(B_2, A_2')$ constructed in Example~\ref{att example 4}
and in particular its relabeling $B_2$ constructed in Example~\ref{att example 3}.
Furthermore, consider the trees 
$
\tilde{s}_1=\langle f_{\varrho_1,\varrho_2},1\rangle    d
$ 
and 
$ 
\tilde{s}_2=\langle f_{\varrho_1,\varrho_2},2 \rangle   d.
$
Here, the tree $\tilde{s}_2$ encodes the tree $f_{\varrho_1,\varrho_2} (\top ,d)$
which is a prefix of the output tree $s_2=f_{\varrho_1,\varrho_2} (e,d) \in \text{range} (B_2)$.
Hence, $\tilde{s}_2$ 
corresponds to $s_2$.
The tree $\tilde{s}_1$ however encodes  the tree $f_{\varrho_1,\varrho_2} (d,\top)$.
By definition of the relabeling~$B_2$, there is no output tree in the range of $B_2$ of
which $f_{\varrho_1,\varrho_2} (d,\top)$ is a prefix. 
Specifically, this is because
a node of an output tree of of $B_2$ is only labeled by
$f_{\varrho_1,\varrho_2}$ if its left subtree is $e$.
Hence,
$\tilde{s}_1$ corresponds to no output tree of~$B_2$.
\end{example}

To check whether or not for a given input tree $\tilde{s}$ of $T_W$ an output tree~$s\in \text{range}(B)$ exists such that
$\tilde{s}$ corresponds to $s$, we proceed as follows.
As $B$ is a relabeling, its range is effectively recognizable, i.e.,
a bottom-up tree automaton $\bar{B}$ that accepts precisely the trees in $\text{range} (B)$ exists and
can be constructed.

A \emph{bottom-up tree automaton} is a bottom-up relabeling
where the input and output alphabet are identical and all rules are of
the form 
$
\sigma (p_1 (x_1),\dots, p_k (x_k))\rightarrow p (\sigma (x_1,\dots,x_k)),
$
where $\sigma$ is a symbol and $p,p_1,\dots,p_k$ are states of the automaton.
In the following we allow bottom-up tree automata to be nondeterministic.
The language accepted by a bottom-up tree automaton is its domain.

Given the automaton $\bar{B}$, we construct a  bottom-up tree automaton $\bar{B}'$ that accepts exactly those trees~$\tilde{s}\in T_{\tilde{\Sigma}}$
for which a tree $s\in \text{range}(B)$ exists such that $\tilde{s}$ corresponds to~$s$.
W.l.o.g. assume that for all states $l$ of $\bar{B}$, $\text{dom}_B (l)\neq \emptyset$.
We define that if for $\sigma\in \Sigma^B_k$, the rule
$
\sigma (l_1 (x_1),\dots,l_k  (x_k))\rightarrow l(\sigma (x_1,\dots,x_k))
$
is included in $\bar{B}$ then 
$
\langle \sigma,i\rangle (l_i (x_1)) \rightarrow l(\langle \sigma,i\rangle (x_1))
$
is a rule of $\bar{B}'$. Note that $\bar{B}'$ may be nondeterministic.
We define that  $\bar{B}'$ has the same final states as $\bar{B}$.

Now, let $\tilde{s}$  be the input tree of $T_W$.
Using $\bar{B}'$, we check whether or not a tree $s\in \text{range} (B)$ exists such that
$\tilde{s}$ corresponds to $s$ with the following procedure.
Consider the tree $\tilde{s}^\#$. 
Informally, 
$T_W$ starts by going to the leaf of $\tilde{s}^\#$ 
and subsequently simulating $\bar{B}'$ (without producing any output symbols).
Note that as $\tilde{s}^\#$ is a monadic tree, it has precisely one leaf.
To simulate $\bar{B}'$,  the states of $\bar{B}'$ are essentially considered as inherited attributes.
During the simulation of $\bar{B}'$,  $T_W$ goes  back to the root marker in a bottom-up fashion.
If it reaches the root marker with a final state of $\bar{B}'$; in other words if  a tree $s\in \text{range} (B)$ exists such that
$\tilde{s}$ corresponds to $s$;
then $T_W$ starts to simulate $A'$.

Recall that $\tilde{a}$ is the initial attribute of $T_W$.
First of all, the rules for `going to the leaf of $\tilde{s}^\#$' are defined in 
a straight forward manner:
Let $\sigma$ be  a symbol in $\tilde{\Sigma}$ of rank~$1$.
Then we define that $T_W$ contains the rule
$
\tilde{a}(\pi) \rightarrow \tilde{a} (\pi 1) \in \tilde{R}_{\sigma}
$
To start the simulation of $\bar{B}'$, we introduce the rule
$
\tilde{a}(\pi) \rightarrow l (\pi) \in \tilde{R}_{e}
$
for $e\in \tilde{\Sigma}_0$ if the rule $e\rightarrow l (e)$ occurs in $\bar{B}'$.
The remaining rules to simulate $\bar{B}'$ are defined in a
straight forward manner as well:
Let $\sigma$ be a symbol in $\tilde{\Sigma}$ of rank $1$.
Then a rule $\sigma (l' (x_1)) \rightarrow l(\sigma (x_1))$ of $\bar{B}'$
induces the rule
$
l' (\pi 1) \rightarrow l (\pi) \in \tilde{R}_{\sigma}
$
of $T_W$.

Let $l_0$ be a final state of $\bar{B}'$.
It should be clear that if $T_W$ reaches the root marker with $l_0$, i.e., if
$
\tilde{a} (1) \Rightarrow_{T_W, \tilde{s}^\#}^* l_0 (1),
$ 
then this means that $\tilde{s}$ is accepted by $\bar{B}'$. 
In this case $T_W$ begins to simulate $A'$. To begin the simulation, we define that
$T_W$ also contains the rule
$
l_0 (\pi 1) \rightarrow a_0 (\pi 1) \in \tilde{R}_\#.
$ 
Recall that $a_0$ is the initial attribute of $A'$.
To illustrate our procedure, consider the following example.

\begin{example}
	Consider the $att^R$ $\hat{A}_2=(B_2, A_2')$ constructed in Example~\ref{att example 4}.
	We now convert this $att^R$ into a two-way transducer $T_W$ 
	using the  procedure above.
	To do so, we require that
	the domain of $T_W$ only consists of trees~$\tilde{s}$ for which a tree $s\in \text{range} (B_2)$ exists such that $\tilde{s}$ corresponds to $s$.
	Consider the following bottom-up tree automaton $\bar{B}$ which
	recognizes the range of $B_2$. Recall that the range of a relabeling is effectively recognizable;
	hence an automaton such as $\bar{B}$ can always be obtained.
	In particular, the automaton $\bar{B}$ 
	is obtained from $B_2$ in a straight-forward manner. 
	In addition to the rules $e \rightarrow  \varrho_1 (e)$ and $d \rightarrow  \varrho_2 (d)$,
	$\bar{B}$ also contains the rules 
	\[
	\begin{array}{ccc c ccc }
		f_{\varrho, \varrho_1}(\varrho (x_1), \varrho_1 (x_2)) & \rightarrow & \varrho_4 (f_{\varrho, \varrho_1} (x_1,x_2))  &\ &
		 f_{\varrho, \varrho_2}(\varrho (x_1), \varrho_2 (x_2)) & \rightarrow & \varrho_3 (f_{\varrho, \varrho_2} (x_1,x_2))  \\
		f_{\varrho, \varrho_3}(\varrho (x_1), \varrho_3 (x_2)) & \rightarrow & \varrho_3 (f_{\varrho, \varrho_3} (x_1,x_2)) & \ &
		 f_{\varrho, \varrho_4} (\varrho (x_1), \varrho_4 (x_2)) & \rightarrow & \varrho_4 (f_{\varrho, \varrho_4} (x_1,x_2)),\\
	\end{array}
	\]
	where $\varrho\in \{ \varrho_1,\dots,\varrho_4\}$.
	All states of $\bar{B}$
	are final states.	
	Given $\bar{B}$ we first construct a  bottom-up tree automaton~$\bar{B}'$ recognizing the set of all trees which correspond to some tree in $\text{range} (B_2)$.
	 For $\varrho\in \{ \varrho_1,\dots,\varrho_4\}$, it contains the rules
	\[
	\begin{array}{ccc c ccc}
		\langle f_{\varrho, \varrho_1},1 \rangle (\varrho (x_1)) & \rightarrow & \varrho_4 ( \langle f_{\varrho, \varrho_1}, 1 \rangle (x_1))  &\ &
		 \langle f_{\varrho, \varrho_2}, 1 \rangle (\varrho (x_1)) & \rightarrow & \varrho_3 ( \langle f_{\varrho, \varrho_2},1 \rangle (x_1))  \\
		\langle f_{\varrho, \varrho_3},1 \rangle (\varrho (x_1)) & \rightarrow & \varrho_3 (\langle f_{\varrho, \varrho_3},1\rangle  (x_1)) 
		&\ &
		 \langle f_{\varrho, \varrho_4},1\rangle  (\varrho (x_1)) & \rightarrow & \varrho_4 (\langle f_{\varrho, \varrho_4},1\rangle (x_1))\\
		\langle f_{\varrho, \varrho_1},2 \rangle (\varrho_1 (x_1)) & \rightarrow & \varrho_4 ( \langle f_{\varrho, \varrho_1}, 2 \rangle (x_1))  &\ &
		 \langle f_{\varrho, \varrho_2}, 2 \rangle (\varrho_2 (x_1)) & \rightarrow & \varrho_3 ( \langle f_{\varrho, \varrho_2},1 \rangle (x_1))  \\
		\langle f_{\varrho, \varrho_3},2 \rangle (\varrho_3 (x_1)) & \rightarrow & \varrho_3 (\langle f_{\varrho, \varrho_3},2\rangle  (x_1)) &\ &
		 \langle f_{\varrho, \varrho_4},2\rangle  (\varrho_4 (x_1)) & \rightarrow & \varrho_4 (\langle f_{\varrho, \varrho_4},1\rangle (x_1)),\\
	\end{array}
	\]
	as well as  $e \rightarrow  \varrho_1 (e)$ and $d \rightarrow  \varrho_2 (d)$.
	As with $\bar{B}$, all states of $\bar{B}'$ are final states.
	
	To ensure that the domain of $T_W$ fits our requirement, $T_W$ first simulates~$\bar{B}'$.
	To start with, we define that the fresh attribute $\tilde{a}$ is the initial attribute of
	$T_W$.
	Denote $\tilde{\Sigma}$ the input alphabet of $T_W$. 
	In the following,  $\varrho_1,\dots,\varrho_4$ are considered inherited attributes.
	Let $\sigma\in \tilde{\Sigma}_1$. Then we define
	that $T_W$ contains the rule
	$\tilde{a} (\pi) \rightarrow \tilde{a} (\pi 1)\in \tilde{R}_{\sigma}$.
	Given $\bar{B}'$ and $e,d\in \tilde{\Sigma}_0$, we obtain the rules
	$\tilde{a}(\pi) \rightarrow \varrho_1 (\pi) \in \tilde{R}_e$ and 
	$\tilde{a}(\pi) \rightarrow \varrho_2 (\pi) \in \tilde{R}_d$.
	Furthermore, we obtain the rules
	\[
	\begin{array}{ccc}
		\varrho(\pi 1) \rightarrow \varrho_4 (\pi) \in \tilde{R}_{\langle f_{\varrho, \varrho_1}, 1 \rangle} & \quad  & \varrho(\pi 1) \rightarrow \varrho_3 (\pi) \in \tilde{R}_{\langle f_{\varrho, \varrho_2}, 1 \rangle}  \\
		\varrho(\pi 1) \rightarrow \varrho_3 (\pi) \in \tilde{R}_{\langle f_{\varrho, \varrho_3}, 1 \rangle} & \quad & \varrho(\pi 1) \rightarrow \varrho_4 (\pi) \in \tilde{R}_{\langle f_{\varrho, \varrho_4}, 1 \rangle}\\
		\varrho_1(\pi 1) \rightarrow \varrho_4 (\pi) \in \tilde{R}_{\langle f_{\varrho, \varrho_1}, 2 \rangle} & \quad  & \varrho_2(\pi 1) \rightarrow \varrho_3 (\pi) \in \tilde{R}_{\langle f_{\varrho, \varrho_2}, 2 \rangle}  \\
		\varrho_3(\pi 1) \rightarrow \varrho_3 (\pi) \in \tilde{R}_{\langle f_{\varrho, \varrho_3}, 2 \rangle} & \quad & \varrho_4(\pi 1) \rightarrow \varrho_4 (\pi) \in \tilde{R}_{\langle f_{\varrho, \varrho_4}, 2 \rangle},\\
	\end{array}
	\]
		where $\varrho\in \{ \varrho_1,\dots,\varrho_4\}$ from $\bar{B}'$.
	It should be clear that if for a tree $\tilde{s}$ over $\tilde{\Sigma}$, it holds that
	$
	\tilde{a} (1) \Rightarrow_{T_W, \tilde{s}^\# }^* \varrho (1), 
	$
	 then  $\tilde{s}$ corresponds to some tree $s\in \text{range} (B_2)$. 
	 In this case, we can proceed to simulate $A_2'$.
	To this end, we define the rule
	$\varrho (\pi 1) \rightarrow a (\pi 1)\in \tilde{R}_\#$.
	Recall that $a$ is the initial attribute of the $att$ $A_2'$.

	We now specify the remaining rules, i.e., the rules with which $T_W$ simulates~$A_2'$.
	Let $\varrho,\varrho'\in \{\varrho_1,\dots,\varrho_4\}$.
	Then the rule set $\tilde{R}_{\langle f_{\varrho,\varrho'},1\rangle}$ is 
	obtained from the rule set $R_{ f_{\varrho,\varrho'} }$ of $A_2'$ by removing all rules 
	where $\pi 2$ occurs either on the left or right-hand side.
	Analogously, the rule set	$\tilde{R}_{\langle f_{\varrho,\varrho'},2\rangle}$
	is obtained from $R_{ f_{\varrho,\varrho'} }$ of $A_2'$ by removing all rules 
	where $\pi 1$ occurs either on the left or right-hand side.
	Rules for the root marker as well as rules for symbols of rank $0$
	are taken over from~$A_2'$.
\end{example}

\noindent 
By construction of $T_W$, it is clear that
 the following  holds.

\begin{lemma}\label{lemma 7}
	Consider the $att^R$ $\hat{A}= (B,A')$ and the two-way transducer $T_W$ constructed from $\hat{A}$.
	Let $\tilde{s}$ be a tree over  $\tilde{\Sigma}$. 
	If on input $\tilde{s}$, $T_W$ outputs  $t$ then for all $s\in \text{range} (B)$ such
	that $\tilde{s}$ corresponds to~$s$,
	$A'$ also produces  $t$ on input $s$.
\end{lemma}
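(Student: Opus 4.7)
The plan is to exploit the two-phase structure of $T_W$'s behaviour. Starting from $\tilde{a}(1)$ on input $\tilde{s}^\#$, $T_W$ first descends to the unique leaf of $\tilde{s}^\#$ using the rules $\tilde{a}(\pi) \to \tilde{a}(\pi 1)$, and then ascends back to the root marker by simulating $\bar{B}'$ via the inherited attributes of $\bar{B}'$. This first phase emits no output symbols; it merely tests whether $\bar{B}'$ accepts $\tilde{s}$. The transition $l_0(\pi 1) \to a_0(\pi 1) \in \tilde{R}_\#$, which seeds the second phase that actually simulates $A'$, fires only if the first phase terminates in a final state $l_0$ of $\bar{B}'$ at the root marker. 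Hence, if $T_W$ produces any output $t$ on $\tilde{s}$ then $\bar{B}'$ accepts $\tilde{s}$, so by construction of $\bar{B}'$ at least one tree $s \in \text{range}(B)$ exists with $\tilde{s}$ corresponding to $s$.

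My main argument would fix an arbitrary such $s$ and introduce a map $\iota$ from the nodes of $\tilde{s}^\#$ into the nodes of $s^\#$ that traces the path of $s$ encoded by $\tilde{s}$: if the path in $\tilde{s}^\#$ from the root to $v$ reads $\# \cdot \langle\sigma_1,i_1\rangle \cdots \langle\sigma_n,i_n\rangle$, then $\iota(v) = 1.i_1.\,\cdots\,.i_n$ (with $\iota(\epsilon) = \epsilon$). I would extend $\iota$ to trees in $T_\Delta[\text{SI}(\tilde{s}^\#)]$ by leaving $\Delta$-labels alone and replacing each attribute occurrence $\alpha(w)$ by $\alpha(\iota(w))$. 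Then I would prove by induction on derivation length that every sentential form $u$ reachable from $a_0(1)$ via $\Rightarrow_{T_W,\tilde{s}^\#}$ using only the $A'$-simulating rules satisfies $a_0(1) \Rightarrow_{A',s^\#}^* \iota(u)$, so that the two derivations terminate with the same ground tree $t$.

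The inductive step is a direct read-off of the construction. A rule $a(\pi) \to t[\pi i \leftarrow \pi 1] \in \tilde{R}_{\langle\sigma,i\rangle}$ arises from a rule $a(\pi) \to t \in R'_\sigma$ in which every occurrence of a synthesized attribute is of the form $\alpha(\pi i)$; applying the former at a node $v$ with $\tilde{s}^\#[v] = \langle\sigma,i\rangle$ corresponds precisely to applying the latter at $\iota(v)$ with $s^\#[\iota(v)] = \sigma$, because the definition of $\iota$ gives $\iota(v.1) = \iota(v).i$. The case of inherited-attribute rules $b(\pi 1) \to t'[\pi i \leftarrow \pi 1]$ is analogous, and rules in $R'_\#$ and in $\tilde{R}_e$ for $e \in \Sigma^B_0 = \tilde{\Sigma}_0$ are carried over unchanged.

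The one subtlety to address is to ensure that, during the $A'$-derivation on $s^\#$ that mirrors the $T_W$-derivation, no rule of $A'$ ever has to be applied at a node off the encoded path $\iota(V(\tilde{s}^\#))$; otherwise $\iota$ could not track the simulation. This is exactly where the string-like property of $\hat{A}$ enters: $A'$ on any input processes only the nodes along a single path, and the inductive invariant forces that path to coincide with the spine picked out by $\tilde{s}$, since the rule schemes that $T_W$ uses to simulate $A'$ replace every $\pi i$ by $\pi 1$ and therefore can move downward only along the $i$-th child selected by the label $\langle \sigma,i\rangle$. Because the hypothesis guarantees that the simulation really does finish with output $t$, the path of $A'$ on $s$ must stay within $\iota(V(\tilde{s}^\#))$, and thus the particular choice of $s$ among the trees corresponding to $\tilde{s}$ is immaterial for the output, giving the required uniformity.
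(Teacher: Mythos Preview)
Your proposal is correct and in fact goes well beyond what the paper offers: the paper states the lemma and simply writes ``By construction of $T_W$, it is clear that the following holds,'' giving no further argument. Your node-map $\iota$ and the induction on derivation length are exactly the right way to make this precise, and the case analysis for synthesized/inherited rules matches the construction of $\tilde{R}$ on the nose.

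One point deserves tightening. Your final paragraph invokes the string-like property of $\hat{A}$ to ensure that ``no rule of $A'$ ever has to be applied at a node off the encoded path''. But for the direction stated in Lemma~\ref{lemma 7} this worry does not arise: you are transporting a given $T_W$-derivation forward along $\iota$ to obtain an $A'$-derivation, not the other way round. Each step of the $\iota$-image is a legal $A'$-step simply because $\tilde{s}$ corresponds to $s$ (so $s^\#[\iota(v)]=\sigma$ whenever $\tilde{s}^\#[v]=\langle\sigma,i\rangle$), and the resulting derivation lies on the encoded path by construction of $\iota$. Since $A'$ is deterministic, this derivation \emph{is} the derivation of $A'$ on $s^\#$, and it ends in the ground tree $t$; uniformity over all $s$ corresponding to $\tilde{s}$ then follows because the $\iota$-image only reads labels along the encoded path, which are fixed by $\tilde{s}$. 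The string-like property is what one needs for the converse implication (used later in the proof of Lemma~\ref{lemma 9}), not here. Removing that paragraph, or rephrasing it as the simple determinism observation above, would make the argument cleaner.
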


\subsubsection{From String Transducers back to Tree Transducers}
In the following, consider the two-way transducer  $T_W$. Assume that the procedure 
of~\cite{DBLP:conf/lics/FiliotGRS13} yields a one-way transducer $T_O$ that
is equivalent to $T_W$.
Recall that a one-way transducer is in essence a top-down tree transducer 
with monadic input and monadic output.

Given the one-way transducer $T_O=(\bar{S},\bar{I},\tilde{\Sigma},\Delta, \bar{a}_0, \bar{R})$, we now construct a top-down transducer
$T'=(\bar{S},\bar{I},\Sigma^B,\Delta, \bar{a}_0, \acute{R})$ that produces output trees on the range of $B$.
To do so, $\acute{R}$ is constructed as follows: 
Let $q(\langle \sigma,i\rangle (x_1)) \rightarrow t \in \bar{R}$ 
where $\sigma\in \Sigma^B_k$ and $i\in  [k]$. This rule
induces the rule $q(\sigma (x_1,\dots,x_k)) \rightarrow \hat{t}\in \acute{R}$ where
$\hat{t}$ is obtained from $t$ by substituting occurrences of $x_1$ by $x_i$, e.g., if
$t= f(g (q' (x_1)))$ then $\hat{t}= f(g (q' (x_i)))$.

Recall that the domain of $T_W$  only consists  of  trees  $\tilde{s}\in T_{\tilde{\Sigma}}$ for which
$s\in \text{range} (B)$ exists such that $\tilde{s}$ corresponds to $s$.
As $T_W$ and $T_O$ are equivalent, the domain
of $T_O$ also consists of such trees.
Hence, by construction,  the following holds.

\begin{lemma}\label{lemma 8}
	Consider the top-down transducer $T'$ constructed from the one-way transducer $T_O$.
	Let $\tilde{s}$ be a tree over  $\tilde{\Sigma}$. 
	If on input $\tilde{s}$, $T_O$ outputs  $t$ then for all $s\in \text{range} (B)$ such
	that $\tilde{s}$ corresponds to~$s$,
	$T'$ also produces  $t$ on input $s$.
\end{lemma}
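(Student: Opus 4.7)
The plan is to prove this by induction on the length of a derivation of $T_O$ on input $\tilde{s}^\#$, matching each step one-to-one with a derivation step of $T'$ on $s^\#$. First I would make the node correspondence explicit. Since $\tilde{s}$ is monadic and corresponds to $s\in\text{range}(B)$, there is a canonical injection $\phi : V(\tilde{s}) \to V(s)$ defined inductively by $\phi(\epsilon) = \epsilon$ and $\phi(u.1) = \phi(u).i$ whenever $\tilde{s}[u] = \langle\sigma,i\rangle$. By the definition of ``corresponds to'', the label $\sigma$ appearing as the first component of $\tilde{s}[u]$ agrees with $s[\phi(u)]$, and its rank $k$ satisfies $i\in [k]$; leaves in $\Sigma^B_0$ simply map to leaves of $s$ carrying the same symbol. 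Extend $\phi$ to sentential forms by sending each state call $q(u)$ to $q(\phi(u))$ while leaving output symbols and the root marker fixed.

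Next I would prove the inductive invariant: for every derivation $\bar{a}_0(1)\Rightarrow^*_{T_O,\tilde{s}^\#}\zeta$, we have $\bar{a}_0(1)\Rightarrow^*_{T',s^\#}\phi(\zeta)$. The base case is trivial because $\phi$ fixes $\bar{a}_0(1)$. For the inductive step, suppose the next step applies a rule $q(\langle \sigma,i\rangle(x_1))\to t\in\bar{R}$ at an occurrence of $q$ at the (necessarily unique) child node $u.1$ of some $u\in V(\tilde{s})$ with $\tilde{s}[u]=\langle\sigma,i\rangle$. By the construction of $\acute{R}$, $T'$ contains the rule $q(\sigma(x_1,\dots,x_k))\to\hat{t}$, where $\hat{t}$ results from $t$ by substituting $x_i$ for~$x_1$. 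Applying this rule at $\phi(u.1)$ on input $s$ replaces each child reference in $\hat{t}$ by the corresponding child of $\phi(u)$, so that every state call $q'(x_1)$ in $t$ becomes $q'(\phi(u).i) = q'(\phi(u.1))$ in $T'$'s rewriting, matching the $\phi$-image of the corresponding $q'(u.1)$ produced by $T_O$. Since output symbols are identical in $t$ and $\hat{t}$, the resulting sentential form of $T'$ is exactly $\phi(\zeta')$, where $\zeta'$ is the sentential form obtained by the one step of $T_O$.

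The rules for the root marker are handled analogously (indeed $\acute{R}_\#$ is induced from $\bar{R}_\#$ by the same schema and only involves moving from $\epsilon$ to $1$, at which $\phi$ agrees with the identity). Hence, if $T_O$ derives a ground output $t\in T_\Delta$ from $\bar{a}_0(1)$ on $\tilde{s}^\#$, then $T'$ derives $\phi(t) = t$ from $\bar{a}_0(1)$ on $s^\#$, giving $T'(s) = t$.

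The only real obstacle is setting up the correspondence $\phi$ carefully so that the substitution $x_1\mapsto x_i$ built into the rule construction is exactly what is needed to translate single-step moves of $T_O$ along the monadic path of $\tilde{s}$ into single-step moves of $T'$ down the $i$-th child of the corresponding node in $s$; once this bookkeeping is in place, the induction is forced.
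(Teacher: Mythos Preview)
The paper's own proof is literally the phrase ``by construction'' preceding the lemma statement; your inductive argument via the node map $\phi$ simply spells out what that phrase means, so the approaches coincide. One minor indexing slip: you place the occurrence of $q$ at the child $u.1$ while reading the label at $u$, but $q$ should sit at $u$ itself (where $\langle\sigma,i\rangle$ is read) and the resulting state calls move to $u.1$---correspondingly the $T'$-rule is applied at $\phi(u)$, not $\phi(u.1)$; with that correction the one-step simulation and the induction go through exactly as you describe.
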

With Lemmas~\ref{lemma 7} and~\ref{lemma 8}, it can be shown that the following holds.
\begin{lemma}\label{lemma 9}
	The top-down transducer $T'$ and the $att$ $A'$ are equivalent on the range of $B$.
\end{lemma}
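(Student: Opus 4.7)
The plan is to establish the equivalence by showing that for each $s \in \text{range}(B)$, the output $T'(s)$ is defined and equals $t$ if and only if $A'(s)$ is defined and equals $t$. The pivot for both directions is the existence, for each such $s$, of a canonical encoding $\tilde{s}$ corresponding to $s$ that captures the single path of $s$ relevant to the computation, so that we can move between tree-shaped inputs of $A'$ and $T'$ and monadic-shaped inputs of $T_W$ and $T_O$.

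For the forward direction, assume $A'(s) = t$. Since $\hat{A}$ has the string-like property, the attributes of $A'$ only process nodes on a single path $\rho$ of $s$. I would form $\tilde{s}$ as the encoding of the prefix of $s$ consisting exactly of the nodes on $\rho$ (with the remaining children abstracted away via the $\langle \sigma, i \rangle$ symbols). This $\tilde{s}$ corresponds to $s \in \text{range}(B)$, so the initial $\bar{B}'$-simulation phase of $T_W$ succeeds and transitions to the simulation of $A'$. By the definition of the rule sets $\tilde{R}_{\langle \sigma, i \rangle}$, the subsequent derivation of $T_W$ on $\tilde{s}$ mirrors step-for-step the derivation of $A'$ on $s$ (using only the projections onto the $i$-th child that $A'$ actually uses), so $T_W(\tilde{s}) = t$. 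Since $T_O$ is equivalent to $T_W$, $T_O(\tilde{s}) = t$, and Lemma~\ref{lemma 8} yields $T'(s) = t$.

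For the converse, assume $T'(s) = t$. Since $T'$ is a top-down tree transducer with monadic output, its derivation on $s$ visits exactly one path $\rho'$ of $s$; reversing the construction of $T'$ from $T_O$, this derivation is the image of a derivation of $T_O$ on the tree $\tilde{s}$ that encodes the prefix of $s$ along $\rho'$. In particular $T_O(\tilde{s}) = t$, hence $T_W(\tilde{s}) = t$ by equivalence, and Lemma~\ref{lemma 7} gives $A'(s) = t$.

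The main obstacle is making the correspondence between derivations of $A'$ on $s$ and derivations of $T_W$ on the chosen $\tilde{s}$ fully precise: one must verify that the path $\rho$ picked in the forward direction is indeed realized by an encoding in the domain of $T_W$ (i.e., the $\bar{B}'$ check in $T_W$ accepts $\tilde{s}$), and that the projection rules $\tilde{R}_{\langle \sigma, i \rangle}$ faithfully reproduce every rewrite step of $A'$ along $\rho$ --- including the handling of inherited attributes whose rules are restricted to those compatible with the string-like property. Once these bookkeeping facts are checked, the two previous lemmas together with equivalence of $T_W$ and $T_O$ close both directions cleanly.
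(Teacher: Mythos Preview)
Your proposal is correct and follows essentially the same approach as the paper's own proof: both directions produce, via the string-like property (resp.\ the single-path behavior of $T'$), an encoding $\tilde{s}$ corresponding to $s$, and then chain through the equivalence of $T_W$ and $T_O$ together with Lemmas~\ref{lemma 7} and~\ref{lemma 8}. Your additional remarks about verifying that $\bar{B}'$ accepts $\tilde{s}$ and that the projected rules faithfully simulate $A'$ along $\rho$ are exactly the ``by construction of $T_W$'' step that the paper leaves implicit.
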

\begin{proof}
	Let $s\in \text{range} (B)$. Let $A'$ produce the tree $t$ on input $s$.
	Since the string-like property holds and by construction of $T_W$, it follows that a tree $\tilde{s}$
	over $\tilde{\Sigma}$ exists such that $\tilde{s}$ corresponds to $s$
	and $T_W$ produces $t$ on input $\tilde{s}$.
	Since $T_W$ and $T_O$ are equivalent, $T_O$  also produces $t$ on input $\tilde{s}$.
	Due to Lemma~\ref{lemma 8}, it follows that $T'$ produces $t$ on input $s$ as well.
	
	The converse direction follows analogously with Lemma~\ref{lemma 7}.
Note that by construction of $T'$, on input $s$,
 the states of $T'$ can only process nodes occurring on a single path $\rho$
	of $s$. Furthermore, by construction of $T'$ it is implied that if
	 $T'$ produces $t$ on input $s$ then some tree $\tilde{s}$ 
	 over $\tilde{\Sigma}$ exists such that
	 $\tilde{s}$ corresponds to $s$ and
	  $T_O$ produces $t$ on input $\tilde{s}$.
\end{proof}

Due to Lemma~\ref{lemma 9}, it follows that $\hat{A}=(B,A')$ and $N=(B,T')$ are equivalent.
We remark that there is still a technical detail left.
Recall that our aim is to construct a $dt^R$ $T$ equivalent to $\hat{A}$.
However, the procedure 
of~\cite{DBLP:conf/lics/FiliotGRS13} may yield a functional, nondeterministic one-way transducer $T_O$.
Therefore, $T'$ and hence $N$ may be nondeterministic but functional.
As shown in~\cite{DBLP:journals/ipl/Engelfriet78}, we can easily compute  a $dt^R$ equivalent to $N$, thus obtaining a $dt^R$ equivalent to $\hat{A}$.
In summary, our procedure above yields the following.

\begin{lemma}\label{lemma aux}
 If a one-way transducer equivalent to $T_W$ exists
then a $dt^R$ equivalent to the $att^R$ $\hat{A}$
exists and can be constructed.
\end{lemma}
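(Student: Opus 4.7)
The plan is to assemble the pieces already in place in the preceding subsections and then invoke one external result to handle the determinism issue. Assume a one-way transducer $T_O$ equivalent to $T_W$ exists; by the procedure of~\cite{DBLP:conf/lics/FiliotGRS13} such a $T_O$ can be effectively constructed. First I would apply the construction described just before this lemma to turn $T_O$ into a top-down transducer $T'$ over the alphabet $\Sigma^B$, by lifting each monadic rule $q(\langle \sigma, i \rangle(x_1)) \rightarrow t$ of $T_O$ to the rule $q(\sigma(x_1, \dots, x_k)) \rightarrow \hat{t}$ of $T'$ in which $x_1$ is renamed to $x_i$ on the right-hand side.

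Next I would combine $T'$ with the bottom-up relabeling $B$ of $\hat{A}$ to form the top-down tree transducer with look-ahead $N = (B, T')$. By Lemma~\ref{lemma 9}, $T'$ and $A'$ agree on all inputs in $\text{range}(B)$, so $N$ and $\hat{A} = (B, A')$ realize the same translation; in particular the equivalence of $T_W$ and $T_O$ transfers through the correspondence-based Lemmas~\ref{lemma 7} and~\ref{lemma 8} to give the equivalence of $N$ and $\hat{A}$ on all inputs in $\text{dom}(\hat{A})$, while outside this domain both produce nothing (because $B$ acts as a filter in both).

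The main obstacle is that the procedure of~\cite{DBLP:conf/lics/FiliotGRS13} is not guaranteed to return a deterministic one-way transducer; it may return one that is merely functional, and therefore $T'$ and $N$ may be nondeterministic functional rather than deterministic. I would handle this by appealing to the result of Engelfriet~\cite{DBLP:journals/ipl/Engelfriet78}, which asserts that every functional nondeterministic top-down tree transducer with look-ahead can be converted effectively into an equivalent deterministic top-down tree transducer with look-ahead. Since $\hat{A}$ is (partial) functional, so is $N$, and applying this result to $N$ yields a $dt^R$ $T$ with $\tau_T = \tau_N = \tau_{\hat{A}}$, completing the construction.
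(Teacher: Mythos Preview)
Your proposal is correct and follows essentially the same route as the paper: construct $T_O$ via the procedure of~\cite{DBLP:conf/lics/FiliotGRS13}, lift it to $T'$ by the described rule-expansion, pair with $B$ to get $N=(B,T')$, use Lemma~\ref{lemma 9} to conclude $\tau_N=\tau_{\hat{A}}$, and finally invoke~\cite{DBLP:journals/ipl/Engelfriet78} to determinize the (possibly nondeterministic but functional) $N$ into a $dt^R$. The paper presents exactly this argument in the discussion immediately preceding the lemma.
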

 
What about the inverse implication?
Furthermore, note that the arguments presented above are based on the assumption that
the procedure 
of~\cite{DBLP:conf/lics/FiliotGRS13} yields a (possibly nondeterministic) one-way transducer equivalent to $T_W$. 
Now the question is, does such a one-way transducer
always  exists if a $dt^R$ equivalent to $\hat{A}$ exists? The answer to this question is indeed affirmative.
In particular a one-way transducer equivalent to $T_W$ exists due to the following lemma.

\begin{lemma}\label{lemma 10}
	If a $dt^R$ $T$ equivalent to $\hat{A}$ exists, then a	
	(nondeterministic) $t^R$
	$N=(B, N')$  exists such that
	$\hat{A}$ and $N$ are equivalent.
\end{lemma}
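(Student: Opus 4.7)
The plan is to \emph{internalize} the look-ahead $B_T$ of $T$ as nondeterministic state in $N'$, while keeping $B$ as the look-ahead of $N$. Write $T=(B_T,T')$ with $T'$ having state set $Q$ and initial state $q_0$, and $B_T$ having state set $L$ with accepting states $F\subseteq L$. I take the state set of $N'$ to be $\{q_0^N\}\cup (Q\times L)$, where $q_0^N$ is a fresh initial state; the intended meaning of a state $(q,l)$ is ``$T'$ is in state $q$ at the current node, and we guess that $B_T$ assigns state $l$ to the current subtree''.

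\textbf{Construction.}
For every $T'$-rule $q(\sigma'(x_1,\dots,x_k))\to t$ and every $B_T$-rule $\sigma(l_1(x_1),\dots,l_k(x_k))\to l(\sigma'(x_1,\dots,x_k))$ sharing the same right-hand-side symbol $\sigma'$, and for every symbol $\sigma_{\varrho_1,\dots,\varrho_k}\in \Sigma^B$, I include in $N'$ the rule
\[
(q,l)(\sigma_{\varrho_1,\dots,\varrho_k}(x_1,\dots,x_k))\to t',
\]
where $t'$ is obtained from $t$ by replacing every state call $q''(x_j)$ on the right-hand side by $(q'',l_j)(x_j)$. At a leaf ($k=0$) the $B_T$-rule $\sigma\to l(\sigma')$ uniquely fixes $l$, and the corresponding $N'$-rule is $(q,l)(\sigma)\to t$. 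For the initial state I include, for each $l_f\in F$, the analogous rules but starting from $q_0^N$ in place of $(q_0,l_f)$; this lets $N'$ nondeterministically guess any admissible $B_T$-state at the root of $s$.

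\textbf{Correctness.}
For every $s\in \text{dom}(\hat{A})=\text{dom}(T)$ the unique run of $B_T$ on $s$ assigns some state $l_v\in L$ to each node $v$ of $s$, with $l_{\text{root}}\in F$. The run of $N'$ on $B(s)$ whose state at each node $v$ has second component $l_v$ is always available by construction, and by design it produces exactly the output of $T'$ on $B_T(s)$, which equals $\tau_T(s)=\tau_{\hat{A}}(s)$. Conversely, any accepting run of $N'$ on $B(s)$ is forced to agree with $B_T$'s actual run: at each leaf the $B_T$-rule for the leaf symbol fixes the second component uniquely, and by determinism of $B_T$ this uniquely propagates upward. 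Thus all accepting runs produce the same output, so $\tau_N=\tau_{\hat{A}}$.

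\textbf{Main obstacle.}
The delicate point is the \emph{functionality} of $N$: since $N'$ is nondeterministic, different guesses for the $B_T$-states could in principle lead to different output trees. The argument above reduces this to the determinism of the bottom-up relabeling $B_T$, which ensures that any guess sequence surviving the leaf-level verification must coincide with $B_T$'s unique labeling of $s$. This relies crucially on the fact that a top-down run of $N'$ must fire a rule at every leaf and that $B_T$ has at most one rule per leaf symbol, so wrong guesses cannot silently escape detection.
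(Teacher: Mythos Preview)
Your construction has a genuine gap in the functionality argument. You assert that ``a top-down run of $N'$ must fire a rule at every leaf'', but this is false: since $T$ is equivalent to $\hat A$, the transducer $T'$ has monadic output, so every right-hand side of a rule of $T'$ contains at most one state call. Your $N'$ inherits this shape, and hence on any input it visits only the nodes along a \emph{single} root-to-leaf path. The $B_T$-state you guess for the visited child is carried forward and eventually checked at the one leaf reached, but the guesses $l_i$ for the non-visited siblings are simply discarded and never verified. Different (incorrect) choices of these sibling states yield different relabeled symbols $\sigma'=\sigma_{l_1,\dots,l_k}$ and hence different $T'$-rules, which in general produce different outputs. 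Thus $\tau_N$ may strictly contain $\tau_{\hat A}$ as a relation, and $N$ need not be equivalent to~$\hat A$.

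The paper's proof addresses exactly this difficulty by exploiting the string-like property of $\hat A$ together with the information already carried by the $B$-relabeling (which your rules ignore entirely). The key observation, called the \emph{substitute-property}, is that if a child subtree $s/v_i$ is not processed by $A'$ and $s/v_i\in\text{dom}_B(\varrho_i)$, then replacing $s/v_i$ by a fixed representative $s_{\varrho_i}$ does not change $\tau_{\hat A}(s)=\tau_T(s)$. Since the $B_T$-states of the fixed representatives $s_{\varrho_i}$ can be precomputed, $N'$ does not have to guess them: it only guesses the $B_T$-state for the one child that $A'$ actually processes, and it follows \emph{that} child so the guess is eventually verified. To know which child $A'$ processes, the paper's $N'$ additionally tracks the first attribute of $A'$ to reach the current node and dependency information $\gamma\subseteq I\times S$; a further case analysis handles the situation where $T'$ and $A'$ want to descend into different children. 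None of this machinery is optional: it is precisely what makes a single-path verification sufficient.
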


Before we prove Lemma~\ref{lemma 10}, note that the $att^R$ $\hat{A}$ and the $dt^R$ $T$
equivalent to $\hat{A}$ in Lemma~\ref{lemma 10}
may not necessarily use the same bottom-up relabeling. In fact, it may be possible that no
$dt^R$ exists which is equivalent to $\hat{A}$ and uses the same relabeling.
However, 
the nondeterministic $t^R$ $N$ does use the same bottom-up relabeling
as $\hat{A}$. 
We will later construct the one-way transducer $T_O$ from $N$ using this exact property. 
We now prove Lemma~\ref{lemma 10}.

\begin{proof}
	By~\cite{DBLP:journals/ipl/FulopM00},
	$\text{dom} (A')$ is effectively regular, i.e., a deterministic bottom-up tree automaton recognizing
	$\text{dom} (A')$ can be constructed. 
	Thus we can assume that $\text{range} (B)\subseteq \text{dom} (A')$, which implies $\text{dom} (B) =\text{dom} (\hat{A})$.
	 In other words, trees not in $\text{dom} (\hat{A})$
	are filtered by $B$.
	
\underline{\textit{Main Idea}}.
Before we begin our proof, we briefly sketch the main idea.
First recall that 
a node $v$ labeled by 
$\sigma$ is relabeled by $B$ into
$\sigma_{\varrho_1,\dots,\varrho_k}$
if for $i\in [k]$, 
the $i$-th subtree of $v$ is a tree in
$\text{dom}_{B} (\varrho_i)$. 
In the following, we denote  for each state $\varrho$ of $B$
by
$s_{\varrho}$ an arbitrary but fixed tree in $\text{dom}_{B} (\varrho)$.

By our premise, a $dt^R$ $T=(B_T, T')$ equivalent
to $\hat{A}$ exists. 
Without loss of generality assume that the bottom-up relabeling $B_T$ of $T$
operates analogously to $B$, i.e., 
a node $v$  labeled by $\sigma$ is relabeled by $B_T$
into $\sigma_{l_1,\dots,l_k}$ 
if for $i\in [k]$, 
the $i$-th subtree of $v$ is a tree in
$\text{dom}_{B_T} (l_i)$.

We show that $N$  can simulate $T$ using its bottom-up relabeling $B$ and
the following property which we call the \emph{substitute-property}.
Let $s\in \text{dom} (\hat{A})$ and let $B$ relabel $s$ into $\hat{s}$.
Let $v_1$ and $v_2$ be nodes of $\hat{s}$ with the same parent.
Since $\hat{A}$ has the string-like property,
on input $\hat{s}$, either $v_1$ or $v_2$ is not processed by attributes of $A'$.
Assume that $v_1$ is not processed and that $s/v_1 \in \text{dom}_B (\varrho)$.
Then 
\[
\tau_{\hat{A}} (s)=\tau_{\hat{A}} (s[v_1\leftarrow s_\varrho])
\]
holds.
Informally, this means that $s/v_1$ can be substituted by $s_\varrho$ without affecting the output of
the translation.
Since $\hat{A}$ and $T$ are  equivalent by our premise,
$\tau_{T} (s) = \tau_{T} (s[v_1\leftarrow s_\varrho])$ follows, i.e.,
$s/v_1$ can be substituted by $s_\varrho$ in a translation of $T$ without affecting the output
as well.

We now sketch how the $dt^R$ $T$ is simulated by $N$.
Let $s\in \text{dom} (\hat{A})$ and let $B$ relabel $s$ into $\hat{s}$. 
Let $v$ be a node and let $s/v= \sigma (s_1,\dots, s_k)$ and
$\hat{s}/v= \sigma_{\varrho_1,\dots,\varrho_k} (\hat{s}_1,\dots,\hat{s}_k)$.
Furthermore, let $\hat{q}$ be a state of $N'$. 
Let  $\hat{q}$  processes the node $v$ on input $\hat{s}$.
The state $\hat{q}$ is associated with a state $q$ of $T'$ along with a state $l$ of $B_T$.

Unsurprisingly, the main difficulty in simulating the $dt^R$ $T$ is that since the bottom-up relabeling of $N$ 
is $B$ and not $B_T$,  $N$ does not know how the node $v$ of~$s$ would be relabeled by $B_T$.
Using nondeterminism, the obvious approach is that $N$ simply guesses how $v$
 could have been  relabeled by $B_T$. By definition, how the node $v$ is relabeled on input
 $s$ by $B_T$  depends on its subtrees $s_1,\dots,s_k$, i.e.,
 for each $i\in [k]$, we need to guess the state $l_i$ of $B_T$ such that $s_i\in \text{dom}_{B_T} (l_i)$.
 Obviously, all such guesses must be checked for correctness.
 To do so, $N'$
  must read all subtrees $\hat{s}_1,\dots,\hat{s}_k$. However, since $N'$ is a top-down transducer
  with monadic output, 
  $N'$ can read at most one of the subtrees $\hat{s}_1,\dots,\hat{s}_k$.
This is where the substitute-property comes into play.
Using $B$ and the substitute-property,  $N'$ proceeds as follows:
First $N'$ determines which child node of $v$ is processed by attributes of $A'$
on input $\hat{s}$ and which are not.
Assume that on input $\hat{s}$, attributes of $A'$ process the nodes $v.1$.
Consequently, the nodes $v.2,\dots, v.k$ are not processed by any attribute of $A'$
due to the string-like property.
Note since $v$ is labeled by 
$\sigma_{\varrho_1,\dots,\varrho_k}$,
for $i\in [k]$, $s_i\in \text{dom}_B (\varrho_i)$.
Due to the substitute-property,  $s_i$
may be replaced by $s_{\varrho_i}$ for $i\neq 1$ without affecting the produced output tree.
Thus,
$N$  acts as if the $i$-th subtree of $v$ 
was  $s_{\varrho_i}$ for $i\neq 1$ and behaves accordingly. 
Note that since they are fixed, for all trees $s_\varrho$, where $\varrho$ is a state of $B$,
the state $l_\varrho$ such that
$s_\varrho\in \text{dom}_{B_T} (l_\varrho)$
can be precomputed.
In particular, when processing $v$, $N'$ guesses a state $l'$ such that 
\[
\sigma (l' (x_1), l_{\varrho_2}\dots,l_{\varrho_k} (x_k))\rightarrow l (\sigma_{l', l_{\varrho_2}\dots,l_{\varrho_k}} (x_1,\dots,x_k))
\]
is
a rule of $B_T$. With this guess,  the state $\hat{q}$ then `behaves' as the state $q$ of
$T'$ would when processing a node labeled by
$\sigma_{l', l_{\varrho_2}\dots,l_{\varrho_k}}$. 
Afterwards, the subtree $\hat{s}_1$ is read by $N'$ to check whether or not guessing $l'$ is correct.

\underline{\textit{Construction of ${N}'$}}.
Recall that $\Sigma^B$ is the output alphabet of $B$
We define $N'=(\acute{S}, \emptyset, \Sigma^B, \Delta, \hat{q}_0,\acute{R})$.
In addition to the initial state $\hat{q}_0$, $\acute{S}$ consists of 
auxiliary states, which we specify later, and states of the form
$(q',l',a,\gamma)$, where $q'$ and $l'$ are  states  
of $T'$ and $B_T$, respectively,  $a$ is a synthesized attribute of $A'$
and $\gamma\subseteq I\times S$.

Consider   a tree $\hat{s}$ such that $(s,\hat{s})\in \tau_B$ for some
$s\in T_\Sigma$.
Recall that for such a tree $\hat{s}$, $N'$ needs to determine which nodes of $\hat{s}$ are processed by attributes
of $A'$ on input $\hat{s}$ and which are not.
In the following, we describe how $N'$ does so using states of the form $(q',l',a,\gamma)$.

\underline{\textit{Determining the nodes of $\hat{s}$  processed by attributes
of $A'$.}} 
Assume that a state of $N'$ of the form $(q',l',a,\gamma)$ processes the node $v$ on input~$\hat{s}$, i.e.,
assume  that $t$ exists such that
$\hat{q}_0 (1) \Rightarrow_{N', \hat{s}^\#}^* t$ and $(q',l',a,\gamma) (1.v)$ occurs in $t$.
This is to be interpreted as follows:
It means that in a translation of $A'$ on input $\hat{s}$,
the node $v$ is processed by attributes of $A'$.
In particular, the attribute $a$ is the first attribute to process $v$. 
In other words, trees $t_1\dots,t_n$ exist such that
\[
(a_0,1)\Rightarrow_{A',\hat{s}^\#} t_1 \Rightarrow_{A',\hat{s}^\#} t_2 \Rightarrow_{A',\hat{s}^\#} \cdots \Rightarrow_{A',\hat{s}^\#} t_n
\]
such that $a (1.v)$ occurs in $t_n$ and
for $i<n$ it holds that if
$\hat{\alpha} (\nu) \in \text{SI} (\hat{s}^\#)$ occurs in the tree $t_i$,
then  $\nu$ is a proper ancestor of~$1.v$.
Recall that due to string-like property, on input $\hat{s}$, only nodes on a single path of $\hat{s}$
are processed by attributes of~$A'$.
Thus, before $v$ is processed by $a$, only ancestors of $v$ are processed by attributes of $A'$.
We remark that since output trees are monadic,
at most one node of the input tree is processed by an attribute of $A'$ at any given time.
Hence, for all nodes there exists a unique attribute which is the first to process that node or
that node is not processed by any attribute at all.

Consider the component $\gamma$ of the state $(q',l',a,\gamma)$.
If $(b,a)\in \gamma$ then this means that trees $\acute{t_1},\dots,\acute{t}_n \in T_\Delta [\text{SI} (\hat{s}^\#)]$
exist such that
\begin{enumerate}
	\item  $a(1.v)$ occurs in the tree $\acute{t}_n$
	\item $(b,1.v)\Rightarrow_{A', \hat{s}^\#} \acute{t}_1 \Rightarrow_{A', \hat{s}^\#}  \acute{t}_2 \Rightarrow_{A', \hat{s}^\#}\dots \Rightarrow_{A', \hat{s}^\#} 
	\acute{t}_m $ holds and
	\item for $i<m$, if
	$\hat{\alpha} (\nu) \in \text{SI} (\hat{s}^\#)$ occurs in  $\acute{t}_i$,
	then  $\nu$ is a proper ancestor of~$1.v$.
\end{enumerate}

Assume that the node $v$ has $k$ child nodes.
Since  $a$ is the first attribute of $A'$ to process $v$
and given the definition of $\gamma$,
it should be clear that using only $a$ and~$\gamma$
along with 
 rules of $A'$ for  $\hat{s} [v]$, the transducer $N'$ can compute
which
child nodes of $v$ are processed by attributes of $A'$ and which are not.
In particular, the first attribute to process a child node of $v$ can be computed.

\underline{\textit{Defining the rules of ${N}'$}}.
Subsequently, we define the rules of $N'$, beginning with the rules for its initial state $\hat{q}_0$.
Let $q_0$ be the initial state of $T'$. Recall that $a_0$ is the initial attribute of $A'$.
Furthermore, denote by $\gamma'$ the set
\begin{multline*}
	\gamma'= \{ (b,a) \in I\times S \mid A' \text{ contains a rule of the form }b(\pi 1) \rightarrow \psi \in R_\#\\
	\text{ such that } a(\pi 1)\text{ occurs in } \psi\}.
\end{multline*}
Let $l'$ be a final state of $B_T$.
To compute the rules of $N'$ for $\hat{q}_0$, we basically have to compute the rules
for the state $(q_0, l',a_0, \gamma')$.
In particular, 
for $\sigma_{\varrho_1,\dots,\varrho_k}\in \Sigma B$, $N'$ contains the rule 
$\hat{q}_0 (\sigma_{\varrho_1,\dots,\varrho_k} (x_1))\rightarrow \xi'$ 
if and only if it also contains
$(q_0, l',a_0, \gamma') (\sigma_{\varrho_1,\dots,\varrho_k} (x_1))\rightarrow \xi'$.

To define the rules for  $(q_0, l',a_0, \gamma')$
consider the following.
Let $(q,l,a,\gamma)$ be an arbitrary state of $N'$.
For  $(q,l,a,\gamma)$
and  $\sigma_{\varrho_1,\dots,\varrho_k}$ we define the following rules.

\underline{\textit{Case 1:}}
Assume that using $a$ and $\gamma$,
$N'$ computes that no child node of the current input node
is processed by attributes  of $A'$.
In this case, $N'$ assumes that the $i$-th
subtree of the current input node prior to the relabeling by $B$ has been $s_{\varrho_i}$ for $i\in [k]$
due to the substitute-property.   
Hence, consider the tree  $s=\sigma(s_{\varrho_1},\dots,s_{\varrho_k})$. Denote by
$\tilde{s}$ the tree obtained from $s$ via the relabeling $B_T$ of $N$.
Let $t$ be the tree produced by $q$ on input $\tilde{s}$, i.e., let $t=\text{nf} (\Rightarrow_{T', \tilde{s}}, q (\epsilon))$.
If $s\in\text{dom}_{B_T} (l)$  and $t\in T_\Delta$ then for $N'$, we define  the rule
\[
(q,l,a,\gamma)(\sigma_{\varrho_1,\dots, \varrho_k}(x_1,\dots,x_k))\rightarrow t.
\]

\underline{\textit{Case 2:}} Assume that using $a$ and $\gamma$,
 $N'$ computes that the $i$-th child node of the current input node
 is processed by attributes of $A'$ where $i\in [k]$. Consequently,
no attributes of $A'$ process any of the remaining child nodes of the current input node.
Therefore,  $N'$ assumes for $j\neq i$ that the $j$-th
subtree of the current input node prior to the relabeling by $B$ has been $s_{\varrho_j}$ 
due to the substitute-property. 
Let $s_{\varrho_j} \in \text{dom}_{B_T} (l_j)$
for $j\neq i$.
For all states $l_i$ of $B_T$ such that
\begin{enumerate}
	\item  $\sigma (l_1(x_1),\dots,l_k (x_k))\rightarrow l (\sigma_{l_1,\dots, l_k} (x_1,\dots,x_k))$ is a rule of $B_T$  and 
	\item    the right-hand side $t$ 
	of the rule of $T'$ for $q$ and $\sigma_{l_1,\dots, l_k}$ 
	contains an occurrence of $q' (x_i)$, where $q'$ is a state of $T'$
\end{enumerate}
we define a rule
\[
(q,l,a,\gamma)(\sigma_{\varrho_1,\dots, \varrho_k}(x_1,\dots,x_k))\rightarrow t'
\]
for $N'$,
where $t'$ is obtained from $t$ by substituting the occurrence of $q' (x_i)$
by $(q',l_i, a_i, \gamma_i) (x_i)$.
Here, $a_i$ denotes the first attribute to process the $i$-th child of the current input node.
The attribute $a_i$ along with the set $\gamma_i \subseteq I\times S$ can be computed from $a$ and $\gamma$
in conjunction with the rules of $A'$ for $\sigma_{\varrho_1,\dots,\varrho_k}$
in a straightforward manner.

Note the rule defined above
 guesses the state $l_i$ of $B_T$ for the $i$-th subtree of the current input node.
 This rule also ensures that the the $i$-th subtree is read, meaning that
the guess can be checked.
Clearly this requires that
in the
rule of $T'$
for $q$ and $\sigma_{l_1,\dots, l_k}$,
$q' (x_i)$ occurs in $t'$.
We now consider the case in which such a rule is not available.
This is also the most complicated case.

\underline{\textit{Case 3:}}
Assume that using $a$ and $\gamma$,
$N'$ computes that the $i$-th child node of the current input node
is processed by attributes  of $A'$ where $i\in [k]$. 
Thus, as in the previous cases, due to the substitute-property, $N'$ assumes that the $j$-th
subtree of the current input node prior to the relabeling by $B$ has been $s_{\varrho_j}$ for $j\neq i$. 
Let $s_{\varrho_j} \in \text{dom}_{B_T} (l_j)$
for $j\neq i$.

In the following, we require auxiliary states that are of the form
$(e',l',a',\gamma')$ where 
$e'$ is a symbol of $\Delta$ of rank $0$,
$l'$ is a state  
of $B_T$,  $a$ is a synthesized attribute of $A'$
and $\gamma\subseteq I\times S$.
With these auxiliary states, we now consider the case where a state $l_i$ of $B_T$ exists such that
\begin{enumerate}
	\item  $\sigma (l_1(x_1),\dots,l_k (x_k))\rightarrow l' (\sigma_{l_1,\dots, l_k} (x_1,\dots,x_k))$ is a rule of $B_T$  and 
	\item  the right-hand side 
	of the rule of $T'$ for $q$ and $\sigma_{l_1,\dots, l_k}$ 
	contains an occurrence of $q' (x_\iota)$ where $q'$ is a state of $T'$ and $\iota\neq i$.
\end{enumerate}
Informally, this is the case where $T'$ and $A'$ diverge, that is, where $T'$ and $A'$ process different child nodes of the 
current input node.
In this case, we proceed as follows:
Consider the tree $s=\sigma(\grave{s}_1,\dots, \grave{s}_k)$
where $\grave{s}_i$ is an arbitrary tree in $\text{dom}_{B_T} (l_i)$ and for $j\neq i$, $\grave{s}_j = s_{\varrho_j}$. Let $\tilde{s}$ be obtained from $s$ via the relabeling $B_T$ and let $t\in T_\Delta$
such that $t=\text{nf} (\Rightarrow_{T', \tilde{s}}, q (\epsilon))$.
Since $t\in T_\Delta$ (and thus $t$ is monadic) exactly one symbol occurring in $t$ is of rank $0$.
Let $e$ be this symbol. Let $t'$ is obtained from $t$ by substituting the occurrence of $e$
by $(e,l_i,a_i,\gamma_i)$ where the attribute $a_i$ and $\gamma_i\subseteq I\times S$
are computed from the components $a$ and $\gamma$ of $(q,l,a,\gamma)$
in conjunction with the rules of $A'$ for $\sigma_{\varrho_1,\dots,\varrho_k}$ as in the previous case.
Then we define the rule
\[
(q,l,a,\gamma)(\sigma_{\varrho_1,\dots, \varrho_k}(x_1,\dots,x_k))\rightarrow t'
\]
for $N'$.
Note that the reason why 
 $\grave{s}_i$ is chosen as an arbitrary tree in $\text{dom}_{B_T} (l_i)$
is because $t$ is computed using the rule of $T'$ for $q$ and $\sigma_{l_1,\dots, l_k}$
and
$q' (x_j)$ occurs in the right-hand side 
of that rule. Thus, no matter how $\grave{s}_i$ is chosen,
the produced tree will always be $t$.

A state of the form $(e',l',a',\gamma')$, where 
$e'$ is a symbol of $\Delta$ of rank $0$, simply tests whether the guess of $N'$ is correct, i.e.,
whether or not the $i$-th subtree of the current input node is indeed an element of $\text{dom}_{B_T} (l_i)$. 
If so then it eventually outputs $e'$.
For such a state
$(e',l',a',\gamma')$  we define the following rules.
Assume that the state $(e',l',a',\gamma')$ processes  a node labeled by $\sigma_{\varrho_1,\dots,\varrho_k}$ where $k\geq 0$.
Consider the following two cases:

(a) Assume that 
	using $a'$ and $\gamma'$,
	$N'$ computes that no child nodes of the current input node
	are processed by attributes  of $A'$.
	Then we define the rule
	\[
	(e',l',a',\gamma') (\sigma_{\varrho_1,\dots, \varrho_k}(x_1,\dots,x_k))\rightarrow e'
	\]
	for $N'$
	if   $\sigma(s_{\varrho_1},\dots,s_{\varrho_k})\in \text{dom}_{B_T} (l')$.
	
	(b) Assume that $N'$ computes that the $i$-th child node of the current input node
	is processed by attributes  of $A'$ where $i\in [k]$.
	Due to the substitute property, $N'$ assumes that the $j$-th
	subtree of the current input node prior to the relabeling by $B$ has been $s_{\varrho_j}$ for $j\neq i$.
	Let $s_{\varrho_j} \in \text{dom}_{B_T} (l_j)$
	for $j\neq i$. 
For all states $l_i$ of $B_T$ such that
 $\sigma (l_1(x_1),\dots,l_k (x_k))\rightarrow l' (\sigma_{l_1,\dots, l_k} (x_1,\dots,x_k))$ is a rule of $B_T$, 
we define a rule
\[
(e',l',a',\gamma')(\sigma_{\varrho_1,\dots, \varrho_k}(x_1,\dots,x_k))\rightarrow (e',l_i,a_i,\gamma_i),
\]
where $a_i$ and $\gamma_i$ are obtained as before.

\underline{\textit{Correctness of ${N}$}}.
Since $T$ and $\hat{A}$ are equivalent, it is sufficient to show for our lemma that
$(s,t) \in \tau_N$ if and only if  $(s,t) \in \tau_T$.

First consider the following, Let $\hat{s}$ be the tree obtained from $s$ via the 
bottom-up relabeling $B$. 
Consider $\hat{A}=(B,A')$.
Due to the string-like property,
on input $\hat{s}$, attributes of $A'$ only process nodes occurring on a single path of
$\hat{s}$.  Denote by $V_{A'} (\hat{s})$ the set of all these nodes.
Accordingly, 
denote by $\neg V_{A'} (\hat{s})$ the set of all nodes of $\hat{s}$ that
\emph{not} processed by attributes of $A'$ on input $\hat{s}$.
Let $\neg V_{A'}^a (\hat{s})$ be the set of all nodes $v\in \neg V_{A'}(\hat{s})$
for which no ancestor $v'$ of $v$ exists such that $v' \in  \neg V_{A'}(\hat{s})$.
Note that $V(s)=V(\Hat{s})$.
Consider the tree
\[
\textit{\c{s}}=s[v\leftarrow s_\varrho \mid v \in \neg V_{A'}^a  (\hat{s}) \text{ and } s/v\in \text{dom}_B (\varrho) ].
\]
Due to the substitute property $(s,t) \in \tau_T$ if and only if $(\textit{\c{s}},t) \in \tau_T$.
Therefore, it is sufficient ot show that  $(s,t) \in \tau_N$ if and only if  $(\textit{\c{s}},t) \in \tau_T$.
Due to the definition of the rules of $N'$, the latter follows by straight-forward induction.
\end{proof}

Note that  the $t^R$ 
$N=(B, N')$ constructed in the proof of Lemma~\ref{lemma 10} has the following property:
On input $\hat{s}\in \text{range} (B)$ 
an attribute of $A'$ processes the node~$v$ if and only if  
a state of $N'$ processes~$v$ on input $\hat{s}$.
The existence of  a $t^R$ $N$ 
with this properties
implies the existence of a one-way transducer $T_O$ equivalent to $T_W$. In fact, $T_O$ is obtainable
from $N$ similarly to how $T_W$ is obtainable from $\hat{A}$.

Given, $N=(B, N')$, the transducer $T_O$ is 
the product of a 
top-down transducer $\mathcal{T}$ and a top-down tree automaton $\mathcal{T}'$, i.e.,
$T_O$ is obtained by running $\mathcal{T}$ and $\mathcal{T}'$ in parallel.
Recall that $\tilde{\Sigma}$ is the input alphabet of $T_W$.
We define that $\tilde{\Sigma}$ is also the input alphabet of both  $\mathcal{T}$ and  $\mathcal{T}'$.
Informally, the idea is that the transducer $\mathcal{T}$ is tasked with simulating $N'$ while
the purpose of the automaton $\mathcal{T}'$ is to check whether or not
for an input tree $\tilde{s}$ of $\mathcal{T}$  an output tree $\hat{s}\in \text{range} (B)$ exists such that $\tilde{s}$ corresponds to~$\hat{s}$.

Recall  that  we have previously constructed a  bottom-up tree automaton $\bar{B}'$ that accepts exactly those trees $\tilde{s}$
for which a tree $s\in \text{range} (B)$ exists such that $\tilde{s}$ corresponds to $s$.
Then obviously an equivalent nondeterministic top-down tree automaton can be constructed.
We define that $\mathcal{T}'$ is such an automaton.

Recall that $N'=(\acute{S}, \emptyset, \Sigma^B, \Delta, \hat{q}_0,\acute{R})$.
We define $\mathcal{T}=(\acute{S}, \emptyset, \tilde{\Sigma}, \Delta, \hat{q}_0, \grave{R})$ where
 $\grave{R}$ is defined as follows: Let $\acute{R}$ contain  the rule 
$
q(\sigma' (x_1,\dots,x_k))\rightarrow t
$, where $k>0$.
Let $\hat{q} (x_i)$ where $i\in [k]$ and $\hat{q} \in \acute{S}$ occurs in $t$. Then we define the rule
$
q( \langle \sigma', i \rangle (x_1))\rightarrow \hat{t}
$
for $\mathcal{T}$, where $\hat{t}$ is obtained from $t$ by substituting  $\hat{q} (x_i)$ by  $\hat{q} (x_1)$.
If $t$ is ground then we define the rule
$
q(\langle \sigma', i \rangle (x_1))\rightarrow t
$
for all $i\in [k]$ instead. 
Additionally, $\grave{R}$ contains all rules of $\acute{R}$ where symbols of rank $0$ occur on the left-hand
side.

Before we can run $\mathcal{T}$ and $\mathcal{T}'$ in parallel, there is a
 technical detail left. Note that $\mathcal{T}'$ obviously needs to read the whole input tree $\tilde{s}$  to decide whether or not for  $\tilde{s}$ an output tree $\hat{s}$ of $B$ exists such that $\tilde{s}$ corresponds to $\hat{s}$.
The transducer $\mathcal{T}$ does not necessarily read its whole input tree.
However, since its input trees are monadic, $\mathcal{T}$ can be modified 
in a similar fashion to Lemma~\ref{only} such that
only right-hand sides of rules for symbols of rank $0$ are ground.
With this modification, it is ensured that $\mathcal{T}$ reads its whole input tree during a translation.

For completeness, we sketch how $T_O$ is obtained from $\mathcal{T}$ and $\mathcal{T}'$, i.e., 
how $\mathcal{T}$ and $\mathcal{T}'$ are run in parallel.
Denote by $\acute{S'}$ and $\hat{q}_0'$ the set of states and the initial state of $\mathcal{T}'$, respectively.
Then the set of states of $T_O$ is $\acute{S}\times \acute{S'}$
while the initial state is $(\hat{q}_0, \hat{q}_0')$.
Consider the symbol $\sigma\in \tilde{\Sigma}_1$.
Let $q( \sigma (x_1))\rightarrow \hat{t}$ be a rule of $\mathcal{T}$ such that  $\hat{q} (x_1)$ occurs in $\hat{t}$.
Let $q'( \sigma (x_1))\rightarrow \hat{t}'$ be a rule of  $\mathcal{T}'$
such that $\hat{q}' (x_1)$ occurs  $\hat{t}'$.
Then $T_O$ contains the rule 
$
(q,q')( \sigma (x_1))\rightarrow \tilde{t}
$
where $\tilde{t}$ is obtained from $\hat{t}$ by replacing occurrences of $\hat{q} (x_1)$
by $(\hat{q}, \hat{q}') (x_1)$. Rules for symbols in  $\tilde{\Sigma}_0$ are defined in the obvious way.

Since 
$A'$ and $N'$ are equivalent on the range of $B$
and due to property that 
on input $s\in \text{range} (B)$ 
an attribute of $A'$ processes the node~$v$ if and only if
a state of $N'$ processes~$v$ on input $s$, 
it follows that $T_{O}$ and $T_W$ are equivalent.
In particular, let $\tilde{s}$ be a tree over $\tilde{\Sigma}$
and let $\tilde{s}$ correspond to $s'\in \text{range} (B)$.
Recall that the domains of $T_W$ and $T_O$ only consist of trees like $\tilde{s}$.
Let $T_W$ produces $t$ on input $\tilde{s}$.
This means that $A'$ produces $t$ on input $s'$ due to Lemma~\ref{lemma 7} which in turn means that
$N'$ produces $t$ on input $s'$
 since 
 $A'$ and $N'$ are equivalent on the range of $B$.
Informally, since the prefix of $s'$ encoded by $\tilde{s}$ is sufficient for $T_W$ to 
simulate $A'$ on input $s'$ and since on input $s'$,
an attribute of $A'$ processes the node~$v$ if and only if
a state of $N'$ processes~$v$ on input $s$, it follows that 
$\tilde{s}$ is sufficient for $T_O$ to 
simulate $N'$ on input $s$. Thus, $T_O$ also produces $t$ on input~$\tilde{s}$. 
 The converse direction follows analogously.
 In summary, the following lemma holds.
 
 \begin{lemma}\label{lemma aux 2}
 	If a $dt^R$ equivalent to the $att^R$ $\hat{A}$
 	exists
 	then a one-way transducer equivalent to $T_W$ exists.
 \end{lemma}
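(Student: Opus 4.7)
The plan is to apply Lemma~\ref{lemma 10} to obtain a (nondeterministic) $t^R$ $N=(B,N')$ that is equivalent to $\hat{A}$ and enjoys the crucial property that, on any input $\hat{s}\in\text{range}(B)$, an attribute of $A'$ processes a node $v$ if and only if a state of $N'$ does. Once such an $N'$ is available, the idea is to mimic the construction that converted $\hat{A}$ into $T_W$ and apply it to $N$ instead: because $N'$, like $A'$, only traverses a single root-to-leaf path of its input, its behaviour can be faithfully encoded on the monadic alphabet $\tilde{\Sigma}$ used by $T_W$.

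First I would construct a top-down transducer $\mathcal{T}$ over $\tilde{\Sigma}$ that simulates $N'$ on the monadic encodings: each rule $q(\sigma'(x_1,\dots,x_k))\to t$ of $N'$ in which some $\hat{q}(x_i)$ occurs yields the rule $q(\langle \sigma',i\rangle(x_1))\to \hat{t}$, where $\hat{t}$ is obtained from $t$ by substituting $x_i$ by $x_1$; rules with ground right-hand sides are replicated for every index $i$. Next I would reuse the nondeterministic bottom-up automaton $\bar{B}'$ already constructed for $T_W$, which accepts exactly the trees in $T_{\tilde{\Sigma}}$ corresponding to some tree in $\text{range}(B)$, and convert it into an equivalent nondeterministic top-down tree automaton $\mathcal{T}'$. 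After adjusting $\mathcal{T}$ as in Lemma~\ref{only} so that only rules for rank-$0$ symbols have ground right-hand sides (thereby forcing $\mathcal{T}$ to read its entire monadic input), I would take the product $T_O$ of $\mathcal{T}$ and $\mathcal{T}'$, yielding the desired one-way transducer.

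For equivalence of $T_O$ and $T_W$, both have domain contained in the set of trees $\tilde{s}$ that correspond to some $s'\in\text{range}(B)$, thanks to the fact that $\mathcal{T}'$ (resp.\ the automaton embedded in $T_W$) accepts precisely those trees. Fix such an $\tilde{s}$ corresponding to $s'\in\text{range}(B)$. If $T_W$ outputs $t$ on $\tilde{s}$, then by Lemma~\ref{lemma 7}, $A'$ outputs $t$ on $s'$; by Lemma~\ref{lemma 10}, $N'$ also outputs $t$ on $s'$; and because the nodes of $s'$ processed by $N'$ coincide with those processed by $A'$, all of which lie on the single path encoded by $\tilde{s}$, the encoding supplies exactly the input symbols $\mathcal{T}$ needs to reproduce that computation, so $T_O$ outputs $t$ on $\tilde{s}$. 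The converse direction proceeds symmetrically: an output of $T_O$ on $\tilde{s}$ lifts via $\mathcal{T}$ to a computation of $N'$ on $s'$ that stays on the single path encoded by $\tilde{s}$, which in turn yields the same computation of $A'$ on $s'$ and hence the same output for $T_W$ on $\tilde{s}$.

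The main obstacle I expect is the bookkeeping in the converse direction: one has to argue that a successful computation of $T_O$ on $\tilde{s}$ does not depend on the specific $s'\in\text{range}(B)$ chosen with $\tilde{s}$ corresponding to $s'$, i.e.\ that the single-path behaviour of $N'$ really is captured by the monadic encoding. This relies on transferring the string-like property from $A'$ to $N'$ via the node-coincidence property established in Lemma~\ref{lemma 10}, together with the fact that $\mathcal{T}'$ certifies that at least one such $s'$ exists. Once this is checked, Lemmas~\ref{lemma 7} and~\ref{lemma 10} combine into a straightforward induction on derivations, in direct analogy with Lemma~\ref{lemma 9}.
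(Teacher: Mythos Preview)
Your proposal is correct and follows essentially the same approach as the paper: invoke Lemma~\ref{lemma 10} to obtain $N=(B,N')$ with the node-coincidence property, encode $N'$ as a top-down transducer $\mathcal{T}$ over $\tilde{\Sigma}$ exactly as you describe, run it in product with a top-down version $\mathcal{T}'$ of $\bar{B}'$ after forcing $\mathcal{T}$ to read its full input, and establish equivalence with $T_W$ via Lemma~\ref{lemma 7} and the fact that $A'$ and $N'$ are equivalent on $\text{range}(B)$ while processing the same nodes. The only detail you left implicit is that rules of $N'$ for rank-$0$ symbols are carried over unchanged to $\mathcal{T}$, but this is routine.
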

 
 \noindent
Together with  Lemma~\ref{lemma aux}, Lemma~\ref{lemma aux 2} yields the following lemma.

\begin{lemma}\label{iff}
	Let $T_W$ be the two-way transducer obtained from $\hat{A}$.
	A one-way transducer $T_O$ equivalent to $T_W$ exists if and only if
	a $dt^R$ $T$ equivalent to $\hat{A}$ exists.
\end{lemma}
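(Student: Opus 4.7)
The plan is to observe that Lemma~\ref{iff} is essentially the combination of the two auxiliary lemmas that immediately precede it, namely Lemma~\ref{lemma aux} and Lemma~\ref{lemma aux 2}, and to simply splice these two implications together.

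For the ``if'' direction, I would invoke Lemma~\ref{lemma aux 2}: starting from a $dt^R$ $T$ equivalent to $\hat{A}$, that lemma already produces a one-way transducer equivalent to the two-way transducer $T_W$ constructed from $\hat{A}$ in Section~\ref{tree to string}. For the ``only if'' direction, I would invoke Lemma~\ref{lemma aux}: given any one-way transducer $T_O$ equivalent to $T_W$, that lemma already shows how to construct from $T_O$ (via the top-down transducer $T'$ of Lemma~\ref{lemma 9}, and then via the determinization result of~\cite{DBLP:journals/ipl/Engelfriet78} if $T_O$ happens to be nondeterministic but functional) a $dt^R$ equivalent to $\hat{A}$.

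The two directions are genuinely of different character: the forward direction rests on the substitute-property and the delicate guessing construction inside the proof of Lemma~\ref{lemma 10}, while the backward direction rests on Lemmas~\ref{lemma 7}, \ref{lemma 8}, \ref{lemma 9} together with the fact that $\hat{A}$ has the string-like property (so that the prefix-encoding $\tilde{s}$ of a tree in $\text{range}(B)$ carries enough information to simulate $A'$). Since both of these have already been established in full, there is no additional technical obstacle at this point; the only thing to verify is that the $T_W$ referred to in Lemma~\ref{iff} is literally the same object as the one referred to in both auxiliary lemmas, which is immediate by construction.

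Consequently, my write-up would be a short paragraph of the form ``The forward implication is Lemma~\ref{lemma aux 2}, and the converse is Lemma~\ref{lemma aux}.'' The main conceptual obstacle has already been absorbed into the proof of Lemma~\ref{lemma 10} (constructing the nondeterministic $t^R$ $N$ that simulates the hypothetical $dt^R$ $T$ by guessing the relabeling of $B_T$ and exploiting the substitute-property); once that is in place, Lemma~\ref{iff} is purely a bookkeeping step.
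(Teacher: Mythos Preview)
Your proposal is correct and matches the paper's own argument exactly: the paper states Lemma~\ref{iff} immediately after Lemma~\ref{lemma aux 2} with the one-line justification that it follows by combining Lemma~\ref{lemma aux} and Lemma~\ref{lemma aux 2}. One small terminological quibble: in your final write-up you call Lemma~\ref{lemma aux 2} the ``forward implication,'' but since the statement reads ``one-way exists iff $dt^R$ exists,'' the forward (only-if) direction is actually Lemma~\ref{lemma aux}; your earlier labeling of the ``if'' and ``only if'' directions was correct, so just keep that wording.
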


Since it is decidable whether or not 
a one-way transducer equivalent to $T_W$ exists
due to~\cite{DBLP:conf/lics/FiliotGRS13}, we obtain 
the following theorem
with Lemmas~\ref{necessary condition},~\ref{necessary condition decidable},~\ref{lemma aux} and~\ref{iff}. 

\begin{theorem}\label{theorem 1}
	For a $datt$  with monadic output, it is decidable
	whether or not an equivalent $dt^R$   exists and if so then it  can be constructed.
\end{theorem}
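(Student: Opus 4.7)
The plan is essentially to assemble all the machinery developed in the section into a decision procedure, so the proof is primarily a matter of chaining the preceding lemmas together in the right order. Given a $datt$ $A$ with monadic output, I would proceed in three stages.

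First, I would test whether $A$ has the single path property, which is decidable by Lemma~\ref{necessary condition decidable}. If it does not, then by Lemma~\ref{necessary condition} no $dt^R$ equivalent to $A$ can exist, so I would immediately answer ``no''. Otherwise, Lemma~\ref{necessary condition decidable} also tells me that the associated $att^R$ $\hat{A}=(B,A')$ constructed from $A$ has the string-like property, and Lemma~\ref{associate equal} guarantees that $\hat{A}$ and $A$ are equivalent. So from this point on I can work with $\hat{A}$ instead of $A$ without loss.

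Second, I would convert $\hat{A}$ into the two-way transducer $T_W$ as in Section~\ref{tree to string}. Since $\hat{A}$ has the string-like property, this conversion is well-defined and Lemma~\ref{lemma 7} captures its semantic correspondence to $A'$ on prefixes. I would then invoke the decision procedure of~\cite{DBLP:conf/lics/FiliotGRS13} on $T_W$ to check whether a (possibly nondeterministic functional) one-way transducer $T_O$ equivalent to $T_W$ exists. By Lemma~\ref{iff}, this existence is equivalent to the existence of a $dt^R$ equivalent to $\hat{A}$, and hence to the existence of a $dt^R$ equivalent to $A$. This yields decidability.

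Third, in the affirmative case, I would use the construction of Lemma~\ref{lemma aux}: starting from $T_O$, construct the top-down transducer $T'$ over $\Sigma^B$ and obtain the (possibly nondeterministic but functional) $t^R$ $N=(B,T')$ equivalent to $\hat{A}$; then apply the standard conversion of~\cite{DBLP:journals/ipl/Engelfriet78} from a nondeterministic functional $t^R$ to an equivalent $dt^R$. Composing these with the equivalence of $A$ and $\hat{A}$ yields the desired $dt^R$ equivalent to $A$. The only genuinely nontrivial obstacle in the whole argument has already been dispatched in Lemma~\ref{lemma 10} (the construction of the nondeterministic $t^R$ $N$ that simulates a candidate $dt^R$ $T$ while reusing the relabeling $B$ of $\hat{A}$); with that lemma in hand, the proof of Theorem~\ref{theorem 1} is just a short bookkeeping argument chaining Lemmas~\ref{necessary condition},~\ref{necessary condition decidable},~\ref{lemma aux}, and~\ref{iff} together with the cited external results.
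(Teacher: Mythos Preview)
Your proposal is correct and matches the paper's own argument essentially verbatim: the paper obtains Theorem~\ref{theorem 1} precisely by chaining Lemmas~\ref{necessary condition}, \ref{necessary condition decidable}, \ref{lemma aux}, and~\ref{iff} together with the decidability result of~\cite{DBLP:conf/lics/FiliotGRS13}. Your explicit mention of Lemma~\ref{associate equal} and the determinization step from~\cite{DBLP:journals/ipl/Engelfriet78} simply unpacks what is already folded into Lemma~\ref{lemma aux}.
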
 

In the following, we will improve the result of Theorem~\ref{theorem 1}.
More precisely, our aim is to show  that even for nondeterministic
 $att^U$ with monadic output
it is decidable
whether or not an equivalent $dt^R$ exists. 
To do so,
we will first show that for a $datt^U$ with monadic output, it is decidable 
whether or not an equivalent $dt^R$ exists.
Afterwards, we will show that 
(a) it is decidable whether or not a nondeterministic $att^U$  with monadic output is functional and 
(b) that every functional $att^U$ with monadic output can be simulated by a  $datt^U$.
This yields the result we aimed for.

\section{Top-Down Definability of Attributed Tree Transducers with Look-Around}
In this section, we show that the result of  Theorem~\ref{theorem 1} can be extended to
$datts$  with look-around and monadic output.

In order to show that for a  $datt^U$ with monadic output, it is decidable 
whether or not an equivalent $dt^R$ exists as well, we first show that the following auxiliary results holds.

\begin{lemma}\label{lemma auiliary}
	Consider the $att^U$ $\breve{A}=(U, A)$. Then an equivalent  $att^U$ $\breve{A}_2=(U_2, A_2)$
	can be constructed such that $\text{dom}(A_2) \subseteq \text{range} (U_2)$.
\end{lemma}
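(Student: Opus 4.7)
Write $\Sigma'$ for the input alphabet of $A$. By~\cite{DBLP:journals/ipl/FulopM00}, $\text{dom}(A)$ is effectively regular, and $\text{range}(U)$ is effectively regular because $U$ consists of a bottom-up relabeling followed by a top-down relabeling, both of which preserve regularity. Put $L := \text{range}(U) \cap \text{dom}(A)$ and let $B_L = (Q, \Sigma', \Sigma', F, R_{B_L})$ be a deterministic bottom-up tree automaton recognizing $L$. The strategy is to build $U_2$ with $\text{range}(U_2) = L$ and $A_2$ with $\text{dom}(A_2) = L$, giving the desired inclusion (in fact equality) directly.

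To construct $U_2$: since $L$ is regular and top-down relabelings with look-ahead preserve inverse regular languages, $\tau_U^{-1}(L) \subseteq T_\Sigma$ is also regular. I intersect the bottom-up look-ahead of $U$ with a bottom-up tree automaton recognizing $\tau_U^{-1}(L)$ via a standard product, leaving the top-down part of $U$ unchanged. This gives $\text{dom}(U_2) = \text{dom}(U) \cap \tau_U^{-1}(L)$, $\tau_{U_2}$ agreeing with $\tau_U$ on this domain, and $\text{range}(U_2) = L$.

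To construct $A_2$: I take the product of $A$ with $B_L$. The attributes of $A_2$ are the pairs $(S \cup I) \times Q$. For each rule $\alpha(\pi) \to \xi$ of $A$ in $R_\sigma$ and each $B_L$-transition $\sigma(q_1, \ldots, q_k) \to q$, I add to $R_\sigma$ of $A_2$ the rule $(\alpha, q)(\pi) \to \xi'$, where $\xi'$ is obtained from $\xi$ by replacing every $a(\pi i)$ with $(a, q_i)(\pi i)$ and every $b(\pi)$ with $(b, q)(\pi)$; rules for inherited attributes $b(\pi i) \to \eta$ are handled symmetrically, with $q$ determined at the parent and $q_i$ at the child. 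The initial attribute of $A_2$ is $(a_0, q^\star)$ for a nondeterministically chosen $q^\star \in F$, and the root-marker rules of $A_2$ are obtained from those of $A$ via the same pairing. Because $B_L$ is deterministic, the $q$-coordinates of any occurring attribute are forced by the input subtrees below them, so at most one instantiation yields a full derivation; this makes $\text{dom}(A_2) = \text{dom}(A) \cap L = L$ and $\tau_{A_2}$ agree with $\tau_A$ on $L$.

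With $\text{dom}(A_2) = L = \text{range}(U_2)$ the required inclusion holds. Equivalence is immediate: $\breve{A}_2$ maps $s$ to $t$ iff $\tau_{U_2}(s) = s' \in L$ and $\tau_{A_2}(s') = t$, iff $\tau_U(s) \in L \subseteq \text{dom}(A)$ and $\tau_A(\tau_U(s)) = t$, iff $\breve{A}$ maps $s$ to $t$. The main obstacle is justifying the product construction for $A_2$, particularly the interplay of the $q$-coordinate with inherited attributes, whose rules propagate information in the direction opposite to the bottom-up automaton; this is resolved by determinism of $B_L$, which ensures the $q_i$-values at children are consistently determined by the children's subtrees, so the product rules are well-defined and a straightforward induction on derivations verifies correctness.
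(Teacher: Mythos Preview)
Your construction of $U_2$ is fine, but the product construction for $A_2$ has a genuine gap: it does \emph{not} guarantee $\text{dom}(A_2)\subseteq L$. The product with $B_L$ only enforces the $B_L$-run along the nodes that attributes of $A$ actually visit. Since an $att$ need not visit every node of its input (indeed, in the monadic-output setting central to this paper it typically visits only a single root-to-leaf path), any ``bad'' subtree witnessing $s\notin L$ that is never reached by an attribute goes unchecked.

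Concretely, take $\Sigma'=\{g^2,e^0,d^0\}$, let $L$ be the set of trees all of whose leaves are $e$, and let $B_L$ have states $q_e,q_\bot$ with only $q_e$ final, rules $e\to q_e$, $d\to q_\bot$, $g(q_e,q_e)\to q_e$, all other $g$-tuples $\to q_\bot$. Let $A$ have a single synthesized attribute $a_0$ with $a_0(\pi)\to a_0(\pi 1)\in R_g$ and $a_0(\pi)\to e\in R_e\cap R_d$. On input $s=g(e,d)\notin L$, your $A_2$ starts with $(a_0,q_e)$ at the root; the transition $g(q_e,q_e)\to q_e$ supplies the rule $(a_0,q_e)(\pi)\to(a_0,q_e)(\pi 1)$, moving to the left child $e$, where $(a_0,q_e)(\pi)\to e$ applies and the derivation succeeds. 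The right child $d$ is never touched, so its state $q_\bot$ is never compared against anything. Hence $s\in\text{dom}(A_2)\setminus L$, violating $\text{dom}(A_2)\subseteq\text{range}(U_2)$.

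The paper's proof avoids this by having $A_2$ first perform an explicit pre-order traversal of the \emph{entire} input tree (using fresh auxiliary attributes) to verify that the annotated $B$-rules form a consistent accepting run, and only then begin simulating $A$. That full traversal is exactly what your product construction is missing. A smaller point: an $att$ has a single initial attribute, so ``nondeterministically chosen $q^\star\in F$'' needs to be realized via a fresh initial attribute with one rule per $q^\star$; and your appeal to \cite{DBLP:journals/ipl/FulopM00} for regularity of $\text{dom}(A)$ should be checked for the nondeterministic, possibly circular case treated here (the paper's construction sidesteps this by using only regularity of $\text{range}(U)$).
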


\begin{proof}
First of all, note that since $U$ is a relabeling, its range is effectively recognizable, i.e,
a deterministic bottom-up automaton recognizing it exists and can be constructed.
%
%
%
%
Let $B$ be a deterministic bottom-up tree automaton recognizing the range of $U$.
The idea is as follows. 
Denote by $\Sigma$ the output alphabet of $U$.
Denote by $\Sigma_2$ the  input alphabet of $A_2$.
We define that 
$\Sigma_2$ consists of symbols of the form
$(\sigma, \rho)$ where $\sigma\in \Sigma_k$ and $\rho=\sigma (p_1,\dots,p_k) \rightarrow p(\sigma (x_1,\dots,x_k))$ is a rule of~$B$.
Let $h$ be the tree homomorphism from $T_{\Sigma_2}$ to $T_\Sigma$ given by $h((\sigma, \rho))=\sigma$.
Given the specific form of its input alphabet, $A_2$ tests for an input tree $\breve{s}$ over $\Sigma_2$
whether or not $h(\breve{s})\in \text{range} (U)$.
In the affirmative case, $A_2$ begins to simulate $A$ operating on input $h(\breve{s})$,
otherwise $A_2$ does not produce any output tree.

How does $A_2$ test whether or not $h(\breve{s})\in \text{range} (U)$? 
The idea is that before any output is produced, $A_2$ traverses $\breve{s}$ in  a pre-order fashion.
Let the node $v$ be labeled by $(\sigma, \rho)$, where $\rho= \sigma (p_1,\dots,p_k) \rightarrow p(\sigma (x_1,\dots,x_k))$.
Then for all $i\in [k]$, 
$A_2$ checks whether or not
$v.i$ is labeled by a symbol of the form $(\sigma_i, \rho_i)$ such that $p_i$ occurs on the
right-hand side of $\rho_i$.

We now formally define $A_2=(S_2,I_2, \Sigma_2,\Delta, \tilde{a}_0, R')$.
Let $A=(S,I,\Sigma,\Delta,a_0,R)$.
Denote by  $\text{maxrk}$ the maximal rank of symbols in $\Sigma$.
We define
\[
S_2= S \cup \{\tilde{a}_0, \tilde{a}\} \cup \{a_{\rho, j}\mid \rho \text{ is a rule of } B \text{ and } j\leq \text{maxrk}\}
\]
Furthermore, we define $I_2=I\cup\{\tilde{b},\tilde{b}'\}$.
Informally, the attributes $\tilde{a}$ and $\tilde{b}$ are used to traverse the input tree,
while attributes of the form $a_{\rho, j}$ and $\tilde{b}'$ are used to perform the checks specified above.

The rules of $A_2$ are defined as follows:
Firstly we define $R_\# \subseteq R'_\#$.
Furthermore, we define $\tilde{b}(\pi 1) \rightarrow a_0 (\pi 1) \in R'_{\#}$.

For a symbol $(\sigma, \rho)$ where $\sigma\in \Sigma_k$ and $\rho$ is a rule of the automaton $B$, we define 
that $R_\sigma\subseteq R'_{(\sigma,\rho)}$.

Let $k>0$.
Then we define that $\tilde{a} (\pi)\rightarrow a_{\rho,1} (\pi 1)\in R'_{(\sigma,\rho)}$.
Additionally, if a final state occurs on the right-hand side of $\rho$ then we also define
$\tilde{a}_0 (\pi) \rightarrow a_{\rho,1} (\pi 1) \in R'_{(\sigma,\rho)}$.
Let $\rho'=\sigma' (p'_1,\dots,p'_{k'}) \rightarrow p'(\sigma (x_1,\dots,x_{k'}))$ be a rule of $B$ and let $j\leq k'$.
Then we define 
$a_{\rho',j} (\pi) \rightarrow \tilde{b}' (\pi)\in R'_{(\sigma,\rho)}$ if 
$p_j'$ occurs on the right-hand side of $\rho$.
For the inherited attribute $\tilde{b}'$, we define
$\tilde{b}' (\pi i)\rightarrow a_{\rho, i+1} (\pi (i+1) )\in R'_{(\sigma,\rho)}$ for $i<k$ and
$\tilde{b}' (\pi k)\rightarrow \tilde{a} (\pi 1 )\in R'_{(\sigma,\rho)}$.
Furthermore, for the inherited attribute $\tilde{b}$, we define 
$\tilde{b} (\pi i)\rightarrow \tilde{a} (\pi (i+1) )$ for $i<k$ and
$\tilde{b} (\pi k)\rightarrow \tilde{b} (\pi)$.

Let $k=0$.
Then we define $\tilde{a} (\pi)\rightarrow \tilde{b} (\pi)\in R'_{(\sigma,\rho)}$.
If a final state occurs on the right-hand side of $\rho$ then we additionally define
$\tilde{a}_0 (\pi) \rightarrow \tilde{b} (\pi) \in R'_{(\sigma,\rho)}$. 
For a rule $\rho'=\sigma' (p'_1,\dots,p'_{k'}) \rightarrow p'(\sigma (x_1,\dots,x_{k'}))$ of $B$ and $j\leq k'$,
we define 
$a_{\rho',j} (\pi) \rightarrow \tilde{b}' (\pi)\in R'_{(\sigma,\rho)}$ if 
$p_j'$ occurs on the right-hand side of $\rho$.

Now all that is left is to define $U_2$.
The automaton $B$ induces the deterministic bottom-up relabeling $B'$
as follows:
If $\rho=\sigma (p_1,\dots,p_k) \rightarrow p(\sigma (x_1,\dots,x_k))$ is a rule of  $B$ then 
$\sigma (p_1 (x_1),\dots,p_k (x_k)) \rightarrow p ((\sigma,  \rho) (x_1,\dots,x_k))$
is a rule of  $B'$.
Due to Theorem~2.6 of~\cite{DBLP:journals/mst/Engelfriet77} (and its proof), the composition of a 
$U$ and $B'$  can be simulated by a single  look-around  and this look-around can be constructed.
We define $U_2$  as this look-around.
This concludes the construction of $\breve{A}_2$.
By construction it should be clear that $\breve{A}$ and $\breve{A}_2$ are equivalent.
\end{proof}

Next, let a  $datt^U$ $\breve{A}=(U, A)$ with monadic output,
where $U$ is a top-down relabeling
with look-ahead and $A$ is an $att$ with monadic output,  be given.
Then the following holds.

\begin{lemma}\label{lemma auiliary 2}
	If a $dt^R$ equivalent to $\breve{A}=(U, A)$ exists then
	a $dt^R$ $\hat{T}$ exists such that $A$ and $\hat{T}$ are equivalent on the range of $U$,
	i.e., if $s\in \text{range} (U)$ then $(s,t)\in \tau_{A}$ if and only if
	$(s,t)\in \tau_{\hat{T}}$.
\end{lemma}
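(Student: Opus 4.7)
The plan is to realize $\hat{T}$ as a composition $U^\leftarrow ; T$, where $U^\leftarrow$ is a deterministic top-down relabeling with look-ahead that, for every $s \in \text{range}(U)$, outputs some $s_0$ with $\tau_U(s_0) = s$. The correctness of this approach rests on the following observation: since $T$ and $\breve{A}$ are equivalent and $\breve{A}$ computes $A$ after $U$, we have $\tau_T(s_0) = \tau_A(\tau_U(s_0))$ for every $s_0 \in \text{dom}(\breve{A})$. Hence for any $s \in \text{range}(U) \cap \text{dom}(A)$ and any preimage $s_0$ of $s$ under $U$, $\tau_T(s_0) = \tau_A(s)$, so the choice of preimage is immaterial; producing the output of $T$ on any preimage yields $\tau_A(s)$.

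To construct $U^\leftarrow$, write $U = (B_U, T_U)$, where $B_U$ is the bottom-up look-ahead and $T_U$ is the deterministic top-down relabeling. The look-ahead of $U^\leftarrow$ is a bottom-up relabeling that computes, at each node $v$ of the input tree $s$, a bounded piece of information: for each state $q$ of $T_U$, the set of pairs $(\ell, \rho)$ consisting of a root label $\ell$ and a $B_U$-rule $\rho$ such that there exists some $s_0' \in T_\Sigma$ with root label $\ell$ whose bottom-up run of $B_U$ uses $\rho$ at the root and which, after the resulting $B_U$-relabeling, is transformed by $T_U$ starting in state $q$ into $s/v$. These tables are computable by a standard subset construction. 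The top-down part of $U^\leftarrow$ threads the current $T_U$-state $q$ downwards (starting from the initial state), and at each node $v$ selects, say, the lexicographically smallest eligible pair $(\ell, \rho)$ from the look-ahead table; this fixes $s_0[v] := \ell$, and the rule $\rho$ of $B_U$ together with the matching $T_U$-rule determines the constraints (the required child $B_U$-state and $T_U$-state) that are propagated down to each child. A final-state condition on the look-ahead at the root ensures $\text{dom}(U^\leftarrow) = \text{range}(U)$.

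The desired $\hat{T}$ is then $U^\leftarrow ; T$. The class of $dt^R$ translations is closed under left-composition with deterministic top-down relabelings with look-ahead: the two look-aheads can be merged into a single bottom-up relabeling that additionally tracks the $B_T$-state attained on the preimage $U^\leftarrow$ would produce, and the two top-down components compose into a single top-down transducer. Hence $\hat{T}$ is a $dt^R$ and is effectively constructible. Correctness on $\text{range}(U)$ is then immediate: for $s \in \text{range}(U) \cap \text{dom}(A)$, $\tau_{\hat{T}}(s) = \tau_T(\tau_{U^\leftarrow}(s)) = \tau_A(s)$ by the key observation of Step~1; for $s \in \text{range}(U) \setminus \text{dom}(A)$, every preimage lies outside $\text{dom}(T) = \text{dom}(\breve{A})$, so $\hat{T}(s)$ is undefined, matching the undefinedness of $\tau_A(s)$.

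The main obstacle is verifying that the node-local greedy choices made by $U^\leftarrow$ assemble into a globally consistent preimage, i.e., that each selection at $v$ leaves the children's look-ahead tables non-empty under the constraints propagated from $v$. This is guaranteed directly by the semantics of the tables: any witnessed entry at $v$ is realized by a concrete preimage $s_0'$ of $s/v$ whose children, in turn, witness valid entries in the children's tables under the parent-induced constraints, so a routine structural induction shows $\tau_U(s_0) = s$ for the $s_0$ produced by $U^\leftarrow$. A secondary point is folding $B_T$ into the merged look-ahead of $\hat{T}$, which is handled by a straightforward product construction on top of the subset construction above.
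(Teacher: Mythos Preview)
Your argument is correct, but it takes a considerably more elaborate route than the paper. The paper exploits a simple structural observation: since $U$ is a relabeling, one may assume without loss of generality that $U$ preserves the original input label as part of each output label (relabeling $\sigma$ to some $\sigma_\zeta$). This makes $U$ injective on its domain, and the label homomorphism $h(\sigma_\zeta)=\sigma$ is then a left inverse of $U$ on $\text{range}(U)$. The desired $\hat{T}$ is obtained by taking the given $dt^R$ $\breve{T}=(R,T)$ and replacing $R$ by $R'$, where $R'$ simply ignores the annotations $\zeta$ and otherwise behaves like $R$; correctness is immediate from $h(s)\in\text{dom}(\breve{T})$ and equivalence of $\breve{T}$ and $\breve{A}$.

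Your construction of an explicit section $U^{\leftarrow}$ via bottom-up tables and a greedy top-down selection is sound (the witnessed-entry argument you sketch does make the local choices globally consistent), and the closure of $dt^R$ under composition gives the final $\hat{T}$. What you gain is that you never invoke the WLOG on $U$; what you pay is a subset construction and a nontrivial consistency argument where the paper needs only a renaming of the look-ahead's input alphabet. The paper's trick is worth knowing: whenever the intermediate device is a relabeling, injectivity can be forced for free by carrying the original label along, which trivializes the inversion you are building by hand.
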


\begin{proof}
	Denote by $\Sigma$ and $\Sigma^U$ the input and output alphabet of $U$, respectively.
	W.l.o.g. we can assume 
	that whenever $U$ relabels an input symbol $\sigma\in \Sigma$,  it preserves the original symbol, i.e.,
	we can assume that
	$\Sigma^U$ contains symbols of the form  $\sigma_{\zeta}$ where $\sigma\in \Sigma$ and $\zeta$ is an annotation made to $\sigma$
	and that if a symbol $\sigma\in\Sigma$ is relabeled by $U$ then it is relabeled by a symbol of the form $\sigma_\zeta$.
	
	Denote by $h$ the homomorphism from $T_{\Sigma^U}$ to $T_\Sigma$ defined by
	$h(\sigma_\zeta)=\sigma$.
	Let $\breve{T}=(R,T)$ be a $dt^R$ equivalent  to $\breve{A}$.
	Subsequently, we define the $dt^R$ $\hat{T}=(R',T)$ from 
	$\breve{T}$ such that 
	$A$ and $\hat{T}$ are equivalent on the range of $U$.
	
	Consider the bottom-up relabeling $R$ of $\breve{T}$.
	Obviously the input alphabet of $R$ is $\Sigma$.
	It is easy to see that a bottom-up relabeling $R'$ 
	can be constructed from $R$ such that (a) the input alphabet of $R'$ is $\Sigma^U$
	and (b) $(s,s')\in \tau_{R'}$ if and only if $(h(s),s')\in \tau_R$.
	Informally, $R'$ ignores the annotations of all symbols
	occurring in $s$ and behaves effectively identical to $R$.
	
	To see that $A$ and $\hat{T}$ are equivalent on the range of $U$, we show that if
	$s\in \text{range} (U)$ and $(s,t)\in \tau_{A}$ then $(s,t)\in \tau_{\hat{T}}$.
	The converse follows analogously.
	Since $\breve{T}$ and $\breve{A}$ are equivalent, 	$s\in \text{range} (U)$ and $(s,t)\in \tau_{A}$
	imply that $(h(s), t)\in \tau_{\breve{T}}$ which means that $s'$ exists such that
	$(h(s),s')\in \tau_R$ and $(s',t)\in \tau_T$. By construction of $R'$, $(s,s')\in \tau_{R'}$ holds.
	By definition of $\hat{T}$ this implies $(s,t)\in \tau_{\hat{T}}$.
\end{proof}

Combining Lemmas~\ref{lemma auiliary} and~\ref{lemma auiliary 2}, it follows that 
if a $dt^R$ equivalent to $\breve{A}=(U, A)$ exists then
a $dt^R$ $T$ exists such that $A$ and $T$ are equivalent.
In particular, this follows since by Lemma~\ref{lemma auiliary},
we can assume that  $\text{dom} (A) \subseteq \text{range} (U)$.
Additionally, we can assume that   $\text{dom} (T) \subseteq \text{range} (U)$ holds since
 using its bottom-up relabeling,
$T$ can test whether or not its input tree $s$ is a tree in $\text{range} (U)$
or not. If not then $T$ simply produces no output on input $s$.
Recall that since $U$ is a relabeling, its range is recognizable
and thus $T$ is able to test whether $s\in \text{range} (U)$.

By Theorem~\ref{theorem 1}, it is decidable whether or not a
$dt^R$ $T$ exists such that $A$ and $T$ are equivalent.
If $T$  does not exist then it follows that no $dt^R$ equivalent to $\breve{A}$ exists.
%
On the other hand, the existence of  $T$ 
implies that
a $dt^R$ equivalent to $\breve{A}$ exists. In particular the following holds:
\[
	\tau_{\breve{A}}  = \{(s,t) \mid (s,s') \in \tau_U \text{ and } (s',t) \in \tau_{A}\} 
	 = \{(s,t) \mid (s,s') \in \tau_U \text{ and } (s',t) \in \tau_{T}\} 
\]
By Theorem~2.11 of~\cite{DBLP:journals/mst/Engelfriet77}, $dt^R$ are closed under composition. 
Thus, since $T$ is a $dt^R$ and by definition $U$ is also a $dt^R$,
there exists a $dt^R$ $\hat{T}$ such that $\tau_{\hat{T}} =  \{(s,t) \mid (s,s') \in \tau_U \text{ and } (s',t) \in \tau_{T}\}$,
which yields the following.

\begin{theorem}\label{look-around extension}
	For a  $datt^U$ with monadic output, it is decidable
	whether or not an equivalent $dt^R$   exists and if so then it  can be constructed.
\end{theorem}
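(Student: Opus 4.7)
The plan is to reduce the problem to the deterministic case handled by Theorem~\ref{theorem 1} via the two preparatory lemmas already established. Given an arbitrary $datt^U$ $\breve{A} = (U,A)$ with monadic output, I would first use Lemma~\ref{lemma auiliary} to replace $\breve{A}$ by an equivalent $datt^U$ $\breve{A}' = (U', A')$ for which $\text{dom}(A') \subseteq \text{range}(U')$; this normalization guarantees that the behavior of $A'$ on trees outside $\text{range}(U')$ is irrelevant to the translation realized by $\breve{A}'$, so that deciding top-down definability of the underlying $datt$ captures the behavior of the entire $datt^U$.

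Having carried out this normalization, I would apply the decision procedure of Theorem~\ref{theorem 1} to the underlying $datt$ $A'$. There are two cases. If Theorem~\ref{theorem 1} reports that no $dt^R$ equivalent to $A'$ exists, then by the contrapositive of Lemma~\ref{lemma auiliary 2} no $dt^R$ equivalent to $\breve{A}'$ (and hence none equivalent to $\breve{A}$) exists, so I return a negative answer. If instead Theorem~\ref{theorem 1} produces a $dt^R$ $T$ equivalent to $A'$, then I would assemble the desired $dt^R$ equivalent to $\breve{A}$ by composing $U'$ and $T$.

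For the affirmative case, the key observation is that a top-down relabeling with look-ahead is itself a $dt^R$, so $U'$ is a $dt^R$ and so is $T$. By Theorem~2.11 of~\cite{DBLP:journals/mst/Engelfriet77}, the class of $dt^R$ translations is closed under composition, and thus a $dt^R$ $\hat{T}$ realizing $\tau_{U'} \circ \tau_T$ can be effectively constructed. Since $\text{dom}(A') \subseteq \text{range}(U')$ and $A'$, $T$ coincide there, this composition equals $\tau_{\breve{A}}$, yielding the desired $dt^R$. I do not foresee a major technical obstacle, as the two auxiliary lemmas do the heavy lifting; the only delicate point is ensuring that the $dt^R$ produced by Theorem~\ref{theorem 1} is equivalent to $A'$ on all of $\text{dom}(A')$ rather than merely on some subset, but this is automatic because Lemma~\ref{lemma auiliary} arranges $\text{dom}(A') \subseteq \text{range}(U')$ at the outset.
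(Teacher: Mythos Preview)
Your proposal is correct and follows essentially the same approach as the paper: normalize via Lemma~\ref{lemma auiliary} so that $\text{dom}(A') \subseteq \text{range}(U')$, invoke Lemma~\ref{lemma auiliary 2} (in contrapositive form) together with Theorem~\ref{theorem 1} for the decision, and in the affirmative case compose $U'$ with the resulting $dt^R$ using closure of $dt^R$ under composition (Theorem~2.11 of~\cite{DBLP:journals/mst/Engelfriet77}). The only point the paper makes slightly more explicit is that the $dt^R$ $\hat{T}$ supplied by Lemma~\ref{lemma auiliary 2} is a~priori only equivalent to $A'$ on $\text{range}(U')$, so one restricts its domain to $\text{range}(U')$ via the look-ahead before concluding that no $dt^R$ equivalent to $A'$ exists in the negative case; your normalization already sets this up, so the step is indeed routine.
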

\section{Functionality is Decidable for Attributed Tree Transducer with Monadic Output and  Look-Around}
In this section we show that for an $att^U$ $\breve{A}=(U,A)$ with monadic output, it is decidable
whether or not $\breve{A}$ is functional. Note that $\breve{A}$ and hence $A$ may be circular.
Obviously, if $\breve{A}$ is functional, then $A$ must be functional on $\text{range}(U)$, i.e.,
for each $s\in \text{range}(U)$, at most one tree $t$ exists such that $(s, t)\in \tau_{A}$.
Recall that by Lemma~\ref{lemma auiliary}, we can assume that $\text{dom} (A)\subseteq \text{range}(U)$.
Thus, it follows that $A$ itself must be functional if $\breve{A}$ is.
Consequently, it is sufficient to show that it is decidable whether or not $A$ is functional.
The idea is to construct  $datts^R$ $A_1$ and $A_2$ 
such that $A_1$ and $A_2$ are equivalent if and only if
$A$ is functional.
Recall that by Proposition~\ref{equivalent proposition}, equivalence is decidable for $datts^R$.
Hence with $A_1$ and $A_2$ along with Proposition~\ref{equivalent proposition}, 
functionality of $A$ is decidable.  

First consider the following.
Let $A=(S,I,\Sigma,\Delta,a_0,R)$.
Recall that 
$\text{RHS}_A (\#, b(\pi 1))$ denotes the set of all right-hand sides of rules in $R_\#$ that are of the form
$b(\pi 1)\rightarrow \xi$, where $b\in I$.
Recall that the set  $\text{RHS}_A (\sigma, a(\pi))$ where $a\in S$ and $\sigma\in \Sigma$ is defined analogously. 

\begin{lemma}\label{assumption}
	Let $A=(S,I,\Sigma,\Delta,a_0,R)$ be an $att$.
	Then an equivalent $att$ $A'=(S',I,\Sigma,\Delta,a_0,R')$  can be constructed such that
	$R'_\#$  contains no distinct rules with the same left-hand side and such that
	$A'$ is only circular if $A$ is. 	
\end{lemma}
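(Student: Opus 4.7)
The plan is to push the nondeterminism currently present in $R_\#$ one level deeper, into rules of a fresh synthesized attribute that takes over the choice at position~$1$.

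As a preprocessing step I would discard from $R_\#$ every rule $b(\pi 1)\to\xi$ whose right-hand side contains a leaf of the form $b'(\pi)$ with $b'\in I$. After the substitution $[\pi\leftarrow\epsilon]$ such a leaf becomes $b'(\epsilon)$, which matches no rule of $A$ (inherited-attribute rules concern positions $v.i$ with $i>0$), so the rule never contributes to a pair of $\tau_A$; moreover removing rules cannot create any new derivation cycle. Hence we may henceforth assume that every rule in $R_\#$ has right-hand side in $T_\Delta[\{a'(\pi 1)\mid a'\in S\}]$, and the resulting transducer is equivalent to $A$ and no more circular than $A$.

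Next, for each $b\in I$ whose surviving set of root-marker rules $b(\pi 1)\to\xi_1^b,\dots,b(\pi 1)\to\xi_{n_b}^b$ satisfies $n_b\ge 2$, I would introduce a fresh synthesized attribute $a_b$ and set $S':=S\cup\{a_b\mid n_b\ge 2\}$. Define $R'_\#$ by removing all $n_b$ rules for $b$ and adding instead the single rule $b(\pi 1)\to a_b(\pi 1)$; for each $\sigma\in\Sigma$ let $R'_\sigma:=R_\sigma\cup\{a_b(\pi)\to\xi_j^b[\pi 1\leftarrow\pi]\mid b\in I,\ n_b\ge 2,\ j\in[n_b]\}$. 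By construction $R'_\#$ contains at most one rule per left-hand side. Equivalence $\tau_{A'}=\tau_A$ is an easy simulation: every step $b(1)\Rightarrow_{A,s^\#}\xi_j^b[\pi\leftarrow\epsilon]$ is reproduced by the two steps $b(1)\Rightarrow_{A',s^\#}a_b(1)\Rightarrow_{A',s^\#}\xi_j^b[\pi\leftarrow\epsilon]$, where the second step uses the added rule for $a_b$ in $R'_{s[\epsilon]}$; conversely, because $a_b$ occurs in only one right-hand side (the new root-marker rule), every occurrence of $a_b$ produced in a derivation of $A'$ sits at position~$1$ and is immediately unfolded into some $\xi_j^b[\pi\leftarrow\epsilon]$.

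For circularity preservation, the key observation is that the only rule of $A'$ containing $a_b$ in its right-hand side is the root-marker rule $b(\pi 1)\to a_b(\pi 1)$, and $\#$ labels only the root, so in any derivation of $A'$ every occurrence of $a_b$ sits at position~$1$. A derivation cycle of $A'$ that uses $a_b$ must therefore pass through $b(1)$, and replacing each two-step segment $b(1)\Rightarrow_{A'}a_b(1)\Rightarrow_{A'}\xi_j^b[\pi\leftarrow\epsilon]$ by the single original step $b(1)\Rightarrow_A\xi_j^b[\pi\leftarrow\epsilon]$ turns it into a derivation cycle of $A$. A cycle of $A'$ that avoids every $a_b$ uses only rules inherited unchanged from $A$ and is already a cycle of $A$. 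Hence circularity of $A'$ implies circularity of $A$, as required. The one delicate point to watch is the preprocessing step: without it the rule $a_b(\pi)\to\xi_j^b[\pi 1\leftarrow\pi]$ could turn a $b'(\pi)$ leaf of $\xi_j^b$ into $b'(v)$ at the current node $v$ rather than the dead-end $b'(\epsilon)$ at the root, which would alter the semantics.
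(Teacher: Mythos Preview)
Your high-level plan---introduce a fresh synthesized attribute $a_b$ for each $b$ with multiple root-marker rules, route $b(\pi 1)$ through $a_b(\pi 1)$, and let $a_b$ make the nondeterministic choice---is exactly the paper's idea, and your preprocessing step and circularity argument are both careful and correct. However, the rules you add for $a_b$ are not syntactically well-formed $att$ rules under the paper's definition. A rule $a_b(\pi)\to\zeta$ in $R'_\sigma$ must have $\zeta\in T_\Delta[\{a'(\pi i)\mid a'\in S,\ i\in[k]\}\cup\{b'(\pi)\mid b'\in I\}]$ where $k$ is the rank of $\sigma$; synthesized occurrences on a right-hand side must point to a \emph{child} $\pi i$ with $i\ge 1$, never to the current node $\pi$. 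Your right-hand side $\xi_j^b[\pi 1\leftarrow\pi]$ has (after your preprocessing) leaves of the form $a'(\pi)$ with $a'\in S$, which is outside this set. So the object $A'$ you build is not an $att$ as defined here, and the lemma asks for an $att$.

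The paper repairs precisely this point by unfolding one additional derivation step into the new rule: rather than $a_b(\pi)\to\xi_j^b[\pi 1\leftarrow\pi]$, it replaces each leaf $a(\pi 1)$ of $\xi_j^b$ by a right-hand side $\psi\in\text{RHS}_A(\sigma,a(\pi))$, so that the resulting $\zeta$ has leaves only of the legal shapes $a''(\pi i)$ and $b'(\pi)$. Your simulation and circularity arguments then go through unchanged for these composed rules (the two-step segment $b(1)\Rightarrow a_b(1)\Rightarrow\cdots$ now corresponds to two original steps rather than one). Your preprocessing observation about dead-end $b'(\pi)$ leaves is correct and worth keeping, but it does not by itself resolve the synthesized-at-$\pi$ issue.
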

\begin{proof}
	In the following denote by $\mathcal{I}$ the set $\{b\in I \mid |\text{RHS}_A (\#, b(\pi 1))|>1 \}$.
	We define $S'=S\cup \{a_b \mid b\in \mathcal{I}\}$ and
	\[
		R'_\#  = \{ b(\pi 1) \rightarrow \xi \mid  b\in  I\setminus \mathcal{I},\  \xi \in  \text{RHS}_A (\#, b(\pi 1)) \}\
		\cup \ \{ b(\pi 1) \rightarrow a_b  (\pi 1)\mid b\in \mathcal{I} \}.
	\]
For $\sigma\in \Sigma$ we define
	\[
	\begin{array}{ccl}
		R_\sigma' & = &R_\sigma\ \cup \ \{a_b \rightarrow \xi \mid b\in \mathcal{I} \text{ and } \xi \in T_\Delta\cap \text{RHS}_A (\#, b(\pi 1)) \}\\
		& \cup & \{a_b \rightarrow \zeta \mid b\in \mathcal{I} \text{ and }  \exists a\in S,\ \psi\in  \text{RHS}_A (\sigma, a(\pi )),\  \xi \in  \text{RHS}_A (\#, b(\pi 1)): \\
		&&  
		a(\pi 1) \text{ occurs in }\xi \text{ and }
		\zeta=\xi [v\leftarrow \psi \mid v\in V(\xi),\ \xi [v] =a(\pi 1)]\}.\\
	\end{array}
	\]
	It can be shown by straight-forward structural induction that $A$ and $A'$ are equivalent.
	By construction, it follows that if $A$ is noncircular then $A'$ is too.  
\end{proof}

Recall that we aim to construct  $datts^R$ $A_1$ and $A_2$ 
such that $A_1$ and $A_2$ are equivalent if and only if
$A$ is functional.
For simplicity and ease of understanding, we first consider the case where $A$ is noncircular.
We will later show how to generalize our procedure to the case that $A$ is circular.
The idea for the construction of $A_1$ and $A_2$ is similar to the one in Lemma~2.9 of~\cite{schmudethesis}:
the input alphabet encodes which rules are allowed to be applied.
The input alphabet $\hat{\Sigma}$ of $A_1$ and $A_2$ contains symbols of the form
\[
\begin{array}{ll}
	\langle\sigma, R^1,R^2\rangle,&
	\text{where }\sigma\in \Sigma \text{ and } 
	R^1,R^2\subseteq R_\sigma \text{ such that }  \text{for }i\in [2],\\
	&
	\text{no rules in }R^i \text{ have the same left-hand side.}
\end{array}
\]
The symbol $\langle\sigma, R^1,R^2\rangle$ 
has the same rank as $\sigma$.
Informally, the idea is that the label $\langle\sigma, R^1,R^2\rangle$ of a node $v$ 
determines which rules 
  $A_1$ and $A_2$ may apply at $v$.
In particular, $A_1$ is only allowed to apply rules in $R^1$.
Likewise, $A_2$ is restricted to applying rules in~$R^2$.
Note that due to Lemma~\ref{assumption} the rules of $A$ for the root marker can be assumed to be deterministic.
Therefore and since by definition no rules in  $R^i$ have the same left-hand side for $i\in [2]$,
it should be clear that $A_1$ and $A_2$ are both deterministic.

More formally, we define $A_1=(R,A_1')$ and $A_2=(R,A_2')$.
The look-ahead $R$ checks whether or not an input tree $s$ is in $\text{dom} (A_1') \cap \text{dom} (A_2')$.
If not then neither $A_1$ nor $A_2$ produces an output tree on input $s$.
Hence, the domain of $A_1$ and $A_2$ is $\text{dom} (A_1') \cap \text{dom} (A_2')$.
Note that by~\cite{DBLP:journals/ipl/FulopM00},  $\text{dom} (A_1')$ and $\text{dom} (A_2')$ are recognizable.

We define $A_1'=(S,I,\hat{\Sigma},\Delta,a_0,\hat{R})$.
The rules in $\hat{R}$ are defined as follows:
For a symbol of the form $\breve{\sigma}=	\langle\sigma, R^1,R^2\rangle$
we define that
if $\rho\in R^1$  then $\rho \in \hat{R}_{\breve{\sigma}}$.
The rules for the root marker, we define  
that $\hat{R}_{\#}= R_\#$.
Recall that  due to Lemma~\ref{assumption} the rules of $A$ for the root marker can be assumed to be deterministic.
This concludes the construction of $A_1'$. The $att$ $A_2'$ is constructed analogously.

\begin{lemma}\label{eq func}
	The $atts^R$ $A_1$ and $A_2$ are equivalent if and only if $A$ is functional.
\end{lemma}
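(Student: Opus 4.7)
The plan is to handle the two implications separately. Let $h:T_{\hat{\Sigma}}\to T_\Sigma$ be the homomorphism sending each symbol $\langle\sigma,R^1,R^2\rangle$ to $\sigma$, and recall that by choice of the look-ahead $R$ we have $\text{dom}(A_1)=\text{dom}(A_2)=\text{dom}(A_1')\cap\text{dom}(A_2')$. For any $\hat s$ in this common domain, every rule fired by $A_1'$ at a node $v$ labelled $\langle\sigma,R^1,R^2\rangle$ belongs to $R^1\subseteq R_\sigma$, the rules of $A_2'$ at $v$ belong to $R^2$, and the root-marker rules of both are $R_\#$. Consequently the successful derivations of $A_1'$ and of $A_2'$ on $\hat s$ lift to $A$-derivations on $h(\hat s)$, so $A_1(\hat s),A_2(\hat s)\in \tau_A(h(\hat s))$; if $A$ is functional, this forces equality, which gives $(\Leftarrow)$.

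For $(\Rightarrow)$ I argue the contrapositive. Pick $s\in T_\Sigma$ and $t_1\neq t_2$ with $(s,t_1),(s,t_2)\in\tau_A$, witnessed by derivations $D_1,D_2$. The crucial observation is that, since $A$ has monadic output, every intermediate term in a derivation contains at most one attribute leaf, so $D_j$ is a linear chain of single-leaf rewrites; and if some pair $(\alpha,v)$ were rewritten twice in $D_j$, the segment of rewrites between the two visits would exhibit $\alpha(v)\Rightarrow^{+}_{A,s^\#} t'$ with $\alpha(v)$ occurring in $t'$, contradicting noncircularity of $A$. Hence in $D_j$ each pair $(\alpha,v)$ is fired by at most one rule $\rho_j(\alpha,v)$, and the set $R^j_v:=\{\rho_j(\alpha,v)\mid (\alpha,v)\text{ is visited in }D_j\}\subseteq R_{s[v]}$ contains no two rules sharing a left-hand side, hence is a legitimate component of a symbol of $\hat{\Sigma}$. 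Annotate $s$ into $\hat s$ by relabelling each $v$ with $\langle s[v],R^1_v,R^2_v\rangle$; then the rules available to $A_j'$ at $v$ are exactly those fired in $D_j$ at $v$, so the $A_j'$-derivation on $\hat s$ faithfully reproduces $D_j$ and yields $t_j$. Therefore $\hat s$ passes the look-ahead $R$, and $A_1(\hat s)=t_1\neq t_2=A_2(\hat s)$, so $A_1\not\equiv A_2$.

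The only delicate step is the once-per-derivation claim for noncircular monadic-output atts; everything else is bookkeeping that translates a pair of witnessing derivations into an annotated input tree, in the spirit of Lemma~2.9 of~\cite{schmudethesis}. Monadicity of the output is essential to that step because it keeps every intermediate term linear, so that the sequence of rewritten pairs is unambiguous and a repeated visit of the same pair immediately produces a dependency cycle; the extension to the circular case, handled later, will presumably need a different device for coping with multiple visits of the same pair.
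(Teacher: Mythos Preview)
Your proof is correct and follows essentially the same approach as the paper: both directions use the homomorphism $h$ to transport derivations between $A_i'$ on $\hat s$ and $A$ on $h(\hat s)$, and the contrapositive of $(\Rightarrow)$ is established by annotating $s$ with the rule sets $\tau_1[v],\tau_2[v]$ actually used in two witnessing derivations. Your treatment is in fact slightly more explicit than the paper's at the key step: where the paper simply asserts ``since $A$ is noncircular, no distinct rules with the same left-hand side occur in $\tau[v]$'', you spell out that monadic output forces every intermediate term to have a single attribute leaf, so a repeated visit of $(\alpha,v)$ would directly exhibit a cycle---this is the correct justification and is worth making visible.
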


\begin{proof}
	Denote by $h$ the homomorphism from $T_{\hat{\Sigma}}$ to $T_\Sigma$
	defined by $h(\langle\sigma, R^1,R^2 \rangle )=\sigma$.
	By definition $A_1$ and $A_2$ have the same domain.
	Assume that $A_1$ and $A_2$ are not equivalent. 
	Hence, $\hat{s} \in T_{\hat{\Sigma}}$ and $t_1,t_2\in T_\Delta$
	exists such that $t_1\neq t_2$ and $(\hat{s}, t_i)\in \tau_{A_i}$ for $i\in [2]$.
	By construction of $A_1$ and $A_2$,
	the latter obviously implies that  $( h(\hat{s}), t_i)\in \tau_{A}$ for $i\in [2]$.
	Thus, $A$ is not functional.
	
	Before we prove the converse, consider the following.
	Let $s\in T_\Sigma$ and $t\in T_\Delta$ such that $(s,t)\in \tau_A$.
	In particular, let $t_1,\dots,t_n$ be trees such that
	\[
	\tau=(a_0 (1)= t_1 \Rightarrow_{A,s^\#} \cdots \Rightarrow_{A,s^\#} t_{n} =t).
	\]
	Then trees $s_1\in T_{\hat{\Sigma}}$ exists such
	that $h(s_1)=s$ and $(s_1,t)\in \tau_{A_1'}$. In particular the trees 
	$s_1$ are of the following form:
	Denote by $\tau [v]$ the set of all rules applied at the node $v$ in
	$\tau$.
	More formally, let $i<n$, $a\in S$ and $b\in I$.
	If $a(v)$ occurs in $t_i$ and
	$t_i  \Rightarrow_{A, s^\#} t_{i+1}$ is due to the rule
	$a(\pi)\rightarrow \xi$ then $a(\pi)\rightarrow \xi$ is contained in $\tau[ v ]$.
	If $b(v.j)$ occurs in $t_i$ and
	$t_i  \Rightarrow_{A, s^\#} t_{i+1}$ is due to the rule
	$b(\pi j) \rightarrow \xi'$  then $b(\pi j) \rightarrow \xi'$ is contained $ \tau [ v ]$.
	Note that since $A$ is noncircular, no distinct rules with the same left-hand-side occur in $\tau [ v]$.
	Let $s_1$ be such that if the node $v$ is labeled by the symbol $\sigma$ in $s^\#$
	then $v$ is labeled by a node of the form $\langle\sigma, \tau[v], R \rangle$ where $R$ is an arbitrary subset of $R_\sigma$.
	Then clearly, 
		$
	t_1 \Rightarrow_{A_1',s_1^\#} \cdots \Rightarrow_{A_1',s_1^\#} t_{n},
	$
	which yields our claim. Analogously, it can be shown that trees $s_2\in T_{\hat{\Sigma}}$ exists such
	that $h(s_2)=s$ and $(s_2,t)\in \tau_{A_2'}$.
	Specifically, the trees $s_2$ are of the following form:
	If the node $v$ is labeled by the symbol $\sigma$ in $s^\#$
	then $v$ is labeled by a node of the form $\langle\sigma, R,\tau[v],\rangle$ where $R$ is an arbitrary subset of $R_\sigma$.
	
	We now show that if $A$ is not functional then $A_1$ and $A_2$ are not equivalent.
	Let $s\in T_\Sigma$ and $t_1,t_2\in T_\Delta$ exist such that 
	$t_1\neq t_2$ and $(s, t_i)\in \tau_{A}$ for $i\in [2]$. 
	Consider the corresponding translations
	\[
	\tau_i=(t_1^i \Rightarrow_{A,s^\#} \cdots \Rightarrow_{A,s^\#} t_{n_i}^i),
	\]
	where $t_1^i =a_0 (1)$ and $t_{n_i^i}=t_i$.
	Define the tree $\hat{s}\in T_{\hat{\Sigma}}$ such that
	if the node $v$ is labeled by $\sigma$ in $s^\#$ then $v$ is labeled by $\langle\sigma, \tau_1 [ v],  \tau_2 [v]\rangle$ in $\hat{s}^\#$.
	Due to previous considerations, for $i=1,2$,
	$t_1^i \Rightarrow_{A_i',\hat{s}^\#} \cdots \Rightarrow_{A_i',\hat{s}^\#} t_{n_i}^i$.
	 Hence, $\hat{s}\in \text{dom} (A_1')\cap \text{dom} (A_2')$.
	 Altogether this implies that $A_1$ and $A_2$ are not equivalent.
\end{proof}

\noindent
With Lemma~\ref{eq func} and Proposition~\ref{equivalent proposition} the following holds.

\begin{lemma}\label{noncircular}
	For a noncircular $att$ $A$, it is decidable whether or not $A$ is functional.
\end{lemma}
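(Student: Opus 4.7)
The plan is to combine the construction that has just been presented with the decision procedure from Proposition~\ref{equivalent proposition}. Concretely, given a noncircular $att$ $A$ (with monadic output, as is implicit in this section), I would first invoke Lemma~\ref{assumption} to replace $A$ by an equivalent, still noncircular $att$ whose rules for the root marker have pairwise distinct left-hand sides. This normalization is exactly what is needed so that the subsequent $datts^R$ $A_1=(R,A_1')$ and $A_2=(R,A_2')$ from the preceding construction are indeed deterministic: the determinism of $A_1'$ and $A_2'$ on interior symbols follows from the restriction built into the symbols $\langle\sigma,R^1,R^2\rangle$, while the determinism at $\#$ follows from Lemma~\ref{assumption}.

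Next I would apply Lemma~\ref{eq func}, which states that $A_1$ and $A_2$ are equivalent if and only if $A$ is functional. Thus deciding functionality of $A$ reduces to deciding the equivalence of the two $datts^R$ $A_1$ and $A_2$. Since every $datt^R$ is in particular a $datt^U$ (a bottom-up relabeling is a look-around that performs no top-down phase), and since the output alphabet of $A_1$ and $A_2$ coincides with the monadic output alphabet $\Delta$ of $A$, both $A_1$ and $A_2$ are $datts^U$ with monadic output. Proposition~\ref{equivalent proposition} then provides the desired decision procedure, finishing the argument.

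There is essentially no obstacle left: all of the work has been done in Lemmas~\ref{assumption} and~\ref{eq func} and in Proposition~\ref{equivalent proposition}. The only subtlety worth flagging explicitly is the noncircularity hypothesis, which is used in two places. First, it is needed in the proof of Lemma~\ref{eq func} to guarantee that in any successful derivation $\tau$ of $A$ on an input $s$, the multiset $\tau[v]$ of rules applied at a node $v$ contains no two rules with the same left-hand side, so that the set $\tau[v]\subseteq R_{s[v]}$ can legally serve as a component of a symbol $\langle s[v],\tau[v],R\rangle$ of $\hat{\Sigma}$. Second, Lemma~\ref{assumption} guarantees that the normalization preserves noncircularity, so that the above argument can be applied to the normalized transducer. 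With these observations in place, the lemma follows immediately.
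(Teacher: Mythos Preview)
Your proposal is correct and follows exactly the paper's approach: the paper simply states that Lemma~\ref{noncircular} holds ``with Lemma~\ref{eq func} and Proposition~\ref{equivalent proposition}'', and your write-up just makes explicit the routine details (the normalization via Lemma~\ref{assumption}, that $A_1,A_2$ are $datts^U$ with monadic output, and where noncircularity is used).
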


Subsequently, we discuss the case where $A$ is circular.
First consider the following definition.
Let $s\in \text{dom} (A)$ and let $\tau$ be a translation of $A$ on input $s$.
In particular, let
 $\tau$ be 
\[
a_0 (1) = t_1 \Rightarrow_{A, s^\#} t_2 \Rightarrow_{A, s^\#}  \cdots \Rightarrow_{A, s^\#} t_n \in T_\Delta
\]
where $t_2,\dots, t_{n-1}\in T_\Delta [\text{SI} (s^\#)]$.
Since $A$ is circular, $\alpha (\nu) \in \text{SI}(s^\#)$ and distinct $i,j\in [n]$ may
exist such that $\alpha (\nu)$ occurs in $t_i$ and $t_j$.
In this case we say that $\tau$ contains a \emph{cycle}. 
Specifically, $\tau$ contains a \emph{productive cycle} if $t_i\neq t_j$.
We say that $\tau$ is  \emph{cycle-free} if $\tau$ contains no cycle.
It is easy to see that $A$ cannot be functional
if $\tau$ contains a productive cycle.

In the following, we show that it is decidable whether or not a translation of $A$
containing a productive cycle
exists. To do so consider the following the following observation.

	\begin{observation}\label{productive cycle}
	Let $A=(S,I,\Sigma,\Delta, a_0, R)$.
	Assume that a translation of $A$ on input $s\in T_\Sigma$ containing a productive cycle
	exists. Then, in particular a translation
	\[
	a_0 (1) \Rightarrow_{A, s^\#} t_1 \Rightarrow_{A, s^\#}  \cdots \Rightarrow_{A, s^\#} t_n  \Rightarrow_{A, s^\#} t\in T_\Delta,
	\]
	and $i<j\leq n$ exist such that $t_i\neq t_j$
	 and for some $a\in S$ and for some node $v$ it holds that
	$a(v)$ occurs in $t_i$ and $t_j$.
\end{observation}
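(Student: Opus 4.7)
The plan is to reduce the general productive-cycle condition to one witnessed by a synthesized attribute, by analysing how an occurrence of an inherited attribute can arise in a derivation. First I would dispose of the easy case: if the given productive cycle is already witnessed by $\alpha\in S$, the same translation and the same indices $i<j$ work with $a:=\alpha$ and $v:=\nu$.

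For the substantive case $\alpha=b\in I$, note that inherited attributes never appear at the root, so $\nu=\nu'\cdot k$ for some $k$. I would trace the creation history of $b(\nu)$ in the given derivation. By the shape of the rules of an $att$, any occurrence of $b(\nu)$ in any $t_k$ is introduced by a rewrite that applies a rule in $R_{s^\#[\nu]}$ whose right-hand side contains $b(\pi)$; such a rewrite can only reduce either (a) a synthesized occurrence $a'(\nu)$ at the node $\nu$ itself, or (b) an inherited occurrence $b'(\nu\cdot l)$ at some child of $\nu$. Iterating option (b) strictly increases the depth of the rewritten attribute, and since $s^\#$ is finite this chain cannot continue indefinitely. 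Hence, tracing backwards from the $b(\nu)$-occurrence in $t_j$, after finitely many steps we must hit a synthesized rewrite of some attribute $a(v)$ with $v$ equal to or below $\nu$; this is the candidate synthesized witness.

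It then remains to exhibit a translation and indices $i'<j'\le n$ such that $a(v)$ occurs in both $t_{i'}$ and $t_{j'}$ and $t_{i'}\ne t_{j'}$. The plan is to use the original cycle to construct a derivation in which $a(v)$ is rewritten and then recreated: starting from $a_0(1)$, follow the original derivation until just before $a(v)$ is rewritten for the first time (choose $i'$ at this index), follow the productive $b(\nu)$-cycle until the creation chain identified above re-introduces $a(v)$ (choose $j'$ at this index), and then extend to a ground tree using any successful continuation available in the original translation. Because the $b(\nu)$-cycle was productive, additional output content is produced between $t_{i'}$ and $t_{j'}$, so $t_{i'}\ne t_{j'}$.

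The main obstacle will be the last step: $A$ may be both nondeterministic and circular, hence non-confluent, so reordering or splicing parts of derivations must be done carefully to preserve termination in $T_\Delta$. A clean way to organise this is to first normalise $R_\#$ via Lemma~\ref{assumption}, and then to argue by induction on the depth of $\nu$ in the input tree that the creation chain built above can indeed be realised as a contiguous sub-derivation of a valid successful translation. Everything else (the tracing of the creation chain, the strict growth of depth, the existence of a synthesized ancestor in the chain) is essentially forced by the rule shape of $att$s and hence routine.
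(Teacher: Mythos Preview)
The paper does not prove this observation at all; it simply states that it ``should be clear''. So there is no approach to compare against, and your backward-tracing idea is a perfectly reasonable way to fill the gap.

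Your argument is essentially correct, but one point should be made explicit and one worry can be dropped. First, your depth argument already shows more than you state: since the backward trace from $t_j$ increases the depth of the attribute node by exactly one at each step while it stays inherited, and since the leaf of $t_i$ is $b(\nu)$ at depth $|\nu|$, the trace must terminate at some index $m$ with $i<m<j$. In other words, the synthesized witness $a(v)$ lies \emph{inside} the cycle. Once you have this, the splicing is immediate: replay the segment $t_i\Rightarrow^* t_m$ starting from $t_j$ (both have leaf $b(\nu)$), obtaining a second occurrence of $a(v)$ with the full cycle's nonempty output between the two occurrences, and then replay $t_m\Rightarrow^* t$ to reach $T_\Delta$.

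Second, your final paragraph is overcautious. Because the output alphabet is monadic, every intermediate tree is of the form $w\cdot\alpha(u)$ and the applicable rules depend only on the single leaf $\alpha(u)$. Hence any derivation segment starting from a tree with leaf $\alpha(u)$ can be replayed verbatim from any other tree with the same leaf; nondeterminism and circularity cause no confluence issues here. Neither the normalisation via Lemma~\ref{assumption} nor an induction on the depth of $\nu$ is needed.

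A slightly shorter alternative to the backward trace: observe directly that a purely inherited step moves to the parent node, so a closed walk cannot consist of inherited steps only; hence some $t_k$ with $i<k<j$ has a synthesized leaf, and the same iterate-the-cycle argument applies.
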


\noindent
It should be clear that Observation~\ref{productive cycle} holds. 
Recall that due to Proposition~\ref{empty test}, it is decidable whether or not the range of an $att^R$ 
intersected with a recognizable tree language is empty.
Hence with Observation~\ref{productive cycle}, the following holds.

\begin{lemma}\label{productive cycle decide}
	Let $A=(S,I,\Sigma,\Delta, a_0, R)$ be an $att$. It is decidable whether or not a translation $\tau$
	 of $A$ containing a productive cycle
	exists. 
\end{lemma}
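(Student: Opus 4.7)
The plan is to reduce the problem to deciding whether the range of an auxiliary $att^U$ intersects a recognizable tree language non-trivially, and then invoke Proposition~\ref{empty test}. By Observation~\ref{productive cycle}, $A$ admits a translation containing a productive cycle iff for some input $s$, some synthesized $a\in S$, and some node $v$ there is a derivation
\[
a_0(1)\Rightarrow^*_{A,s^\#} t_i\Rightarrow^+_{A,s^\#} t_j\Rightarrow^*_{A,s^\#} t\in T_\Delta
\]
with $a(v)$ occurring in both $t_i$ and $t_j$ and $t_i\neq t_j$. I would construct an $att^U$ $A_P$ whose output explicitly records a successful witnessing derivation, together with a recognizable language $L$ selecting exactly those outputs that encode one.

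Concretely, I would build $A_P=(U,A')$ as follows. The look-around $U$ nondeterministically annotates a single input node as the putative cycle node $v^*$, leaving all other symbols unchanged. The attribute set of $A'$ consists of three mode-tagged copies $\alpha^{\mathrm{pre}},\alpha^{\mathrm{cyc}},\alpha^{\mathrm{post}}$ of every attribute $\alpha$ of $A$, with initial attribute $a_0^{\mathrm{pre}}$. Each rule of $A$ is copied three times, once per mode, by uniformly tagging all attribute occurrences on its right-hand side with the corresponding mode. At $v^*$, for each synthesized $a^*\in S$, two extra \emph{transition rules} are added: a pre-to-cyc rule $a^{*,\mathrm{pre}}(\pi)\to\dagger(\xi^{\mathrm{cyc}})$ that enters cycle mode and emits a fresh output symbol $\dagger$, and a cyc-to-post rule $a^{*,\mathrm{cyc}}(\pi)\to\xi^{\mathrm{post}}$ that closes the cycle, where $\xi$ ranges over right-hand sides of $A$-rules for $a^*$ at the relevant input symbol. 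The output alphabet of $A_P$ is $\Delta\cup\{\dagger\}$, and $L$ is the (recognizable) language of trees over $\Delta\cup\{\dagger\}$ containing at least one occurrence of $\dagger$.

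The correctness claim is that $A$ admits a productive cycle iff $\mathrm{range}(A_P)\cap L\neq\emptyset$. In the forward direction, a witness derivation of $A$ lifts to a derivation of $A_P$ by marking $v$ with $U$, choosing $a^*=a$, tagging attributes with $\mathrm{pre}$ until $a(v)$ is first removed from the configuration, applying the pre-to-cyc transition at that step (emitting $\dagger$), tagging with $\mathrm{cyc}$ until $a^*(v^*)$ reappears, applying the cyc-to-post transition there, and tagging with $\mathrm{post}$ afterwards. In the converse direction, stripping the mode tags and the $\dagger$ from any $A_P$-derivation producing a $\dagger$ yields an $A$-derivation whose two transition events mark two distinct configurations $t_i\neq t_j$ in which $a^*(v^*)$ occurs. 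Proposition~\ref{empty test} then decides whether $\mathrm{range}(A_P)\cap L\neq\emptyset$, which settles the lemma.

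The main obstacle is to ensure that the in-cycle phase genuinely forces $t_i\neq t_j$: we must prevent the pre-to-cyc transition from being followed directly by the cyc-to-post transition with no intermediate rewriting step. This can be arranged by inserting an auxiliary attribute in $\xi^{\mathrm{cyc}}$ that must be rewritten before $a^{*,\mathrm{cyc}}(v^*)$ can be regenerated, or equivalently by forbidding the degenerate transition rules whose right-hand side would coincide with $a^{*,\mathrm{cyc}}(\pi)$. A secondary point is that different mode-tagged copies of $A$'s attributes must not interact spuriously; this is automatic since they are formally distinct symbols, and since rule tagging is mode-preserving outside the two transition rules. Note that $A$ (and hence $A_P$) may be circular, but Proposition~\ref{empty test} applies to arbitrary $att^U$, so circularity poses no obstacle.
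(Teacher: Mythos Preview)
Your overall strategy—mark a single input node, build an auxiliary transducer whose output records visits to it, and invoke Proposition~\ref{empty test}—matches the paper's. But the converse direction of your correctness claim fails in two ways. First, $L=\{$trees containing $\dagger\}$ only witnesses that the pre-to-cyc transition fired, not that cyc-to-post did. Since you copy all rules of $A$ into cyc mode, including the ground right-hand-side rules at $\#$, a derivation can emit $\dagger$, never revisit $a^{*,\mathrm{cyc}}(v^*)$, and still terminate in a ground tree—placing a tree in $\mathrm{range}(A_P)\cap L$ that witnesses no cycle at all. Second, even if both transitions fire, your ``main obstacle'' fix only excludes the zero-step case; a multi-step cycle whose rules all have a bare attribute as right-hand side is non-productive ($t_i=t_j$ after stripping modes), yet your construction still outputs $\dagger$ and accepts. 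Both issues are repairable (emit a second marker at cyc-to-post and let $L$ demand a $\Delta$-symbol strictly between the two markers), but as written the reduction is unsound.

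The paper avoids both difficulties with a simpler device: no modes are used; at the marked node every synthesized rule $a(\pi)\to\zeta$ is replaced by $a(\pi)\to a(\zeta)$, so the output tree records every synthesized visit to the mark. The recognizable $L$ then asks for two occurrences of the \emph{same} attribute symbol with at least one $\Delta$-symbol strictly between them, which in one shot certifies both that the cycle closes and that it is productive. One minor further point: look-around in this paper is deterministic by definition, so ``$U$ nondeterministically annotates a node'' is not well-formed here; the paper instead enlarges the input alphabet with marked symbols and uses look-ahead only to check that exactly one mark is present.
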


\begin{proof}
	To decide whether such a translation $\tau$ exists, we construct an $att^R$
	$\bar{A}=(R,A')$ from~$A$. The idea is as follows: 
	Let $s\in T_\Sigma$.
	In the following, nodes of $s$ may be marked by having their labels annotated by $\pm$.
	We demand that $\bar{A}$ only produces output for input trees where precisely one node is marked.
	Whether or not an input tree has exactly one marked node is tested by the look-ahead $R$.
	The $att$ $A'$ is constructed such that
	whenever a marked node $v$ is processed by a synthesized attribute $a$, we output $a$ as well.	
	
	More formally, $A'=(S,I,\Sigma',\Delta',a_0,R')$
	where $\Sigma'=\Sigma \cup \{\sigma_\pm \mid \sigma \in \Sigma\}$ and  $\sigma_\pm$ is of rank $k$ if $\sigma$ is.
	We define $\Delta'= \Delta\cup S$ where elements in $S$ are considered to be of rank $1$.
	The rules of $A'$ are defined as follows:
	We define  $R'_\#=R_\#$ and $R'_\sigma =R_\sigma$ for $\sigma \in \Sigma$.
	Consider a symbol of the form $\sigma_\pm$. 
	If $b(\pi j)\rightarrow \xi \in R_\sigma$ where $b\in I$, then  $b(\pi j)\rightarrow \xi \in R_{\sigma_\pm}'$.
	If $a(\pi)\rightarrow \zeta \in R_\sigma$ where $a\in S$, then  $a(\pi)\rightarrow a (\zeta) \in R_{\sigma_\pm}'$.
	
	Consider a tree $t\in T_{\Delta'}$ for which
	nodes
	$u_1,u_2,u_3 \in V(t)$ exist such that  $u_i$ is an ancestor of $u_{i+1}$ for $i<3$ and
	$t[u_1] = t[u_3] \in S$ while $t[u_2]\in \Delta$.
	The set $L$ containing all such trees is regular.
	Due to Observation~\ref{productive cycle} it is easy to see that 
	translation $\tau$
	of $A$ containing a productive cycle
	exists if and only if $\text{range} (\bar{A})\cap L \neq \emptyset$.
	By Lemma~\ref{empty test}, the latter is decidable.
\end{proof}

Since productive cycles cause nonfunctionality and by
Lemma~\ref{productive cycle decide}, it is decidable whether a
translation $\tau$
of $A$ containing a productive cycle
exists, we subsequently assume that $A$ is \emph{productive cycle-free}, meaning that no translation of
$A$ contains a productive cycle.
Obviously, this means that translations of $A$ may still contain nonproductive cycles, however these are easy to deal with.
In particular, if $A$ is productive cycle-free then 
we can decide whether or not $A$ is functional using the same procedure as in the case where $A$ is noncircular, i.e.,
we construct $atts^R$ $A_1$ and $A_2$ such that $A_1$ and $A_2$ are equivalent if and only if $A$ is functional.
In particular $A_1$ and $A_2$ are constructed as in the case where $A$ is noncircular.
First consider the following observation.

\begin{observation}\label{cycle free}
    Let $A$ be a productive cycle-free $att$.
    Let $s\in T_\Sigma$ and let $t\in T_\Delta$.
    Consider a translation $\tau$ which yields $t$ on input $s$.
    Then a cycle-free translation
    $\tau'$ which yields $t$ on input $s$ exists.
\end{observation}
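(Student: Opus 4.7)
The plan is to take $\tau'$ to be a translation of minimal length among all translations of $A$ yielding $t$ on input $s$, and to argue that $\tau'$ must be cycle-free. So I would start by noting that by assumption a translation yielding $t$ on input $s$ exists, hence one of minimum length exists as well; call it
\[
\tau': a_0(1) = t_1' \Rightarrow_{A,s^\#} t_2' \Rightarrow_{A,s^\#} \cdots \Rightarrow_{A,s^\#} t_n' = t.
\]

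Next I would assume for contradiction that $\tau'$ contains a cycle. By definition of a cycle this means that indices $i < j$ in $[n]$ and an attribute occurrence $\alpha(\nu) \in \text{SI}(s^\#)$ exist such that $\alpha(\nu)$ occurs in both $t_i'$ and $t_j'$. Since $A$ is productive cycle-free, the existence of a productive cycle in $\tau'$ would contradict the assumption on $A$; hence the cycle witnessed by $(i,j,\alpha(\nu))$ must be nonproductive, i.e., $t_i' = t_j'$.

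Then I would form the shortened sequence
\[
\tau'': t_1' \Rightarrow_{A,s^\#} \cdots \Rightarrow_{A,s^\#} t_i' \Rightarrow_{A,s^\#} t_{j+1}' \Rightarrow_{A,s^\#} \cdots \Rightarrow_{A,s^\#} t_n'.
\]
Since $t_i' = t_j'$ and the original step $t_j' \Rightarrow_{A,s^\#} t_{j+1}'$ is valid (or, in case $j = n$, $t_i' = t_j' = t$ directly), $\tau''$ is itself a valid translation of $A$ on input $s$ yielding $t$, and it is strictly shorter than $\tau'$. This contradicts the minimality of $\tau'$, so $\tau'$ is cycle-free, as required.

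There is no genuine obstacle here: the productive cycle-free assumption is exactly what is needed to conclude $t_i' = t_j'$ from the existence of a cycle, and cutting out the repeated segment preserves validity of the derivation because $\Rightarrow_{A,s^\#}$ only depends on the current tree. The only point to be slightly careful about is the boundary case $j = n$, which is handled as noted above.
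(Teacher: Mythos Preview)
Your argument is correct. The paper itself does not give a proof of this observation beyond stating that it is easy to see, and the minimal-length argument you supply is exactly the natural way to justify it.
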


\noindent
It is easy to see that Observation~\ref{cycle free} holds.
With Observation~\ref{cycle free},
the following result holds.

\begin{lemma}\label{eq circ}
Let $A$ be a productive cycle-free $att$.
The $atts$ $A_1$ and $A_2$ are equivalent if and only if $A$ is functional.
\end{lemma}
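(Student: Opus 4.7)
The plan is to transplant the argument of Lemma~\ref{eq func} into the productive cycle-free setting, substituting Observation~\ref{cycle free} for the role played by noncircularity. The constructions of $A_1$ and $A_2$ are by hypothesis identical to the noncircular case, so only the equivalence claim itself needs a fresh justification. Noncircularity was used in the proof of Lemma~\ref{eq func} at a single point: to ensure that the set $\tau[v]$ of rules of $A$ applied at a node $v$ during a translation $\tau$ contains no two distinct rules with the same left-hand side, which is precisely the condition demanded of the components $R^1, R^2$ in a symbol $\langle\sigma, R^1, R^2\rangle$ of $\hat{\Sigma}$. Observation~\ref{cycle free} restores this property under the weaker assumption: whenever $(s,t)\in\tau_A$ there is a cycle-free translation witnessing $(s,t)$, and in a cycle-free translation each attribute occurrence $\alpha(\nu)\in\text{SI}(s^\#)$ appears in at most one intermediate tree and is therefore rewritten by at most one rule.

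For the forward direction (non-functional $\Rightarrow$ not equivalent), I would start from $s\in T_\Sigma$ and $t_1\neq t_2$ with $(s,t_i)\in\tau_A$ for $i\in[2]$, and invoke Observation~\ref{cycle free} to pick cycle-free translations $\tau_1, \tau_2$ yielding $t_1, t_2$ on $s$. By the point above, each $\tau_i[v]$ satisfies the single-rule-per-LHS requirement and is thus a legitimate $R^i$-component, so the tree $\hat{s}\in T_{\hat{\Sigma}}$ obtained from $s^\#$ by relabeling each node $v$ (with label $\sigma$) by $\langle\sigma, \tau_1[v], \tau_2[v]\rangle$ is well defined. Replaying $\tau_i$, whose steps use only rules drawn from $\tau_i[v]\subseteq R^i$ at node $v$, gives a derivation of $A_i'$ on $\hat{s}$ producing $t_i$; in particular $\hat{s}\in\text{dom}(A_1')\cap\text{dom}(A_2')$, so $\hat{s}$ passes the shared look-ahead $R$, and $(\hat{s},t_i)\in\tau_{A_i}$ witnesses the non-equivalence of $A_1$ and $A_2$. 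The reverse direction is handled exactly as in Lemma~\ref{eq func} and does not invoke any assumption about cycles: from a witness $\hat{s}$ of non-equivalence with outputs $t_1\neq t_2$, the forgetting homomorphism $h\colon T_{\hat{\Sigma}}\to T_\Sigma$ given by $h(\langle\sigma, R^1, R^2\rangle)=\sigma$ yields $(h(\hat{s}),t_i)\in\tau_A$ for both $i$, since every rule used by $A_i'$ is already a rule of $A$, so $A$ is not functional.

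The step I expect to require the most care is verifying cleanly that a cycle-free derivation of $A$ on $s$ lifts to a derivation of $A_i'$ on the annotated input $\hat{s}$: each rule applied at a node $v$ in $\tau_i$ must lie in $\tau_i[v]$, which in turn must form a legal $R^i$-component, and the look-ahead $R$ must accept $\hat{s}$. Cycle-freeness is indispensable at precisely this juncture, because in a cyclic translation the same attribute occurrence could be rewritten by two distinct rules, producing a $\tau_i[v]$ with two rules sharing a left-hand side — which cannot be encoded into any $\hat{\Sigma}$-symbol and would violate the determinism required of $A_i'$. Once the lifting of $\tau_i$ is formalized, the remainder of the argument is a straightforward adaptation of Lemma~\ref{eq func}.
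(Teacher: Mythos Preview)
Your proposal is correct and follows essentially the same route as the paper's own proof: both directions are reduced to the argument of Lemma~\ref{eq func}, with Observation~\ref{cycle free} invoked to obtain cycle-free translations $\tau_1,\tau_2$ so that each $\tau_i[v]$ contains at most one rule per left-hand side and hence constitutes a legal $R^i$-component. Your write-up is in fact more explicit than the paper's about why cycle-freeness is exactly what is needed for the lifting to $A_i'$, but the underlying argument is the same.
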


\begin{proof}
The if-direction follows as in Lemma~\ref{eq func}.
For the only-if direction,  
let $s\in T_\Sigma$ and $t_1,t_2\in T_\Delta$  such that 
$t_1\neq t_2$ and $(s, t_i)\in \tau_{A}$ for $i\in [2]$. 
Consider the corresponding translations
\[
\tau_i=(a_0 (1) \Rightarrow_{A,s^\#} t_1^i \Rightarrow_{A,s^\#} \cdots \Rightarrow_{A,s^\#} t_{n_i}^i\Rightarrow_{A,s^\#} t_i).
\]
By Observation~\ref{cycle free}, $\tau_1$ and $\tau_2$ can be assumed to be cycle-free.
For a node $v\in V(s^\#)$, define the sets $\tau_1[v]$ and $\tau_2[v]$ 
as  the set of all rules applied at the node $v$ in
$\tau_1$ and $\tau_2$, respectively,
as in Lemma~\ref{eq func}.
Note that since $\tau_1$ and $\tau_2$ are cycle-free, 
no distinct rules with the same left-hand-side occur in $\tau_1 [v]$ and $\tau_2 [v]$.
Given the  sets $\tau_1 [v]$ and $\tau_2 [v]$ for each $v\in V(s^\#)$, we construct a tree $\hat{s}$
such that on input $\hat{s}$, $A_1$ outputs $t_i$ as in Lemma~\ref{eq func}.
This yields the only-if direction.
\end{proof}

\noindent
Due to Lemmas~\ref{productive cycle decide} and~\ref{eq circ}, the following holds.

\begin{lemma}\label{circular}
	For a circular $att$ $A$, it is decidable whether or not $A$ is functional.
\end{lemma}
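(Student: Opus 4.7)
The plan is to reduce the circular case to two already-solved subcases via a case split on the existence of a productive cycle. First, I would invoke Lemma~\ref{productive cycle decide} to decide whether $A$ admits any translation containing a productive cycle.

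If such a translation $\tau = (a_0(1) \Rightarrow_{A,s^\#}^* t_i \Rightarrow_{A,s^\#}^* t_j \Rightarrow_{A,s^\#}^* t)$ exists, with some $\alpha(\nu)$ occurring in both $t_i$ and $t_j$ and $t_i \neq t_j$, I would conclude that $A$ is not functional. The argument is that the derivation segment from $t_i$ to $t_j$ can be replayed starting from $t_j$: each rewrite in the segment substitutes at an attribute leaf, and the segment ends with $\alpha(\nu)$ still present, so the same sequence applies again, producing some $t_j' \neq t_j$ that again contains $\alpha(\nu)$. Continuing with the suffix of $\tau$ that took $t_j$ to $t$ then yields some $t' \in T_\Delta$ with $t' \neq t$, giving $(s,t), (s,t') \in \tau_A$ and so $A$ is not functional.

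If no translation of $A$ contains a productive cycle, then $A$ is productive cycle-free, and I would mirror the procedure from the noncircular case: construct the $datts^R$ $A_1$ and $A_2$ from $A$ using the annotated input alphabet $\hat{\Sigma}$ of symbols $\langle \sigma, R^1, R^2\rangle$, together with the look-ahead that enforces $\mathrm{dom}(A_1') \cap \mathrm{dom}(A_2')$. By Lemma~\ref{eq circ}, $A$ is functional iff $A_1$ and $A_2$ are equivalent. Since $A$ has monadic output by the standing assumption of the section, $A_1$ and $A_2$ are $datts^R$ (hence $datts^U$) with monadic output, so Proposition~\ref{equivalent proposition} renders their equivalence decidable, completing the procedure.

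The only step that is not entirely mechanical is the "productive cycle implies nonfunctional" direction: one has to verify that the cycle can actually be iterated inside a complete derivation ending in $T_\Delta$. Since each rewrite acts on a single leaf and productivity guarantees $t_i \neq t_j$, iterating the cycle strictly alters the output tree while preserving the completion strategy used in $\tau$, so the substitution semantics and the monadic-output assumption handle this cleanly. Everything else is a direct composition of the preceding results.
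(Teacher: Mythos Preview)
Your proposal is correct and follows exactly the paper's approach: the paper's proof of this lemma is literally the single line ``Due to Lemmas~\ref{productive cycle decide} and~\ref{eq circ}, the following holds,'' i.e., first decide productive-cycle existence, conclude nonfunctionality in the positive case, and otherwise fall back to the $A_1$/$A_2$ equivalence test via Proposition~\ref{equivalent proposition}. The only place you add content is the explicit pumping argument for ``productive cycle $\Rightarrow$ nonfunctional,'' which the paper dismisses earlier as ``easy to see''; your justification (replay the $t_i\Rightarrow^* t_j$ segment at $t_j$ and reuse the suffix of $\tau$) is sound because with monadic output every intermediate tree has a single attribute leaf, so the same rule sequence is applicable again and the nonempty output increment $u$ from $t_i\neq t_j$ forces $t'\neq t$.
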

With the considerations at the start of the section,
Lemma~\ref{circular} yields the following.

\begin{theorem}\label{functional theorem 1}
	It is decidable whether an $att^U$ $\breve{A}=(U,A)$ is functional. 
\end{theorem}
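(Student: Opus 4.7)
The plan is to reduce functionality of the $att^U$ $\breve{A}=(U,A)$ to functionality of the underlying $att$ $A$, and then invoke Lemma~\ref{circular} (which handles both the circular and noncircular cases via Lemma~\ref{noncircular}).

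First, I would apply Lemma~\ref{lemma auiliary} to replace $\breve{A}$ by an equivalent $att^U$ $\breve{A}'=(U',A')$ with $\text{dom}(A')\subseteq \text{range}(U')$. Since $\breve{A}$ and $\breve{A}'$ realize the same translation, $\breve{A}$ is functional if and only if $\breve{A}'$ is; so we may assume from the outset that $\text{dom}(A)\subseteq \text{range}(U)$.

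Next I would argue that, under this assumption, $\breve{A}$ is functional if and only if its underlying $att$ $A$ is functional. The easy direction is: if $A$ is functional, then since $U$ is a deterministic top-down relabeling with look-ahead and hence realizes a partial function $\tau_U$, the composition $\tau_{\breve{A}}=\{(s,t)\mid (s,s')\in\tau_U,\ (s',t)\in\tau_A\}$ is also a partial function. For the converse, suppose $A$ is not functional; then there exist $s'\in \text{dom}(A)$ and $t_1\neq t_2$ in $T_\Delta$ with $(s',t_i)\in\tau_A$ for $i=1,2$. Because $\text{dom}(A)\subseteq \text{range}(U)$, some $s\in T_\Sigma$ satisfies $(s,s')\in\tau_U$, and then $(s,t_1),(s,t_2)\in\tau_{\breve{A}}$ with $t_1\neq t_2$, contradicting functionality of $\breve{A}$. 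Therefore functionality of $\breve{A}$ is equivalent to functionality of $A$.

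Finally, deciding whether $A$ is functional is exactly the content of Lemma~\ref{circular} (covering the circular case) and Lemma~\ref{noncircular} (the noncircular case, which Lemma~\ref{circular} already subsumes via the productive-cycle test of Lemma~\ref{productive cycle decide} followed by the reduction to equivalence of the $datts^R$ $A_1$ and $A_2$ through Lemma~\ref{eq circ} and Proposition~\ref{equivalent proposition}). Chaining these three steps yields decidability of functionality for $\breve{A}$. No step here is a genuine obstacle: all the conceptual work has been done in Lemmas~\ref{lemma auiliary},~\ref{eq func},~\ref{productive cycle decide},~\ref{eq circ} and~\ref{circular}; the present theorem is essentially a bookkeeping reduction from the look-around setting to the plain $att$ setting.
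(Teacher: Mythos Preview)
Your proposal is correct and follows essentially the same route as the paper: reduce to the underlying $att$ via Lemma~\ref{lemma auiliary} so that $\text{dom}(A)\subseteq\text{range}(U)$, observe that functionality of $\breve{A}$ is then equivalent to functionality of $A$, and apply Lemma~\ref{circular} (together with Lemma~\ref{noncircular}). If anything, you spell out the easy converse direction (functionality of $A$ implies functionality of $\breve{A}$ since $U$ is deterministic) more explicitly than the paper does.
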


\section{From Functional Attributed Tree Transducers to Deterministic Attributed Tree Transducers}\label{functional subsection}
Denote by $ATT^U$ and $ATT^U_\text{mon}$, the classes of tree translations realizable
by $atts^U$  and  $atts^U$ with monadic output, respectively.
Analogously, denote by $dATT^U$ and $dATT^U_\text{mon}$
the classes of tree translations realized by deterministic such transducers.

Subsequently, we show that
$ATT^U_{\text{mon}} \cap \text{func} = dATT^U_\text{mon}$,
where $\text{func}$ denotes the class of all functions.
First, consider the following result which holds
due to Theorem~35 of~\cite{DBLP:journals/acta/EngelfrietIM21}\footnote{
Note that $atts^U$
are  ($TT$s) as defined~\cite{DBLP:journals/acta/EngelfrietIM21}.
Note that in~\cite{DBLP:journals/acta/EngelfrietIM21}, $dTT_\downarrow$ denotes a deterministic top-down transducer with look-around.
For deterministic top-down transducers, look-around is the same as
look-ahead since $dt^R$ are closed under composition~\cite{DBLP:journals/mst/Engelfriet77}.
See also Lemma~12 in~\cite{DBLP:journals/acta/EngelfrietIM21}. 
Thus, a $dTT_\downarrow$ is basically $dt^R$.
Note that a $dTT_\downarrow$  can be assumed to be `stay-free', i.e., it does not have stay-rules.}.
Note that $\circ$ denotes the composition of two classes of binary relations.

\begin{proposition}\label{prop 1}
$ATT^U \cap \text{func} \subseteq DT^R \circ dATT^U$.
\end{proposition}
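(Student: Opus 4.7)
The plan is to reduce the claim directly to Theorem~35 of~\cite{DBLP:journals/acta/EngelfrietIM21}, which gives an analogous decomposition for functional tree-walking tree transducers ($TT$s). The bulk of the work is therefore not in devising a new construction but in verifying the dictionary between the formalism of the present paper and that of~\cite{DBLP:journals/acta/EngelfrietIM21}.

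Concretely, let $\breve{A}$ be an $att^U$ whose translation $\tau_{\breve{A}}$ is a (partial) function. First I would invoke the correspondences recorded in the footnote: $atts^U$ may be viewed as $TT$s, and, symmetrically, deterministic $TT$s may be viewed as $datts^U$. Applying Theorem~35 of~\cite{DBLP:journals/acta/EngelfrietIM21} to $\breve{A}$ (regarded as a $TT$) then yields a $dTT_\downarrow$ realizing a translation $\tau_1$ and a deterministic $TT$ realizing a translation $\tau_2$ such that $\tau_{\breve{A}} = \tau_1 \circ \tau_2$. To finish, I would translate each factor back into our setting: the second factor is directly a $datt^U$, and for the first factor the footnote explains why a $dTT_\downarrow$ is (without loss of generality, and since $dt^R$ is closed under composition~\cite{DBLP:journals/mst/Engelfriet77}) the same device as a $dt^R$. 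This exhibits $\tau_{\breve{A}} \in DT^R \circ dATT^U$, as desired.

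The only real obstacle is the bookkeeping in the preceding paragraph, and it is genuinely a bookkeeping issue rather than a mathematical one: one must check that (i)~the bottom-up plus top-down relabeling that defines our look-around corresponds to the MSO-relabeling used in~\cite{DBLP:journals/acta/EngelfrietIM21}, which follows from \cite{DBLP:journals/jcss/BloemE00}~(Theorem~10) and \cite{DBLP:journals/iandc/EngelfrietM99}~(Theorem~4.4); (ii)~for deterministic top-down transducers look-around collapses to look-ahead (see also Lemma~12 of~\cite{DBLP:journals/acta/EngelfrietIM21}); and (iii)~stay-moves of a $dTT_\downarrow$ can be eliminated so that the resulting device is an ordinary $dt^R$. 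Once these identifications are in place the proposition is an immediate consequence of the cited theorem, so I would keep the proof short and essentially cite \cite{DBLP:journals/acta/EngelfrietIM21} together with the translation of notations just described.
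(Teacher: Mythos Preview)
Your proposal is correct and follows exactly the paper's approach: the paper does not give an independent proof either but simply records the proposition as a consequence of Theorem~35 of~\cite{DBLP:journals/acta/EngelfrietIM21}, together with the footnote spelling out the dictionary between $atts^U$ and $TT$s and between $dTT_\downarrow$ and $dt^R$. You have merely made that footnote's bookkeeping explicit.
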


Recall that for a tree $s$, the \emph{size} of $s$ is $|s|:= |V(s)|$.
A function $\tau: T_\Sigma \rightarrow T_\Delta$ is of \emph{linear size} increase if a constant $c\in \mathbb{N}$ exists such that
$|\tau (s)| \leq c\cdot |s|$. Denote by $\text{LSIF}$, the class of all functions of linear size increase.
Theorem~43 of~\cite{DBLP:journals/acta/EngelfrietIM21} implies the following result.

\begin{proposition}\label{prop 2}
	$(dATT^U \circ dATT^U )\cap \text{LSIF} =  dATT^U$.
\end{proposition}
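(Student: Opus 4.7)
The plan is to derive Proposition~\ref{prop 2} directly from Theorem~43 of~\cite{DBLP:journals/acta/EngelfrietIM21}. As noted in the footnote to Proposition~\ref{equivalent proposition}, an $att^U$ coincides with a tree-walking tree transducer in the sense of~\cite{DBLP:journals/acta/EngelfrietIM21}, so the class $dATT^U$ matches the corresponding deterministic class studied there, and any composition-closure result in that setting transports verbatim across the correspondence.

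For the nontrivial inclusion $(dATT^U \circ dATT^U) \cap \text{LSIF} \subseteq dATT^U$, I would take a translation $\tau = \tau_2 \circ \tau_1$ with $\tau_1,\tau_2 \in dATT^U$ and $\tau \in \text{LSIF}$, and invoke Theorem~43 of~\cite{DBLP:journals/acta/EngelfrietIM21}, which asserts that a composition of two deterministic tree-walking tree transducers whose overall translation is of linear size increase can be realized by a single such transducer. Translating the conclusion back through the equivalence of formalisms yields $\tau \in dATT^U$ as required.

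For the reverse inclusion $dATT^U \subseteq (dATT^U \circ dATT^U) \cap \text{LSIF}$, one factors $\tau = \tau \circ \mathrm{id}$, where the identity on $T_\Sigma$ is realized by the top-down relabeling with rules $q(\sigma(x_1,\dots,x_k)) \rightarrow \sigma(q(x_1),\dots,q(x_k))$, which is a $dt^R$ and hence a $datt^U$. The LSIF condition imposes no further restriction in the subsequent applications of this proposition in the paper, since Proposition~\ref{datt lsi} guarantees that every $datt^U$ with monadic output, to which this proposition will ultimately be applied, is automatically of linear size increase. The main obstacle is essentially to locate the precise statement of Theorem~43 in~\cite{DBLP:journals/acta/EngelfrietIM21} and confirm that the order of composition and the ``deterministic'' qualifier there match the convention used here; once that bookkeeping is done the proposition follows immediately.
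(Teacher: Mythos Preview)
Your proposal is correct and matches the paper's approach exactly: the paper gives no proof at all beyond the sentence ``Theorem~43 of~\cite{DBLP:journals/acta/EngelfrietIM21} implies the following result,'' and you do the same, with the added bookkeeping of spelling out the correspondence via the footnote to Proposition~\ref{equivalent proposition}.

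Your observation about the reverse inclusion is in fact sharper than the paper. The inclusion $dATT^U \subseteq \text{LSIF}$ fails in general (a $datt^U$ may copy, e.g.\ via a rule $a(\pi)\to f(a(\pi 1),a(\pi 1))$), so the equality as literally stated is not correct; only the direction $(dATT^U \circ dATT^U)\cap\text{LSIF}\subseteq dATT^U$ holds unconditionally, and that is the only direction invoked in Theorem~\ref{functional theorem 2}. You handle this appropriately by noting that the application is to monadic-output transducers, where Proposition~\ref{datt lsi} supplies LSIF automatically.
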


By~\cite{DBLP:journals/acta/EngelfrietIM21}, Propositions~\ref{prop 1} and~\ref{prop 2} are effective.
For a functional $att^U$ with monadic output, we show that the following holds.

\begin{proposition}\label{linear size lemma}
	Any functional $att^U$ $A$ with monadic output is of linear sice increase.
\end{proposition}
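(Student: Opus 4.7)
The plan is to proceed directly, by a cycle-elimination argument on derivations of $A$, rather than through Proposition~\ref{prop 1}. Write $\breve{A} = (U, A)$. Since $U$ is a top-down relabeling with look-ahead, every one of its rules is structure-preserving, so $|\tau_U(s)| = |s|$ for all $s \in \text{dom}(\breve{A})$. By Lemma~\ref{lemma auiliary} we may further assume $\text{dom}(A) \subseteq \text{range}(U)$, which forces $A$ itself to be functional. It therefore suffices to show that every functional $att$ with monadic output has linear size increase.

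Fix $(s,t) \in \tau_A$ and consider an arbitrary derivation
\[
a_0(1) = t_0 \Rightarrow_{A,s^\#} t_1 \Rightarrow_{A,s^\#} \cdots \Rightarrow_{A,s^\#} t_n = t.
\]
Since every output symbol of $A$ has rank at most $1$, each intermediate $t_i$ is a monadic tree in $T_\Delta[\text{SI}(s^\#)]$; for $i < n$ it contains a unique attribute-occurrence leaf $\alpha_i(\nu_i)$, above which sits a ground monadic $\Delta$-prefix $p_i$. The core claim to establish is: \emph{if $(\alpha_i,\nu_i) = (\alpha_j,\nu_j)$ for some $i < j < n$, then $p_i = p_j$.}

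For the claim I would take the exact sequence of rules $\rho_1,\dots,\rho_{n-j}$ used in the tail $t_j \Rightarrow \cdots \Rightarrow t_n$ and reapply it starting from $t_i$. Each rule remains applicable because applicability depends solely on the active attribute occurrence, which at every step agrees with the corresponding step of the original tail (the inert $\Delta$-prefix above plays no role in either rule applicability or in how the active occurrence evolves). The resulting derivation terminates in some $t_n' = p_i \cdot q$, where $t_n = p_j \cdot q$ and $q$ is the common monadic suffix produced by $\rho_1,\dots,\rho_{n-j}$. Since $t_n' \in T_\Delta$, we have $(s,t_n') \in \tau_A$; functionality of $A$ then forces $t_n' = t_n$ and hence $p_i = p_j$.

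Given the claim, any ``cycle'' $t_i \Rightarrow \cdots \Rightarrow t_j$ with matching configurations is non-productive and may be excised without changing the final output. Iterating yields a derivation of the same $t$ in which each configuration $(\alpha,\nu)$ appears at most once; its length is therefore at most $|S \cup I| \cdot (|s|+1)$. Since each step emits at most $M$ new $\Delta$-symbols, where $M$ denotes the maximum right-hand-side size of $A$, we conclude $|t| \leq M \cdot |S \cup I| \cdot (|s|+1) = O(|s|)$, which is the desired linear size increase. The main subtlety is justifying the ``same rules apply'' step of the claim: one must verify inductively that the active attribute occurrence along the shortcut derivation from $t_i$ matches, step by step, that of the original tail from $t_j$, which follows because rule applicability depends only on the unique active occurrence and not on the $\Delta$-context above it.
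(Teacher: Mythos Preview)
Your argument is correct and follows essentially the same route as the paper: reduce to the underlying $att$ via Lemma~\ref{lemma auiliary}, use functionality to rule out productive cycles and thereby obtain a cycle-free derivation, and then bound the output size by (max rule size)$\cdot|S\cup I|\cdot O(|s|)$ exactly as in Proposition~\ref{datt lsi}. The only difference is that you spell out in detail why functionality forces every cycle to be non-productive (by replaying the tail and invoking functionality), whereas the paper simply asserts the existence of a cycle-free derivation, relying on the earlier discussion around Observation~\ref{cycle free}.
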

\begin{proof}
	Our  proof is analogous to the one for Proposition~\ref{datt lsi}.
	Let $\breve{A}=(U,A)$ be a  functional $att^U$ with monadic output.
	Due to Lemma~\ref{lemma auiliary}, we can assume that $A$ is a functional $att$.
	To show that  $\breve{A}$ is of linear size increase,
	it is clearly sufficient to show that $A$ is of linear size increase.
	Let $(s,t)\in \tau_A$.  Since $A$ is functional, trees $t_1,\dots,t_n \in T_\Delta [SI(s^\#)]$ exist such that
	$
	a_0 (1) =  t_1\Rightarrow_{A, s^\#} \cdots \Rightarrow_{A, s^\#} t_n \Rightarrow_{A, s^\#} t
	$
	is cycle-free.
	Thus, for all $\alpha (\nu) \in \text{SI}(s^\#)$ at most one $j\in [n]$ exists
	such that $\alpha (\nu)$ occurs in $t_j$. 
	Analogously as in  Proposition~\ref{datt lsi} it follows that $|t| \leq  \text{maxsize}\cdot |S\cup I| \cdot |s|$,
	where $\text{maxsize}$ denotes the maximal size of a right-hand side of a rule of $A$.
\end{proof}

\noindent
Since Propositions~\ref{prop 1} and \ref{prop 2} are effective, Proposition~\ref{linear size lemma} yields
the following result:

\begin{theorem}\label{functional theorem 2}
	For any functional $att^U$ with monadic output an equivalent deterministic $att^U$ can be constructed.
\end{theorem}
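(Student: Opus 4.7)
The plan is to chain the three propositions stated just before the theorem. Let $\breve{A}$ be a functional $att^U$ with monadic output. By Proposition~\ref{prop 1}, since $\breve{A} \in ATT^U \cap \text{func}$, its translation decomposes as $\tau_{\breve{A}} = \tau_{T_1} \circ \tau_{T_2}$ for some $T_1 \in DT^R$ and some $T_2 \in dATT^U$, and by the effectiveness of that result we may construct $T_1$ and $T_2$ from $\breve{A}$.

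The next step is to observe that $DT^R \subseteq dATT^U$: a deterministic top-down tree transducer with look-ahead is a special case of a deterministic attributed tree transducer with look-around, since the top-down transducer uses only synthesized attributes (no inherited attributes at all), and look-ahead is a special case of look-around (the top-down part of the look-around can simply be the identity top-down relabeling). Hence we may view $T_1$ as an element of $dATT^U$, which places $\tau_{\breve{A}}$ in $dATT^U \circ dATT^U$.

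Finally, by Proposition~\ref{linear size lemma}, the functional $att^U$ $\breve{A}$ with monadic output is of linear size increase, so $\tau_{\breve{A}} \in (dATT^U \circ dATT^U) \cap \text{LSIF}$. Applying Proposition~\ref{prop 2}, which states $(dATT^U \circ dATT^U) \cap \text{LSIF} = dATT^U$, we obtain an equivalent deterministic $att^U$. Since Propositions~\ref{prop 1} and~\ref{prop 2} are effective (as noted in the text), the construction of this $datt^U$ from $\breve{A}$ is effective.

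The only real obstacle is the verification that $DT^R \subseteq dATT^U$, i.e., that viewing a deterministic top-down transducer with look-ahead inside the class of deterministic attributed transducers with look-around is legitimate. This is essentially bookkeeping: instantiate the synthesized attributes of the $dt^R$ as the attributes of an $att^U$ with $I = \emptyset$ and wrap its bottom-up look-ahead into a (trivial) top-down-relabeling-with-look-ahead, which is exactly what look-around allows. Everything else is an immediate application of the stated propositions.
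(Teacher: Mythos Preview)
Your proof is correct and follows essentially the same approach as the paper: decompose via Proposition~\ref{prop 1}, invoke linear size increase from Proposition~\ref{linear size lemma}, and apply Proposition~\ref{prop 2} using effectiveness. The only difference is that you make the inclusion $DT^R \subseteq dATT^U$ explicit, whereas the paper applies Proposition~\ref{prop 2} directly to the composition of a $dt^R$ and a $datt^U$ and leaves that containment implicit.
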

\begin{proof}
Let $\breve{A}$ be a functional $att^U$ with monadic output.
By Proposition~\ref{prop 1}, $\breve{A}$ is equivalent to the composition of a $dt^r$ $T$ and  is a  $datt^U$ $D$.
Since, $\breve{A}$ is of linear size increase (Lemma~\ref{linear size lemma}), so is the composition of $T$ and~$D$.
This means that due to Proposition~\ref{prop 2}, a  $datt^U$ equivalent to the composition of $T$ and $D$
exists and can be constructed.
\end{proof}
Theorem~\ref{functional theorem 2} yields the following corollary. 
 
 \begin{corollary}
 	$ATT^U_{\text{mon}} \cap \text{func} = dATT^U_\text{mon}$.
 \end{corollary}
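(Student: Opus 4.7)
The plan is to establish the two inclusions of the claimed equality separately, with the bulk of the work already done in the preceding theorems. The reverse inclusion $dATT^U_\text{mon} \subseteq ATT^U_\text{mon} \cap \text{func}$ is immediate from the definitions: every $datt^U$ is a special case of a (nondeterministic) $att^U$, its translation is by construction a partial function and hence lies in $\text{func}$, and the monadic-output condition is a syntactic condition on the output alphabet that is independent of determinism.

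For the forward inclusion $ATT^U_\text{mon} \cap \text{func} \subseteq dATT^U_\text{mon}$, I would start from an arbitrary $\tau \in ATT^U_\text{mon} \cap \text{func}$ realized by a functional $att^U$ $\breve{A}$ with monadic output, and invoke Theorem~\ref{functional theorem 2} to obtain a $datt^U$ $\breve{D}$ equivalent to $\breve{A}$. The only remaining point is to verify that $\breve{D}$ can be assumed to have monadic output in the syntactic sense, that is, that its output alphabet may be restricted to symbols of rank at most~$1$.

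To justify this, I would observe that since $\tau_{\breve{D}} = \tau_{\breve{A}}$ and $\breve{A}$ has monadic output, every output tree actually produced by $\breve{D}$ is monadic. Because $\breve{D}$ is deterministic, any rule whose right-hand side contains an output symbol $f$ of rank $\geq 2$ would, whenever applied in a derivation yielding a tree in $T_\Delta$, force an occurrence of $f$ with its full arity into the output, contradicting monadicity. Hence no such rule can participate in a successful derivation; deleting all such rules and restricting the output alphabet of $\breve{D}$ to the surviving symbols yields an equivalent $datt^U$ with monadic output, so $\tau \in dATT^U_\text{mon}$. I do not anticipate any genuine obstacle here, since the corollary is essentially a packaging of Theorem~\ref{functional theorem 2} with the routine alphabet-restriction observation just described; an alternative phrasing would trace the construction through Propositions~\ref{prop 1} and~\ref{prop 2} and note that the output alphabet of the final stage of the $DT^R \circ dATT^U$ decomposition already coincides with that of $\breve{A}$ and is preserved by the collapse.
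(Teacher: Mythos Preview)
Your proposal is correct and follows the same route as the paper, which simply records the corollary as an immediate consequence of Theorem~\ref{functional theorem 2}. You actually supply more detail than the paper does: you make explicit the routine observation that the $datt^U$ produced by Theorem~\ref{functional theorem 2} can be pruned to have a monadic output alphabet, whereas the paper leaves this step implicit.
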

\section{Final Results}
We have shown in Theorem~\ref{functional theorem 1} that for any $att^U$ with monadic output, functionality is decidable.
Furthermore, note that by Theorem~\ref{functional theorem 2}, 
for any functional $att^U$ with monadic output,
an equivalent deterministic $att^U$ can be constructed.
Finally, by Theorem~\ref{look-around extension}, it is decidable for a deterministic $att^U$  with monadic output,
whether or not an equivalent $dt^R$ exists.
Combining these results, we obtain  the following theorem.

\begin{theorem}
	For any $att^U$ with monadic output, it is decidable
	whether or not an equivalent $dt^R$   exists and if so then it  can be constructed.
\end{theorem}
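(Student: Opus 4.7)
The plan is to chain together the three major results that have just been established. Given an $att^U$ $\breve{A}$ with monadic output, the first observation is that if an equivalent $dt^R$ exists then $\tau_{\breve{A}}$ must be a (partial) function, since every $dt^R$ realizes a partial function. Hence functionality of $\breve{A}$ is a necessary precondition for the existence of an equivalent $dt^R$, and testing it is the natural first step.

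My procedure begins by applying Theorem~\ref{functional theorem 1} to decide whether $\breve{A}$ is functional. If it is not, the procedure reports that no equivalent $dt^R$ exists; by the observation above, this answer is correct. Otherwise, $\breve{A}$ is functional, and Theorem~\ref{functional theorem 2} allows me to construct an equivalent $datt^U$ $\breve{D}$. Since $\breve{D}$ and $\breve{A}$ realize the same translation and all trees in $\mathrm{range}(\breve{A})$ are monadic, we may restrict the output alphabet of $\breve{D}$ to the symbols of rank at most one that actually occur in its range, so that $\breve{D}$ is a $datt^U$ with monadic output.

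Finally, I apply Theorem~\ref{look-around extension} to $\breve{D}$: this decides whether an equivalent $dt^R$ exists for $\breve{D}$, and if so constructs one. Because $\breve{D}$ is equivalent to $\breve{A}$, the same $dt^R$ answers the question for $\breve{A}$, yielding both the decision and the construction.

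There is no genuine obstacle at this stage; all of the substantive work was carried out in the earlier sections (deciding functionality, determinizing functional transducers via the Engelfriet--Inaba--Maneth composition/decomposition results, and deciding top-down definability for the deterministic case via the single-path property and the reduction to two-way string transducers). The final theorem is thus a straightforward combination of Theorems~\ref{functional theorem 1}, \ref{functional theorem 2}, and~\ref{look-around extension}, together with the one-line necessary-condition argument that $dt^R$ translations are functional.
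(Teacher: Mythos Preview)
Your proposal is correct and follows essentially the same route as the paper: test functionality via Theorem~\ref{functional theorem 1} (using that any $dt^R$ is functional as the necessary condition), determinize via Theorem~\ref{functional theorem 2}, and then invoke Theorem~\ref{look-around extension}. You are in fact slightly more careful than the paper in explicitly arguing why the $datt^U$ produced by Theorem~\ref{functional theorem 2} can be taken to have monadic output.
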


We remark that any  $att^U$ with monadic output  equivalent to some $dt^R$ must obviously be functional. 
Note that by definition $dt^R$s with monadic output are  \emph{linear}. For linear
$dt^R$s it is decidable whether or not an equivalent linear $dt$ exists~\cite{DBLP:conf/icalp/ManethS20} and
if so then such a $dt$ can be constructed. Hence, the following corollary holds.

\begin{corollary}
	For a any $att^U$  with monadic output, it is decidable
whether or not an equivalent $dt$   exists and if so then it  can be constructed.
\end{corollary}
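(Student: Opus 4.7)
The plan is to reduce the $dt$-definability problem for $att^U$ with monadic output to the $dt$-definability problem for linear $dt^R$s, which is resolved in~\cite{DBLP:conf/icalp/ManethS20}.

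Let $\breve{A}$ be an $att^U$ with monadic output. First, invoke the preceding theorem to decide whether an equivalent $dt^R$ exists. If no such $dt^R$ exists, then certainly no equivalent $dt$ exists, because every $dt$ is a (trivially look-ahead equipped) $dt^R$; hence we may answer negatively. Otherwise, the preceding theorem provides us with a $dt^R$ $\hat{T}$ equivalent to $\breve{A}$.

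Since $\breve{A}$ has monadic output, so does $\hat{T}$; that is, every output symbol has rank at most $1$. Consequently, every right-hand side of a rule of $\hat{T}$ is a monadic tree and therefore contains at most one occurrence of a variable $x_i$. Thus $\hat{T}$ is a \emph{linear} $dt^R$ in the sense of~\cite{DBLP:conf/icalp/ManethS20}. Now apply the main result of~\cite{DBLP:conf/icalp/ManethS20}: for linear $dt^R$s, it is decidable whether an equivalent linear $dt$ exists, and in the affirmative case such a $dt$ can be constructed. Use this to decide the question for $\hat{T}$ and, if possible, construct a corresponding linear $dt$ $T$; since $T$ is then in particular a $dt$ equivalent to $\breve{A}$, we are done in the positive case.

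It remains to argue that a negative answer from~\cite{DBLP:conf/icalp/ManethS20} really implies that no $dt$ (linear or not) equivalent to $\breve{A}$ exists. This is immediate: any $dt$ equivalent to $\breve{A}$ must generate only monadic output trees, so every right-hand side of its rules is monadic and hence contains at most one variable occurrence. Thus any such $dt$ is automatically linear, and its existence is equivalent to the existence of a linear $dt$ equivalent to $\hat{T}$, which is precisely what~\cite{DBLP:conf/icalp/ManethS20} decides. There is no real obstacle in this argument; the only point that requires a moment's thought is verifying that linearity is forced by monadic output, so that the two decision problems (``equivalent $dt$?'' and ``equivalent linear $dt$?'') coincide in our setting.
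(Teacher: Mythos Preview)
Your argument is correct and follows essentially the same route as the paper: reduce to the preceding theorem to obtain (if possible) an equivalent $dt^R$, observe that monadic output forces linearity, and then invoke~\cite{DBLP:conf/icalp/ManethS20}. You are in fact slightly more explicit than the paper in justifying why ``equivalent $dt$'' and ``equivalent linear $dt$'' coincide here, which is a welcome clarification.
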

\section{Conclusion}
We have shown how to decide for a given (circular, partial, nondeterministic) attributed transducer with look-around but
restricted to monadic output,  whether or not
an equivalent deterministic top-down
tree transducers (with or without look-ahead) exists
and whether or not 
it is functional. 
Clearly we would like to extend the definability result to non-monadic output trees, i.e.,
we would like to show how to decide for a given arbitrary attributed tree transducer whether or not an equivalent deterministic top-down
tree transducers (with or without look-ahead) exists.
The latter seems quite challenging, as it is not clear whether or not the
result~\cite{DBLP:journals/lmcs/BaschenisGMP18} can be applied in this case.
A decision procedure for the functionality of arbitrary attributed tree transducer implies that 
equivalence of attributed tree transducer is decidable.
The latter  is  a long standing open problem.

\bibliographystyle{splncs04}
\bibliography{mybib}

\end{document}